\documentclass[11pt, a4paper]{article}

\usepackage[english]{babel}
\usepackage[autostyle]{csquotes}
\usepackage[T1]{fontenc}
\usepackage{microtype}
\usepackage[shortlabels]{enumitem}

\usepackage{lmodern}
\usepackage{geometry}

\usepackage{customoperators}
\usepackage{bibliographysettings}

\usepackage[pdftex, bookmarksnumbered,colorlinks=false]{hyperref}
\usepackage{bookmark}

\theoremstyle{plain}
\newtheorem{thm}{Theorem}[]

\newtheorem{lem}[thm]{Lemma}
\newtheorem{cor}[thm]{Corollary}
\newtheorem{assu}{Assumption}[]

\theoremstyle{definition}

\newtheorem*{rem}{Remark}
\newtheorem*{rems}{Remarks}

\makeatletter
\def\dicho#1{\expandafter\@dicho\csname c@#1\endcsname}
\def\@dicho#1{\ifnum#1>1 or \else\fi(\@roman#1)}
\makeatother
\AddEnumerateCounter{\dicho}{\@dicho}{or (iii)}
\newlist{dichotomy}{enumerate}{1}
\setlist[dichotomy]{label=\dicho*,ref=(\roman*),leftmargin=1.5cm}

\ifpdf
\hypersetup{
  pdfauthor={Lukas Reichmann},
  pdftitle={Dynamics of two-dimensional Bose gases in the focusing case},
}
\fi

\title{Dynamics of two-dimensional Bose gases in the focusing case}
\author{Lukas Reichmann\footnote{Department of Mathematics, Friedrich-Alexander-Universit{\"a}t Erlangen-N{\"u}rnberg, Cauerstrasse 11, 91058 Erlangen, Germany}}
\date{\today}

\begin{document}
\maketitle


\pdfbookmark[section]{Abstract}{abstract} 
\begin{abstract}
We prove the propagation of Bose--Einstein condensation and a norm approximation for the dynamics of a Bose gas with localized initial data in two dimensions with singular interactions that may have a small attractive part. We allow the scaling of the interaction to be any polynomial scaling $N^\beta$ for $0<\beta<\infty$ and even exponential scaling $\ee^{N^\kappa}$ for $0<\kappa<1/4$. Our approach covers external potentials that are time-dependent, at most quadratic, and bounded below, including the situation without an external potential. 
\end{abstract}

\section{Introduction}
Bose--Einstein condensates were already predicted by Bose and Einstein \cite{Bose-24,Ein-24,Ein-25} in 1924, while the first experimental realizations of Bose--Einstein condensates were achieved in 1995 \cite{Cor-95,Ket-95}. To describe the time evolution of such condensates, the Hartree and nonlinear Schr{\"o}dinger (NLS) equations are used, which were first rigorously derived in \cite{Hepp-74,GinVel-79,Spo-80}. In the repulsive case (i.e., the interaction potential is non-negative, $w\geq0$), the first rigorous results justifying the Hartree and NLS equations in the dilute regime (i.e., the interaction potential depends on the number of particles $N$, $w_N=N^{d\beta}w(N^\beta\cdot)$, for $\beta>0$ and dimension $d$) were obtained in \cite{AdaGolTet-07} for dimension $1$, in \cite{ErdSchYau-07,ErdSchYau-09,ErdSchYau-10} for dimension $3$, and in \cite{KirSchSta-11} for dimension $2$ (see \cite{JebLeoPic-19} for a later result, which additionally covers the Gross--Pitaevskii regime). 

The attractive case ($w\leq0$) was first dealt with in \cite{CheHol-16} for dimension $1$ and in \cite{CheHol-17,JebPic-18} for dimension $2$. Both approaches \cite{CheHol-17,JebPic-18} rely on stability of second kind provided by an external potential and prove the convergence of the reduced density matrix associated with the dynamics. In \cite{NamNap-19,BosDieNam-26} these results were extended to Hamiltonians without an external potential by proving norm convergence of the dynamics, which implies the convergence of the reduced density matrices. The result in \cite{NamNap-19} covers all $0<\beta<\infty$ in dimension $1$ and $0<\beta<1$ in dimension $2$ by using a weaker form of stability as a substitute for the stability of second kind. The approach in \cite{BosDieNam-26} covers $0<\beta<3/2$ in dimension $2$ and even includes the instability case, where the solution of the NLS blows up after a finite time. For results on the derivation of effectively 2D dynamics of strongly confined 3D systems, we refer to \cite{CheHol-13,Bos-20}. 

In the present work, we want to prove a norm approximation in dimension $2$ following the approach in \cite{NamNap-19,BosDieNam-26}. Results on the ground state of trapped systems (see e.g.\ \cite{BreSchSch-22,GreSei-13,LewNamSerSol-15,NamSei-15,NamTri-23}) suggest it is reasonable to assume that the initial data satisfy Bose--Einstein condensation (BEC) and the kinetic energy of the particles orthogonal to the condensate is bounded. These results are all in the mean-field or defocusing case. The closest result in the 2D focusing case is \cite{JunVis-26}, where it is shown that the leading order of the ground state energy of a trapped system with singular and possibly attractive interactions is described by the ground state energy of the corresponding NLS functional and ground states satisfy Bose--Einstein condensation. 

If the negative part of the interaction is small enough, we have stability of second kind (see \cite[Chapter~3]{LieSei-09} for a definition) of the trapped Hamiltonian \cite{LewNamRou-16,LewNamRou-17,NamRou-20,JunVis-26}, i.e., the Hamiltonian $H_N^\textnormal{trap}$ satisfies the lower bound $H_N^\textnormal{trap}\geq -CN$. Hence, the ground state energy of the trapped system is not too negative. 
This has been used in \cite{CheHol-17,JebPic-18} to bound the kinetic energy of the dynamics, which in turn is used to bound the negative interaction potential. 

Bogoliubov's approximation \cite{Bog-47} suggests that the time evolution of the particles orthogonal to the condensate is approximately described by an effective quadratic (in creation and annihilation operators) Hamiltonian in Fock space. In dimension $3$, this has been rigorously justified in \cite{LewNamSch-15, MitPetPic-19,BosPetPicSof-21} in the mean-field regime $\beta=0$, in \cite{NamNap-17a,NamNap-17b,BreNamNapSch-19} in the dilute regime $\beta\in(0,1)$, and in \cite{CarOldSch-24} in the Gross--Pitaevskii regime $\beta=1$. 
For norm approximations around coherent states, see e.g.\ \cite{GriMacMar-10,GriMacMar-11} in the mean-field regime and \cite{BocCenSch-17,GriMac-13,GriMac-17,Kuz-17} in the dilute regime. 
All these results are either in the mean-field regime or require the interaction potential to be non-negative; for results in three dimensions in the direction of the focusing case, see \cite{Cho-21,NamNap-17a,Pic-10}. 

Norm convergence in two dimensions in the focusing case has previously been achieved in \cite{NamNap-19} for $0<\beta<1$ under the stability assumption
\begin{equation}\label{eq:stabilityassumption}
	\int_{\R^2}w_-<a^\ast,
\end{equation}
where $w_-\coloneqq\max(-w,0)$ is the negative part of the interaction and $a^\ast$ is the optimal constant in the Gagliardo--Nirenberg inequality, and in \cite{BosDieNam-26} for $0<\beta<3/2$ without any assumptions on stability. 
We follow the strategy used in \cite{NamNap-19,BosDieNam-26} to prove norm convergence under the stability assumption \eqref{eq:stabilityassumption} for all $0<\beta<\infty$ and even for exponential scaling $w_N=\ee^{2N^\kappa}w(\ee^{N^\kappa}\cdot)$ for $0<\kappa<1/4$, which corresponds to $N$-dependent $\beta$ of the form $\beta=N^\kappa/\log(N)$. The key ingredient to extend the range of $\beta$ is a new bound on the kinetic energy of the dynamics for localized initial data, which we prove using \cite[Theorem~2]{JunVis-26}. 
Our result extends results from \cite{CheHol-17,JebPic-18} on convergence of density matrices to the NLS solution and from \cite{NamNap-19,BosDieNam-26} on norm convergence of the dynamics under the stability assumption \eqref{eq:stabilityassumption}. This answers a question raised in \cite[Remark~2]{JunVis-26}. 

\subsection{Setting}

We consider $N$ identical (spinless) bosons in $\R^2$. Such a system is described by a wave function in the bosonic Hilbert space $L^2_\textnormal{s}(\R^{2N})$, the space of square-integrable permutation-symmetric functions of $N$ variables $x_j\in\R^2$. The evolution of this system is governed by the linear $N$-body Schr{\"o}dinger equation
\begin{equation}\label{eq:linearschr}
\begin{cases}
	\ii\partial_t\Psi_N(t)=H_{N,t}\Psi_N(t), \\
	\Psi_N(0)=\Psi_{N,0},
\end{cases}
\end{equation}
where the (time-dependent) Hamiltonian $H_{N,t}$ is given by
\begin{equation}\label{eq:Hamiltonian}
	H_{N,t}=\sum_{j=1}^N\bigl(-\Delta_j+V(t,x_j)\bigr)+\frac{1}{N-1}\sum_{1\leq j<k\leq N}w_N(x_j-x_k).
\end{equation}
The interaction potential $w_N$ should satisfy the following assumptions:
\begin{assu}\label{assu:wN}
Either 
\begin{equation}\label{eq:poly}
	w_N(x)=N^{2\beta}w(N^\beta x)
\end{equation}
for any $0<\beta<\infty$ or 
\begin{equation}\label{eq:exp}
	w_N(x)=\ee^{2N^\kappa}w(\ee^{N^\kappa}x)
\end{equation}
for $0<\kappa<1/4$. 
The fixed function $w\colon\R^2\to\R$ satisfies
\begin{equation}\label{assu:w}
	w\in L^1(\R^2)\cap L^2(\R^2),\qquad \int_{\R^2}w_-<a^\ast,\qquad w(x)=w(-x),\qquad w^2\leq C(1-\Delta)
\end{equation}
for some constant $C>0$, where $w_-\coloneqq\max(-w,0)$ and $a^\ast$ is the optimal constant in the Gagliardo--Nirenberg inequality
\begin{equation}\label{eq:GN}
	\biggl(\int_{\R^2}\abs{\nabla f}^2\biggr)\biggl(\int_{\R^2}\abs{f}^2\biggr)\geq\frac{a^\ast}{2}\int_{\R^2}\abs{f}^4,\qquad \forall f\in H^1(\R^2).
\end{equation}
\end{assu}

For references on the Gagliardo--Nirenberg inequality, see e.g.\ \cite{GuoSei-14,Wei-83,Mae-10,Zha-00}. 
We make the following assumptions on the external potential $V$:
\begin{assu}\label{assu:V}
\begin{enumerate}
\item\label{assu:Va}
$V\in L^\infty_\textnormal{loc}(\R\times\R^2)$ is real-valued. 
\item\label{assu:Vb}
$V(t,\cdot)\in C^\infty(\R^2)$ for all $t\in\R$.
\item\label{assu:Vc}
$V$ is at most quadratic in space, uniformly with respect to time, in the sense that
\[
	\forall\bm{\alpha}\in\N_0^2,\;\abs{\bm{\alpha}}\geq 2:\quad\partial^{\bm{\alpha}}_xV\coloneqq \partial_{x_1}^{\alpha_1}\partial_{x_2}^{\alpha_2}V\in L^\infty(\R\times\R^2).
\]
Additionally, $V(\cdot,x),\nabla V(\cdot,x)\in L^\infty(\R)$ for all $x\in\R^2$. These assumptions imply $\abs{V(t,x)}\leq C(1+\abs{x}^2)$ and $\abs{\nabla V(t,x)}\leq C\sqrt{1+\abs{x}^2}$ for all $(t,x)\in\R\times\R^2$ and some constant $C>0$. 
\item\label{assu:Vd}
There exists some $0\leq v\in L^\infty_\textnormal{loc}(\R^2)$ such that $C_t^{-1}v(x)-c_t\leq V(t,x)\leq C_tv(x)+c_t$ for some (continuously) time-dependent constants $C_t,c_t>0$.
\item\label{assu:Ve}
$V(\cdot,x)\in C^1(\R)$ for all $x\in\R^2$ and $\abs{\partial_t V(t,x)}\leq C_t(v(x)+1)$, where $v$ is defined in \ref{assu:Vd} and $C_t>0$ is some (continuously) time-dependent constant.
\end{enumerate}
\end{assu}

\begin{rems}
\begin{enumerate}
\item
For example, a (time-dependent) harmonic potential of the form
\[
	V(t,x) = \langle x,Q(t)x\rangle
\]
satisfies Assumption~\ref{assu:V}, where $Q(t)\in\R^{2\times 2}$ is a positive-definite matrix, $C\geq Q(t)\geq C_t^{-1}>0$, with entries that are continuously differentiable in $t$, or $Q\equiv 0$, which corresponds to $V\equiv0$. Such time-dependent harmonic potentials are typically used to model trapped Bose--Einstein condensates; see e.g.\ \cite{CasDum-96,GRPGV-01,RKRadPan-08}.
\item
The condition $w^2\leq C(1-\Delta)$ and the assumptions \ref{assu:Vc} and \ref{assu:Vd} on $V$ guarantee that the Hamiltonian $H_{N,t}$ can be defined as a self-adjoint operator on the same domain for every $t$. 
Using assumption \ref{assu:Ve} on $V$, this allows the evolution $\Psi_N$ to be properly defined as the global solution to the linear Schr{\"o}dinger equation \eqref{eq:linearschr} by applying \cite[Theorem~X.70]{ReeSim-2}. 
\item
We can think of the initial state $\Psi_{N,0}$ as the ground state of a trapped Hamiltonian
\[
	H_N^W=H_{N,0}+\sum_{j=1}^NW(x_j),
\]
where $W(x)+V(0,x)\geq C\abs{x}^s - c$ for some constants $s,C,c>0$, which ensures that the Hamiltonian $H_N^W$ is stable of second kind; see \cite{JunVis-26}. After the trap $W$ is turned off, the evolution of $\Psi_{N,0}$ is described by $\Psi_N(t)$ in \eqref{eq:linearschr}. Our proof requires the Hamiltonian with a suitable auxiliary external potential to be stable of second kind, which is why we need the assumption 
\[
	\int_{\R^2}w_- < a^\ast
\]
on the interaction potential. This stability will later be used to bound the kinetic energy of $\Psi_N$. 
\end{enumerate}
\end{rems}

\subsection{Notation}

$C$ always denotes a positive constant that does not depend on time and may vary from line to line. We will denote by $C_t$ a positive constant that is continuous with respect to time and may also vary from line to line. We will sometimes omit the integration variables if there is no ambiguity. We will use the notation $\jbr{x}\coloneqq (1+\abs{x}^2)^{1/2}$ and $V_t\coloneqq V(t,\cdot)$. When it is unambiguous, we will abbreviate the $L^2$-norm as well as the norm in Fock space and their corresponding inner products by $\norm{\cdot}$ and $\langle\cdot,\cdot\rangle$, respectively. The one-particle reduced density matrix of a wave function $\Psi\in L^2_\textnormal{s}(\R^{2N})$ will be denoted by
\[
	\gamma_{\Psi}^{(1)} = \Tr_{2,\dotsc,N}\ketbra{\Psi}{\Psi}. 
\]
We will also use the notation
\begin{equation*} 
	\Sigma^k\coloneqq\biggl\{f\in L^2(\R^2):\norm{f}_{\Sigma^k}\coloneqq\sum_{\abs{\bm{\alpha}}+\abs{\bm{\beta}}\leq k}\norm{x^{\bm{\alpha}}\partial^{\bm{\beta}}_x f}_{L^2}<\infty\biggr\}
\end{equation*}
for $k\in\N$ and $\bm{\alpha},\bm{\beta}\in\N_0^2$, where $x^{\bm{\alpha}}\coloneqq x_1^{\alpha_1}x_2^{\alpha_2}$ and $\partial^{\bm{\beta}}_x\coloneqq\partial_{x_1}^{\beta_1}\partial_{x_2}^{\beta_2}$. 

\section{Main results and proof strategy}

\subsection{Main results}

We assume that the initial state $\Psi_{N,0}$ in \eqref{eq:linearschr} exhibits BEC, that is, $\Psi_{N,0}\approx \varphi_0^{\otimes N}$ in an appropriate sense. It is expected that the Schr{\"o}dinger flow preserves this property. To describe the evolution of the condensate $\varphi_0$, we formally substitute the ansatz $\Psi_N(t)\approx u_N(t)^{\otimes N}$ in \eqref{eq:linearschr}, which yields the Hartree equation
\begin{equation}\label{eq:Hartree}
\begin{cases}
	\ii\partial_tu_N(t,x)=\bigl(-\Delta_x+V(t,x)+(w_N\ast\abs{u_N(t,\cdot)}^2)(x)-\mu_N(t)\bigr)u_N(t,x), \\
	u_N(0,x)=\varphi_0(x),
\end{cases}
\end{equation}
where 
\[
	\mu_N(t)=\frac{1}{2}\int_{\R^2}\int_{\R^2}\abs{u_N(t,x)}^2w_N(x-y)\abs{u_N(t,y)}^2\diff x\diff y.
\]
Let us define
\begin{equation}\label{eq:Hartree:h}
	h(t)=-\Delta_x+V(t,x)+(w_N\ast\abs{u_N(t,\cdot)}^2)(x)-\mu_N(t).
\end{equation}
In the limit $N\to\infty$, the Hartree evolution $u_N$ can be approximated by the solution $\varphi$ of the nonlinear Schr{\"o}dinger equation
\begin{equation}\label{eq:NLS}
\begin{cases}
	\ii\partial_t\varphi(t,x)=\bigl(-\Delta_x+V(t,x)+a\abs{\varphi(t,x)}^2-\mu(t)\bigr)\varphi(t,x), \\
	\varphi(0,x)=\varphi_0(x),
\end{cases}
\end{equation}
where 
\[
	a=\int_{\R^2}w,\qquad \mu(t)=\frac{a}{2}\int_{\R^2}\abs{\varphi(t,x)}^4\diff x.
\]
Our main results, Theorem~\ref{thm:norm} and Corollary~\ref{cor:condensate}, show that the time evolution of the condensate is described by equations~\eqref{eq:Hartree} and \eqref{eq:NLS}. 

In Theorem~\ref{thm:norm}, we also prove a description of the excitations orthogonal to the condensate using Bogoliubov theory. For this purpose, we use the Fock space formalism. We denote the bosonic Fock space by
\[
	\mathcal{F}=\mathcal{F}(\mathfrak{H})=\bigoplus_{k=0}^\infty\mathfrak{H}^k, 
	\qquad \mathfrak{H}^k=\bigotimes_\textnormal{sym}^k\mathfrak{H},
	\qquad \mathfrak{H}=L^2(\R^2). 
\]
Similarly, we define the excited Fock space 
\[
	\mathcal{F}_\perp(t)=\mathcal{F}(\mathfrak{H}_\perp)=\bigoplus_{k=0}^\infty\mathfrak{H}_\perp^k
\]
and the subspace 
\[
	\mathcal{F}_\perp^{\leq N}(t)=\bigoplus_{k=0}^N\mathfrak{H}_\perp^k,
\]
where the one-body excited space $\mathfrak{H}_\perp$ is given by
\[
	\mathfrak{H}_\perp=\{u_N(t)\}^\perp\subset \mathfrak{H}.
\]
Here, $u_N$ denotes the solution to the Hartree equation \eqref{eq:Hartree}. 
Note that there is a natural embedding $\mathcal{F}_\perp^{\leq N}(t)\subset\mathcal{F}_\perp(t)\subset\mathcal{F}$. 
For $g\in \mathfrak{H}$ we denote the standard creation and annihilation operators by $a^\ast(g)$ and $a(g)$, respectively. Additionally, we will use the operator-valued distributions $a^\ast_x$ and $a_x$ with $x\in\R^2$ such that
\[
	a^\ast(g)=\int_{\R^2} g(x)a^\ast_x\diff x,\qquad a(g)=\int_{\R^2} \overline{g(x)}a_x\diff x, \qquad \forall g\in\mathfrak{H}. 
\]
These operators satisfy the standard canonical commutation relations. The second quantization of a self-adjoint operator $A$ on $\mathfrak{H}$ will be denoted by 
\[
\sq(A)=0\oplus\bigoplus_{k\geq1}\sum_{j=1}^kA_j=\iint A(x,x')a^\ast_xa_{x'}\diff x\diff x',
\] 
where $A(x,x')$ is the kernel of $A$. 
In particular, the number operator is defined as $\mathcal{N}=\sq(\id)$.
We will denote the second quantization of a two-body operator $B$ on $\mathfrak{H}^2$ by
\[
	\sq_2(B)=0\oplus0\oplus\bigoplus_{k\geq2}\sum_{1\leq i<j\leq k}B_{ij}=\frac{1}{2}\iiiint B(x,y;x',y')a_x^\ast a_y^\ast a_{x'}a_{y'}\diff x\diff y\diff x'\diff y',
\]
where $B(x,y;x',y')$ denotes the kernel of $B$. 
Using this notation, we can extend the Hamiltonian $H_{N,t}$ to $\mathcal{F}$ by writing 
\begin{equation}\label{eq:Hsq}
	H_{N,t}=\sq(-\Delta+V_t)+\frac{1}{N-1}\sq_2(w_N),
\end{equation}
where we interpret $w_N$ as the two-body multiplication operator $w_N(x-y)$, which coincides with the definition \eqref{eq:Hamiltonian} on $\mathfrak{H}^N$. 
We can decompose the solution to the Schr{\"o}dinger equation \eqref{eq:linearschr} as
\[
	\Psi_N(t)=\sum_{k=0}^Nu_N(t)^{\otimes(N-k)}\otimes_\textnormal{s}\phi_N^{(k)}(t)=\sum_{k=0}^N\frac{(a^\ast(u_N(t)))^{N-k}}{\sqrt{(N-k)!}}\phi_N^{(k)}(t).
\]
Our goal is to approximate the vector
\begin{equation}\label{eq:phiN}
	\Phi_N(t)=\bigl(\phi_N^{(k)}(t)\bigr)_{k=0}^N\in\mathcal{F}^{\leq N}_\perp(t)
\end{equation}
by the solution of the Bogoliubov equation
\begin{equation}\label{eq:Bogoliubov}
	\begin{cases}
	\ii\partial_t\Phi(t)=\BH(t)\Phi(t), \\
	\Phi(0)=\Phi_0,
	\end{cases}
\end{equation}
where $\BH(t)$ is the Bogoliubov Hamiltonian
\[
	\BH(t)=\sq(h(t)+K_1(t))+\frac{1}{2}\biggl(\iint K_2(t,x,y)a^\ast_xa^\ast_y\diff x\diff y+\hc\biggr). 
\] 
Here, $h(t)$ is defined in \eqref{eq:Hartree:h} and we denote
\[
	K_1(t)=q(t)\tilde{K}_1(t)q(t),
	\qquad 
	K_2(t)=q(t)\otimes q(t)\tilde{K}_2(t),
\]
where
\[
	q(t)=\id-p(t)=\id-\ketbra{u_N(t)}{u_N(t)}
\]
and the kernel of the operator $\tilde{K}_1(t)$ and the function $\tilde{K}_2(t)\in\mathfrak{H}^2$ are given by 
\begin{align*}
	\tilde{K}_1(t,x,y)&=u_N(t,x)w_N(x-y)\overline{u_N(t,y)}, \\
	\tilde{K}_2(t,x,y)&=u_N(t,x)w_N(x-y)u_N(t,y). 
\end{align*}

We are now able to state our main theorem. 

\begin{thm}[Norm approximation]\label{thm:norm}
Let $w_N$ satisfy Assumption~\ref{assu:wN} and $V$ satisfy Assumption~\ref{assu:V}. 
Moreover, let $\Psi_N(t)\in L^2_\textnormal{s}(\R^{2N})$ be the solution to the Schr{\"o}dinger equation \eqref{eq:linearschr} with initial datum $\Psi_{N,0}\in L^2_\textnormal{s}(\R^{2N})$,
let $u_N$ be the solution to the Hartree equation \eqref{eq:Hartree} with initial datum $\varphi_0\in\Sigma^4$, $\norm{\varphi_0}=1$, and
let $\Phi(t)=(\phi^{(k)}(t))_{k=0}^\infty\in\mathcal{F}_\perp(t)$ be the solution to the Bogoliubov equation \eqref{eq:Bogoliubov} with initial datum $\Phi_0=(\phi^{(k)}_0)_{k=0}^\infty\in\mathcal{F}_\perp(0)$, $\norm{\Phi_0}=1$. Additionally, the initial data $\Psi_{N,0}$, $\varphi_0$, and $\Phi_0$ satisfy
\begin{equation}\label{eq:thm:norm:initial}
	\langle\Phi_0,\sq(1-\Delta+v)\Phi_0\rangle\leq C
\end{equation}
with $v$ from Assumption~\ref{assu:V} \ref{assu:Vd},
\[
	\Psi_{N,0}=\sum_{k=0}^N\varphi_0^{\otimes(N-k)}\otimes_\textnormal{s}\phi^{(k)}_0=\sum_{k=0}^N\frac{(a^\ast(\varphi_0))^{N-k}}{\sqrt{(N-k)!}}\phi^{(k)}_0,
\]
and 
\begin{equation}\label{eq:thm:norm:initial-ext}
	\langle \Psi_{N,0},\sum_{j=1}^N\abs{x_j}^s\Psi_{N,0}\rangle\leq CN
\end{equation}
for some $0<s\leq1$. 
\begin{enumerate}[(i)]
\item
If $w_N$ satisfies \eqref{eq:poly} with $\beta>0$, let $0<\alpha<1/3$. 
\item
If $w_N$ satisfies \eqref{eq:exp} with $0<\kappa<1/4$, let $0<\alpha<\min(1/2-2\kappa,1/3-7\kappa/6)$.
\end{enumerate}
Then for all $t\geq0$, there exists a constant $C_t>0$ independent of $N$ such that for large enough $N$, we have
\begin{equation}\label{eq:normapprox}
	\norm[\bigg]{\Psi_N(t) - \sum_{k=0}^Nu_N(t)^{\otimes(N-k)}\otimes_\textnormal{s}\phi^{(k)}(t)}^2 \leq C_t N^{-\alpha}. 
\end{equation}
\end{thm}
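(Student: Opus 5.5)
We follow the excitation-map strategy of \cite{NamNap-19,BosDieNam-26}. Let $U_N(t)\colon L^2_\textnormal{s}(\R^{2N})\to\mathcal{F}_\perp^{\leq N}(t)$ be the unitary map $\Psi\mapsto(\phi^{(k)})_{k=0}^N$ determined by the decomposition around $u_N(t)$, so that $\Phi_N(t)=U_N(t)\Psi_N(t)$ and the left-hand side of \eqref{eq:normapprox} equals $\norm{\Phi_N(t)-\mathbb{1}^{\leq N}\Phi(t)}^2$. The vector $\Phi_N$ solves $\ii\partial_t\Phi_N=\tilde H_N(t)\Phi_N$, where $\tilde H_N=\mathbb{1}^{\leq N}\bigl(\BH+\mathcal{R}\bigr)\mathbb{1}^{\leq N}$. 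The remainder $\mathcal{R}$ collects the terms produced by conjugation: the cubic terms $N^{-1/2}\iint q\otimes q\, w_N q\, a^\ast a^\ast a\,\sqrt{N-\mathcal{N}}$ and their adjoints, the quartic term $\frac{1}{N-1}\sq_2(qqw_Nqq)$, and the corrections of order $\mathcal{N}/N$ to the quadratic part. A Duhamel/energy computation gives
\[
\frac{\diff}{\diff t}\norm{\Phi_N-\mathbb{1}^{\leq N}\Phi}^2 = 2\,\mathrm{Im}\langle \Phi_N,\mathcal{R}\,\mathbb{1}^{\leq N}\Phi\rangle+(\text{cutoff terms}),
\]
so it suffices to bound $\abs{\langle\Phi_N,\mathcal{R}\Phi\rangle}$ by $C_tN^{-\alpha}$. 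In this quantity one factor is the true dynamics and the other is the Bogoliubov dynamics.

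The inputs are as follows. (a) The Hartree solution satisfies $\norm{u_N(t)}_{H^2}+\norm{u_N(t)}_{\Sigma^4}\le C_t$, uniformly in $N$. This comes from the stability assumption $\int w_-<a^\ast$ via Gagliardo--Nirenberg and from standard commutator estimates for potentials $V$ that are at most quadratic; it yields $\norm{u_N}_{L^\infty}\le C_t$. (b) The Bogoliubov dynamics satisfies $\langle\Phi(t),\sq(1-\Delta+v)\Phi(t)\rangle\le C_t$ and $\langle\Phi,\mathcal{N}^k\Phi\rangle\le C_t$, obtained by a Grönwall argument from \eqref{eq:thm:norm:initial}. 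Here the pairing term $K_2$ is bounded through $\norm{K_2}_{L^2}\le C\norm{u_N}_{L^\infty}\norm{w_N}_{L^1}\norm{u_N}$; it is important to use only the $L^1$ norm of $w_N$ and not its $L^2$ norm, which blows up. (c) This is the new key ingredient: a kinetic energy bound for the full dynamics, $\langle\Psi_N(t),\sum_j(-\Delta_j)\Psi_N(t)\rangle\le C_tN$, together with a bound on the excited kinetic energy $\langle\Phi_N,\sq(-\Delta)\Phi_N\rangle\le C_tN^{\delta}$ for small $\delta$.

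To prove (c), we add an auxiliary trap $W(x)=\epsilon\abs{x}^s$ and use \cite[Theorem~2]{JunVis-26}. Under $\int w_-<a^\ast$ this gives stability of the second kind in the form $H_{N,t}+\sum_jW(x_j)\ge \frac{1}{2}\sum_j(-\Delta_j)-CN$, where the confinement is what handles the singular negative part at all scales, including exponential scales. Energy conservation up to the $\partial_tV$ term and the initial bound \eqref{eq:thm:norm:initial-ext} control the first term. We then control the growth of $\langle\sum_j\abs{x_j}^s\rangle_{\Psi_N(t)}$ via $\frac{\diff}{\diff t}\langle\abs{x}^s\rangle=\langle \ii[-\Delta,\abs{x}^s]\rangle$, which is bounded by the square root of the kinetic energy because $s\le1$, and close with Grönwall. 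To pass from this to the excited kinetic energy, we use the uniform bounds on $u_N$ and a comparison of $\langle H_{N,t}\rangle$ with $N$ times the Hartree energy.

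The main obstacle is the estimate of the cubic term, $N^{-1/2}\abs{\langle\Phi_N, a^\ast(\cdot)a^\ast(\cdot)a(\cdot)\Phi\rangle}$ with kernel $w_N$. We would bound it by
\[
N^{-1/2}\,\norm{w_N}_{L^1}^{1/2}\,\norm{\sq(1-\Delta)^{1/2}\Phi_N}\,\norm{\mathcal{N}^{1/2}\Phi}\,\sup\abs{u_N}
\]
times a factor coming from the Sobolev inequality for the operator $\abs{w_N}$. In two dimensions, $\abs{w_N}\lesssim\norm{w}_{L^1}(1-\Delta)^{1-\epsilon}N^{2\beta\epsilon}$, where $\epsilon$ is small, and at exponential scale this estimate degenerates logarithmically. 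Balancing the cutoff $\mathcal{N}\le M$ (with $M=N^{1-2\alpha}$ and the tails controlled by moments of $\mathcal{N}$ under $\Phi$) against the kinetic bounds yields the rate $\alpha<1/3$. In the exponential case we track the losses of order $\log$ of the scale, i.e.\ powers of $N^{\kappa}$; these produce the constraint $\alpha<\min(1/2-2\kappa,1/3-7\kappa/6)$. The quartic term is nonnegative up to its $w_-$ part, which is absorbed by the kinetic bound with coefficient $<1$ thanks to $\int w_-<a^\ast$. Finally, we integrate the Grönwall inequality on $[0,t]$.
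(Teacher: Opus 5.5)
The proposal has genuine gaps: the direct comparison of $\Phi_N$ with $\Phi$ does not close, and several of the inputs you list are false.

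\textbf{The direct comparison of $\Phi_N$ with $\Phi$.} Your bound $\langle\Psi_N(t),\sum_j(-\Delta_j)\Psi_N(t)\rangle\leq C_tN$ is essentially the paper's Lemma~\ref{lem:apkinetic}. With only $\int w_-<a^\ast$ you get a small $\varepsilon$ rather than $\frac12$ in front of the kinetic energy, which is harmless. The gap is in what you build on it. A direct Duhamel comparison of $\Phi_N$ with $\Phi$ needs $\mathcal{E}_N$ to be small when tested against $\Phi_N$. For this you invoke $\langle\Phi_N,\sq(-\Delta)\Phi_N\rangle\leq C_tN^{\delta}$, which is not available.

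It would already give $\langle\Phi_N,\numop\Phi_N\rangle\leq C_tN^\delta$, a quantitative condensation rate not known a priori. Comparing $\langle H_{N,t}\rangle$ with $N$ times the Hartree energy does not produce it. Since $u_N(t)$ is not a critical point, $U_N(t)H_{N,t}U_N(t)^\ast$ contains linear terms of size $\sqrt{N}\langle\numop\rangle^{1/2}$, so the comparison only returns $O(N)$.

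With $O(N)$ the direct comparison fails. In $\langle\Phi_N,R_3^\ast\ind^{\leq M}\Phi\rangle=\langle R_3\Phi_N,\ind^{\leq M}\Phi\rangle$ both annihilators act on $\Phi_N$. So the bound on $\abs{w_N}$ must be used on the $\Phi_N$ side, giving $N^{-1/2}(\beta\log N)^{1/2}(MN)^{1/2}$, which is not small. Moreover, the tail $\ind^{>M}\Phi$ has norm $O(M^{-1/2})$ but no kinetic control, since \eqref{eq:thm:norm:initial} gives no higher moments of $\numop$.

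The missing idea is the intermediate dynamics $\Phi_{N,M}$ generated by $\BH+\chi_M\mathcal{E}_N\chi_M$. It has kinetic energy $\leq C_t\beta\log(N)$ and $\langle\numop\rangle\leq C_t$ (Lemmas~\ref{lem:truncatedkinetic} and \ref{lem:truncatednumber}). It is compared with $\Phi_N$ through $\langle\Phi_N,f_M^2\Phi_{N,M}\rangle$. Since $f_M^2\mathcal{G}_{N,M}=f_M^2\mathcal{G}_N$, only $[\mathcal{G}_N,f_M^2]$ enters, and it carries a factor $(\beta\log N)^{1/2}/M$ (Lemma~\ref{lem:average}). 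Thus the $O(N)$ bound on $\Phi_N$ costs only $\sqrt{N}$. The exponents $1/3$ and $\min(1/2-2\kappa,1/3-7\kappa/6)$ come from optimizing $N^{-1/2+\delta_1}+N^{-1+\delta_2}+N^{-\delta_2/2}$ (resp.\ its $\kappa$-analogue) over two cutoffs $M_j=N^{1-\delta_j}$. In your proposal they are asserted, not derived.

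\textbf{Input (b) is wrong.} $\norm{K_2}_{L^2}^2\approx\iint\abs{u_N(x)}^2\abs{w_N(x-y)}^2\abs{u_N(y)}^2\diff x\diff y$ is of order $\norm{w_N}_{L^2}^2=N^{2\beta}\norm{w}_{L^2}^2$. The $L^1$ norm of $w_N$ controls only $\norm{K_2}_{\mathrm{op}}$ and, with a $\beta\log(N)$ loss, $\norm{(1-\Delta_x)^{-1/2}K_2}_{L^2}^2$ (Lemma~\ref{lem:K2bounds}). So a Gr{\"o}nwall argument for $\langle\numop\rangle$ through $\norm{K_2}_{L^2}$ yields only $\ee^{CN^\beta t}$.

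The kinetic energy of $\Phi(t)$ is not $O(1)$ either. For the admissible datum $\Phi_0=$ vacuum, the pairs created carry, to leading order, kinetic energy $\approx\iint\abs{\widehat{K_2}(p,q)}^2(\abs{p}^2+\abs{q}^2)^{-1}\diff p\diff q\sim\beta\log(N)$. At exponential scale this is $N^\kappa$.

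What is true is Lemma~\ref{lem:bogdyn}: kinetic energy $\leq C_t\beta\log(N)$ and $\langle\numop\rangle\leq C_t$ uniformly in $N$. The number bound needs a real idea: rewrite the pair-creation term as $[\BH,\K]$ plus controlled terms, with $\K=\iint k\,a^\ast_xa^\ast_y$ and $k=(2L-\Delta_x-\Delta_y)^{-1}K_2$. Then $\norm{k}_{L^2}\leq C_tL^{-1/2}$, and $2\Re\langle\Phi,\K\Phi\rangle$ is absorbed into the Gr{\"o}nwall functional.

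Finally, the Sobolev loss $N^{2\beta\epsilon}$ with fixed $\epsilon$ equals $\ee^{2\epsilon N^\kappa}$ at exponential scale. The $N^\kappa$ loss the exponential case needs is the Moser--Trudinger bound $\abs{w_N(x-y)}\leq C\beta\log(N)(1-\Delta_x)$ of Lemma~\ref{lem:MT}.
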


\begin{rems}
\begin{enumerate}
\item
In the case without an external potential ($V\equiv 0$), the assumption $\varphi_0\in\Sigma^4$ can be relaxed to $\varphi_0\in H^4(\R^2)$ and $\abs{x}^{s/2}\varphi_0\in L^2(\R^2)$, where $s$ is as in \eqref{eq:thm:norm:initial-ext}, by using the bounds from \cite[Lemma~10]{BosDieNam-26} for the solution of the Hartree equation. 
\item
For instance, the assumptions on the initial datum $\Psi_{N,0}$ are satisfied if
\[
	\Psi_{N,0} = \varphi_0^{\otimes N},
\]
where $\varphi_0\in\Sigma^4$, $\norm{\varphi_0}=1$. 
\item
The initial $N$-body state $\Psi_{N,0}$ is not normalized, but $\norm{\Psi_{N,0}}$ converges to $1$ very fast. In fact,
\begin{multline*}
	\norm{\Psi_{N,0}}^2 = \norm{\ind(\numop\leq N)\Phi_0}^2 = 1-\langle\Phi_0,\ind(\numop> N)\Phi_0\rangle \\
	\geq 1-N^{-1}\langle\Phi_0,\numop\Phi_0\rangle \geq 1-CN^{-1}
\end{multline*}
by \eqref{eq:thm:norm:initial}. We omit a normalization in the theorem, which would not affect the rate of convergence in \eqref{eq:normapprox}. 
\item 
It is sufficient that \eqref{eq:thm:norm:initial-ext} is satisfied for some $s>0$ since in this case $\norm{\Psi_{N,0}}\leq1$ implies it is also satisfied for some different $0<\tilde{s}\leq\min(1,s)$. The reason for requiring $0<s\leq1$ is a technical step in the proof of Lemma~\ref{lem:apkinetic}. 
\item
This theorem extends results achieved in \cite[Theorem~1]{NamNap-19} for $\beta\in(0,1)$ and in \cite[Theorem~5]{BosDieNam-26} for $\beta\in(0,3/2)$, but compared with \cite{BosDieNam-26} it requires stability in the sense that 
\[
	\int_{\R^2}w_- < a^\ast. 
\] 
The rate of convergence in \eqref{eq:normapprox} is improved compared to \cite[Theorem~1]{NamNap-19} and \cite[Theorem~5]{BosDieNam-26}, where $0<\alpha<(1-\beta)/3$ and $0<\alpha<\min(1/4,(3-2\beta)/8)$, respectively. 
\item
Let us assume $\widehat{w}$ is Lipschitz continuous. Then, it is possible to show 
\begin{equation*}
	\norm[\bigg]{\Psi_N(t) - \sum_{k=0}^N\varphi(t)^{\otimes(N-k)}\otimes_\textnormal{s}\phi_\textnormal{NLS}^{(k)}(t)}^2 \leq C_t N^{-\alpha'}
\end{equation*}
for $\alpha'=\min(\alpha,\beta-1)$ if $w_N$ satisfies \eqref{eq:poly} and $\alpha'=\alpha$ if $w_N$ satisfies \eqref{eq:exp}. Here, $\Phi_\textnormal{NLS}(t)=(\phi_\textnormal{NLS}^{(k)}(t))_{k=0}^\infty\in\mathcal{F}(\{\varphi(t)\}^\perp)$ is the solution to the modified Bogoliubov equation
\begin{equation*}
	\begin{cases}
	\ii\partial_t\Phi_\textnormal{NLS}(t)=\BH_\textnormal{NLS}(t)\Phi_\textnormal{NLS}(t), \\
	\Phi_\textnormal{NLS}(0)=\Phi_0,
	\end{cases}
\end{equation*}
with 
\begin{align*}
	\BH_\textnormal{NLS}(t)&=\sq(h_\textnormal{NLS}(t)+K_{\textnormal{NLS},1}(t))+\frac{1}{2}\biggl(\iint K_{\textnormal{NLS},2}(t,x,y)a^\ast_xa^\ast_y\diff x\diff y+\hc\biggr), \\
	h_\textnormal{NLS}(t)&=-\Delta_x+V(t,x)+a\abs{\varphi(t,x)}^2-\mu(t),
\end{align*}
and $K_{\textnormal{NLS},1}$ and $K_{\textnormal{NLS},2}$ are defined like $K_1$ and $K_2$ with $\varphi$ instead of $u_N$. This can be achieved by defining the unitary map $U_N(t)$ in Section~\ref{sec:app} via $\varphi$ instead of $u_N$. Proceeding with this map produces additional error terms such as 
\[
	(N-\numop)\langle\varphi(t),(w_N\ast\abs{\varphi(t)}^2-a\abs{\varphi(t)}^2)\varphi(t)\rangle.
\]
With the bound
\begin{align*}
	\norm{(w_N\ast\abs{\varphi(t)}^2-a\abs{\varphi(t)}^2)\varphi(t)}_{L^2} &\leq \norm{\varphi(t)}_{L^\infty} \norm[\big]{\widehat{w_N}\widehat{\abs{\varphi(t)}^2}-\widehat{w}(0)\widehat{\abs{\varphi(t)}^2}}_{L^2} \\
	&\leq \norm{\varphi(t)}_{L^\infty} \norm[\bigg]{\frac{\widehat{w}(N^{-\beta}\cdot)-\widehat{w}(0)}{\abs{\cdot}}}_{L^\infty} \norm[\big]{\abs{\cdot} \widehat{\abs{\varphi(t)}^2}}_{L^2} \\
	&\leq C\norm{\varphi(t)}_{L^\infty}^2\norm{\nabla\varphi(t)}_{L^2}N^{-\beta},
\end{align*}
where the last step follows from the assumption that $\widehat{w}$ is Lipschitz, it can be proved that these terms lead to errors of order $\mathcal{O}(N^{1-\beta})$ if $w_N$ satisfies \eqref{eq:poly} and similarly, $\mathcal{O}(N\ee^{-N^\kappa})$ if $w_N$ satisfies \eqref{eq:exp}. To simplify the presentation, we do not include this extension. 
\end{enumerate}
\end{rems}

As explained in \cite{LewNamSch-15}, the norm convergence in Theorem~\ref{thm:norm} implies the convergence in trace class of the reduced density matrix of $\Psi_N$, namely, we have the following result:

\begin{cor}[NLS evolution of the condensate]\label{cor:condensate}
Under the same assumptions as in Theorem~\ref{thm:norm}, we have the convergence in trace class
\[
	\lim_{N\to\infty} \gamma^{(1)}_{\Psi_N(t)} = \ketbra{\varphi(t)}{\varphi(t)}, \qquad \forall t\geq0,
\]
where $\varphi$ is the solution to the nonlinear Schr{\"o}dinger equation \eqref{eq:NLS} with initial datum $\varphi_0$. 
\end{cor}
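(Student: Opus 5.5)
The corollary follows from Theorem~\ref{thm:norm} in two steps. First, norm convergence gives convergence of $\gamma^{(1)}_{\Psi_N(t)}$ to $\ketbra{u_N(t)}{u_N(t)}$, as in \cite{LewNamSch-15}. Second, the Hartree solution $u_N(t)$ is compared with the NLS solution $\varphi(t)$ in $L^2$.

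\textbf{Step 1.} Write $\widetilde\Psi_N(t)=\sum_{k=0}^N u_N(t)^{\otimes(N-k)}\otimes_\textnormal{s}\phi^{(k)}(t)$. By \eqref{eq:normapprox}, $\norm{\Psi_N(t)-\widetilde\Psi_N(t)}\leq C_tN^{-\alpha/2}$. The norms are controlled: $\norm{\Psi_N(t)}=\norm{\Psi_{N,0}}\to1$ by the remark after the theorem, hence also $\norm{\widetilde\Psi_N(t)}\to1$. Using the trace-norm bound
\[
\norm{\ketbra{\Psi}{\Psi}-\ketbra{\widetilde\Psi}{\widetilde\Psi}}_{\mathfrak S_1}\leq(\norm\Psi+\norm{\widetilde\Psi})\norm{\Psi-\widetilde\Psi}
\]
and the contractivity of the partial trace, it suffices to analyse $\gamma^{(1)}_{\widetilde\Psi_N}$, normalized by $N$ as usual.

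Since $\widetilde\Psi_N=U_N(t)^\ast\ind^{\leq N}\Phi(t)$ with the excitation map $U_N(t)$, we have
\[
a^\ast(f)a(g)\mapsto \langle g,u_N\rangle\langle u_N,f\rangle(N-\numop)+\sqrt{N-\numop}\,a(\cdot)+\dotsb .
\]
Therefore
\[
\bigl|\langle f,\gamma^{(1)}_{\widetilde\Psi_N}g\rangle/N-\langle f,u_N\rangle\langle u_N,g\rangle\bigr|\leq C\bigl(N^{-1}\langle\Phi,\numop\Phi\rangle+N^{-1/2}\langle\Phi,\numop\Phi\rangle^{1/2}\bigr).
\]
It remains to know that $\langle\Phi(t),\numop\Phi(t)\rangle\leq C_t$. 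This follows from a Gronwall argument for the Bogoliubov evolution: by \eqref{eq:thm:norm:initial} it holds at $t=0$, and the pairing terms satisfy $\norm{K_2(t)}_{L^2}\leq C_t$, since $\norm{\tilde K_2}_{L^2}\leq\norm{w_N}_{L^1}\norm{u_N}_{L^\infty}\norm{u_N}_{L^2}$ and $\norm{w_N}_{L^1}=\norm{w}_{L^1}$. I would take this bound from the analysis underlying Theorem~\ref{thm:norm}. The result is $\norm{\gamma^{(1)}_{\Psi_N(t)}-\ketbra{u_N(t)}{u_N(t)}}_{\mathfrak S_1}\to0$. For rank-one limits, weak-$\ast$ convergence upgrades to trace-norm convergence because the traces converge.

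\textbf{Step 2.} Show $\norm{u_N(t)-\varphi(t)}_{L^2}\to0$ for each fixed $t$. I expect this to be the main obstacle. I would differentiate $\norm{u_N-\varphi}^2$ and use $\Sigma^4$-propagated bounds for the Hartree and NLS solutions, namely $\norm{u_N(t)}_{H^2}+\norm{\varphi(t)}_{H^2}\leq C_t$ uniformly in $N$; these are presumably established in the paper's Hartree estimates. Differentiating, only the nonlinear terms survive, since the linear part is self-adjoint and the phase terms $\mu_N,\mu$ contribute at most $|\mu_N-\mu|$. The nonlinear difference is
\[
(w_N\ast|u_N|^2)u_N-a|\varphi|^2\varphi=\bigl(w_N\ast|u_N|^2-a|u_N|^2\bigr)u_N+a\bigl(|u_N|^2u_N-|\varphi|^2\varphi\bigr).
\]
The second term is Lipschitz in $L^2$, with constant depending on the $L^\infty$ bounds, which follow from the $H^2$ bounds in 2D. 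For the first term I would avoid Lipschitz continuity of $\widehat w$ and estimate
\[
\norm{w_N\ast g-ag}_{L^2}\leq\int|w(y)|\,\norm{g(\cdot-N^{-\beta}y)-g}_{L^2}\diff y
\]
for $g=|u_N|^2$. This tends to $0$ by dominated convergence, uniformly in $N$, since $\norm{\nabla g}_{L^2}\leq C_t$ gives the modulus $\min(2\norm g,\ N^{-\beta}|y|\norm{\nabla g})$ and $w\in L^1$. The same argument covers $|\mu_N-\mu|$ and the exponential scaling \eqref{eq:exp}. Gronwall then gives $u_N(t)\to\varphi(t)$ in $L^2$, so $\ketbra{u_N}{u_N}\to\ketbra{\varphi}{\varphi}$ in trace norm.

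Combining the two steps proves the corollary.
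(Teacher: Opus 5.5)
Your proof follows the same two-step route as the paper. Theorem~\ref{thm:norm} together with the excitation map gives $\mathrm{Tr}\bigl|\gamma^{(1)}_{\Psi_N(t)}-|u_N(t)\rangle\langle u_N(t)|\bigr|\to0$; the paper simply cites Lewin--Nam--Schlein for this, and your computation with $U_N(t)$ is the standard argument behind it. Then $\|u_N(t)-\varphi(t)\|_{L^2}\to0$.

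Your Step~2 is a correct elaboration of what the paper only indicates. The paper's hint splits off $w_N\ast|\varphi(t)|^2-a|\varphi(t)|^2$ for the fixed, $N$-independent function $\varphi(t)$, so plain approximate-identity convergence suffices. You instead put the convergence on $|u_N|^2$, and therefore need (and correctly use) the $N$-uniform $H^1\cap L^\infty$ bounds of Lemma~\ref{lem:Hregularity} to get a uniform modulus of continuity.

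A small point in Step~1: the target $|u_N(t)\rangle\langle u_N(t)|$ moves with $N$, so the fixed-limit ``weak-$\ast$ plus traces'' lemma does not literally apply. However, your matrix-element estimate is uniform in unit vectors $f,g$, i.e.\ it is an operator-norm bound. Since $\gamma-p\geq-p$ has at most one negative eigenvalue, $\mathrm{Tr}|\gamma-p|\leq|\mathrm{Tr}(\gamma-p)|+2\|\gamma-p\|_{\mathrm{op}}$ closes the argument.

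The justification you sketch for $\langle\Phi(t),\mathcal{N}\Phi(t)\rangle\leq C_t$ is wrong. The inequality $\|\tilde K_2\|_{L^2}\leq\|w_N\|_{L^1}\|u_N\|_{L^\infty}\|u_N\|_{L^2}$ does not hold. Indeed, $\|\tilde K_2(t)\|_{L^2}^2=\iint|u_N(t,y+z)|^2|w_N(z)|^2|u_N(t,y)|^2\,\mathrm{d}y\,\mathrm{d}z$ is of order $\|w_N\|_{L^2}^2=N^{2\beta}\|w\|_{L^2}^2$ (resp.\ $\mathrm{e}^{2N^\kappa}\|w\|_{L^2}^2$), not $O(1)$. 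This growth is genuine: $\tilde K_2$ approaches the distribution $a\,u(x)\delta(x-y)u(y)$, which is not Hilbert--Schmidt. The same holds for $K_2$ by \eqref{eq:K2tildeK2}. A Gr\"onwall argument driven by $\|K_2(t)\|_{L^2}$ therefore only gives a bound exponential in $N^{\beta}$.

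This is exactly why the paper proves only the regularized bounds of Lemma~\ref{lem:K2bounds}. It also explains why the proof of Lemma~\ref{lem:bogdyn}~\ref{item:number:bog} must replace the pairing term by a commutator with $k=(2L-\Delta_x-\Delta_y)^{-1}K_2$.

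Since the bound is proved there, you can repair Step~1 by citing \eqref{eq:number:bog}. In fact, your error term $C(N^{-1}\langle\mathcal{N}\rangle+N^{-1/2}\langle\mathcal{N}\rangle^{1/2})$ only needs $\langle\Phi(t),\mathcal{N}\Phi(t)\rangle=o(N)$. That already follows from Lemma~\ref{lem:bogdyn}~\ref{item:kinetic:bog} via $\mathcal{N}\leq\mathrm{d}\Gamma(1-\Delta+v)$, which gives $C_t\beta\log N$ resp.\ $C_tN^{\kappa}$.
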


This extends results obtained in \cite{CheHol-17,JebPic-18,NamNap-19,BosDieNam-26}. 

\subsection{Roadmap of the proof}
To prove Theorem~\ref{thm:norm}, we estimate $\norm{\Phi_N(t)-\Phi(t)}$, where $\Phi_N(t)$ is defined in \eqref{eq:phiN} and $\Phi(t)$ is the Bogoliubov dynamics defined in \eqref{eq:Bogoliubov}. We start by establishing uniform-in-$N$ estimates for the Hartree dynamics in Section~\ref{sec:NLS}. 

In Section~\ref{sec:Bog}, we prove an upper bound on the kinetic energy of the Bogoliubov dynamics, which improves the corresponding estimate in \cite[Lemma~5]{NamNap-19}. Namely, instead of a bound of the form
\[
	\langle\Phi(t),\sq(1-\Delta)\Phi(t)\rangle \leq C_{t,\varepsilon}N^{\varepsilon\beta}
\]
for all $\varepsilon>0$, which is proved in \cite{NamNap-19}, we provide an improved estimate of the form
\[
	\langle\Phi(t),\sq(1-\Delta)\Phi(t)\rangle \leq C_{t}\beta\log(N). 
\]
This improvement is necessary to handle the case where $w_N$ satisfies \eqref{eq:exp}, which corresponds to $\beta=N^\kappa/\log(N)$. We also prove a bound on the particle number of the Bogoliubov dynamics
\[
	\langle\Phi(t),\numop\Phi(t)\rangle \leq C_{t}.
\]
Since we do not require the initial data $\Phi_0$ to be quasi-free, this is an improvement over the corresponding result in \cite[Proposition~7]{NamNap-17a}. 

The second ingredient to be able to cover the case where $w_N$ satisfies \eqref{eq:exp} is the improved operator estimate on the interaction $w_N(x-y)$
\begin{equation}\label{eq:MT:rmotp}
	\abs{w_N(x-y)}\leq C\beta\log(N)(1-\Delta_x),
\end{equation}
which we will prove in Lemma~\ref{lem:MT} in Section~\ref{sec:app}. This bound will be used to obtain better error estimates for the Bogoliubov approximation. To obtain this bound, we use the Moser--Trudinger inequality. This inequality is an improvement over the standard Sobolev inequality, which only allows one to bound
\[
	\abs{w_N(x-y)}\leq C_s\norm{w_N}_{L^s}(1-\Delta_x)=C_sN^{2\beta(1-1/s)}\norm{w}_{L^s}(1-\Delta_x), \qquad s>1.
\]
In addition, we will prove the new bound for the kinetic energy of the full dynamics
\[
	\langle\Phi_N(t),\sq(1-\Delta)\Phi_N(t)\rangle\leq C_t N
\]
in Lemma~\ref{lem:apkinetic} in Section~\ref{sec:app}. We assume the initial datum $\Psi_{N,0}$ of the dynamics to be localized, 
\begin{equation*}
	\langle \Psi_{N,0},\sum_{j=1}^N\abs{x_j}^s\Psi_{N,0}\rangle\leq CN
\end{equation*}
for $0<s\leq1$. Hence, we can use the fact that in some sense $\Psi_N(t)$ cannot spread too quickly, i.e., we can control the quantity
\begin{equation*}
	\langle \Psi_{N}(t),\sum_{j=1}^N\abs{x_j}^s\Psi_{N}(t)\rangle. 
\end{equation*}
This allows us to use the stability of second kind of the modified Hamiltonian 
\[
	H_{N,t} + \sum_{j=1}^N\abs{x_j}^s,
\]
which follows from \cite[Theorem~2]{JunVis-26}, to bound the kinetic energy of $\Psi_N(t)$ and therefore also the kinetic energy of $\Phi_N(t)$. 
The new bound on the kinetic energy of $\Phi_N(t)$ allows us to cover any $0<\beta<\infty$ since the restriction $\beta<1$ in \cite{NamNap-19} came from a weaker bound on the kinetic energy of $\Phi_N(t)$ of order $N+N^{2\beta}$.
The bounds on the kinetic energy of $\Phi_N(t)$ and $\Phi(t)$ together with the estimate \eqref{eq:MT:rmotp} are used to bound terms containing the interaction $w_N(x-y)$.

In Section~\ref{sec:ftb}, instead of comparing $\Phi_N(t)$ and $\Phi(t)$ directly, we first compare $\Phi_N(t)$ with a truncated dynamics $\Phi_{N,M_1}(t)$, whose generator is partially truncated to the sector of at most $M_1$ particles, $M_1\ll N$, which we will compare with a second truncated dynamics $\Phi_{N,M_2}(t)$, $M_2\ll M_1$. Then, we will compare $\Phi_{N,M_2}(t)$ with the Bogoliubov dynamics $\Phi(t)$.
We choose a slightly different truncation than in \cite[Section~6]{NamNap-19}. Instead of truncating the entire generator $\mathcal{G}_N(t)$ of $\Phi_N(t)$ to the sector of at most $M$ particles, we only truncate the error term $\mathcal{G}_N(t)-\BH(t)$ using a smooth cut-off function $\chi_M$. This enables us to use the proof of the bound on the particle number of the Bogoliubov dynamics up to bounding error terms, which come from $\mathcal{G}_N(t)-\BH(t)$, to prove a bound of the form
\[
	\langle\Phi_{N,M}(t),\numop\Phi_{N,M}(t)\rangle \leq C_{t}
\]
if $M$ is small enough compared to $N$. The smoothness of $\chi_M$ is used to bound some of the error terms in this proof. 
Introducing two truncated dynamics instead of one as in \cite{NamNap-19} or \cite{BosDieNam-26} yields a better rate of convergence and a slightly larger range of $\kappa$ in Theorem~\ref{thm:norm}.  

Finally, we conclude the proof of our main results, Theorem~\ref{thm:norm} and Corollary~\ref{cor:condensate}, in Section~\ref{sec:con}. 

\section{Nonlinear Schr{\"o}dinger and Hartree equations}\label{sec:NLS}

The following result on the global existence of the solutions to the NLS \eqref{eq:NLS} and Hartree equations \eqref{eq:Hartree} is a simple consequence of \cite[Proposition~2.5 and Corollary~2.6]{Car-11} and the proof of \cite[Theorem~1.3]{CarSil-15}. 

\begin{lem}[Global existence and uniqueness of $\varphi$ and $u_N$]\label{lem:NLSregularity}
Let $w_N$ satisfy Assumption~\ref{assu:wN} and let $V$ satisfy Assumption~\ref{assu:V} \ref{assu:Va}, \ref{assu:Vb}, \ref{assu:Vc}. For $\varphi_0\in\Sigma^1$ with $\norm{\varphi_0}_{L^2}=1$, there exists a unique solution $\varphi$ to \eqref{eq:NLS} such that
\[
	\varphi,\nabla\varphi,x\varphi\in C\bigl([0,\infty);L^2(\R^2)\bigr)\cap L^4_\textnormal{loc}\bigl([0,\infty);L^4(\R^2)\bigr)
\]
and a unique solution $u_N$ to \eqref{eq:Hartree} such that
\[
	u_N,\nabla u_N,xu_N\in C\bigl([0,\infty);L^2(\R^2)\bigr)\cap L^4_\textnormal{loc}\bigl([0,\infty);L^4(\R^2)\bigr).
\]
Moreover, there exists $C>0$ such that 
\[
	\norm{\varphi(t)}_{\Sigma^1}\leq C\exp(Ct), \qquad \norm{u_N(t)}_{\Sigma^1}\leq C\exp(Ct). 
\]
\end{lem}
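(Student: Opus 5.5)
The plan is to read the lemma off Carles' theory for nonlinear Schrödinger equations with at most quadratic time-dependent potentials, plus the conservation argument of Carles--Silva, in four steps.

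\textbf{Step 1: the linear propagator.} For $V$ satisfying Assumption~\ref{assu:V} \ref{assu:Va}--\ref{assu:Vc}, the propagator $U(t,s)$ of $\ii\partial_t=-\Delta+V(t,x)$ exists by \cite[Proposition~2.5]{Car-11}. It enjoys local-in-time dispersive estimates $\norm{U(t,s)}_{L^1\to L^\infty}\leq C\abs{t-s}^{-1}$ for $\abs{t-s}\leq\delta$. Consequently the usual Strichartz estimates hold on time intervals of length $\delta$; the admissible pair $(4,4)$ is the one used in the statement. The space $\Sigma^1$ is preserved because the Heisenberg observables satisfy a closed linear system. Indeed, $\frac{\diff}{\diff t}(x_j)=2\ii^{-1}\partial_j$-type relations give
\[
	\partial_t x=2p,\qquad \partial_t p=-\nabla V,
\]
and $\nabla V$ grows at most linearly with bounded Hessian. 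Hence $\norm{U(t,s)f}_{\Sigma^1}\leq Ce^{C\abs{t-s}}\norm{f}_{\Sigma^1}$.

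\textbf{Step 2: local well-posedness.} The real phases $\mu_N(t)$ and $\mu(t)$ only multiply solutions by $e^{\ii\int\mu}$, so we may drop them and reinstate them afterwards. For the NLS with cubic nonlinearity $a\abs{\varphi}^2\varphi$, and for the Hartree nonlinearity $(w_N\ast\abs{u}^2)u$, we use Young's inequality with $w_N\in L^1$ ($N$ is fixed here). A fixed point in $C_tL^2\cap L^4_tL^4_x$ for $\varphi,\nabla\varphi,x\varphi$ gives a unique local solution, as in \cite[Corollary~2.6]{Car-11}. The operators $\nabla$ and $x$ commute with the nonlinearity up to terms controlled by the Leibniz rule, and they commute with the linear flow up to the closed system of Step 1. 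Mass is conserved, and the solution extends as long as the $\Sigma^1$ norm stays finite.

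\textbf{Step 3: global existence via a uniform gradient bound.} This is the main obstacle, since the equations are focusing. We follow the proof of \cite[Theorem~1.3]{CarSil-15}. The energy
\[
	E(t)=\norm{\nabla u}^2+\int V_t\abs{u}^2+\tfrac12\iint\abs{u(x)}^2w_N(x-y)\abs{u(y)}^2
\]
satisfies $\frac{\diff}{\diff t}E=\int\partial_tV\abs{u}^2$. That term is controlled by $\norm{\nabla V}$-type moment bounds: we differentiate $\norm{xu}^2$ and $\int V\abs{u}^2$ and use $\abs{\partial_t V}$, $\abs{\nabla V}\leq C\jbr{x}$ (which follow from \ref{assu:Vc} and the time-boundedness of $V(\cdot,x)$ and $\nabla V(\cdot,x)$). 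The interaction is handled by splitting $w_N=w_{N,+}-w_{N,-}$. We bound
\[
	\iint\abs{u(x)}^2w_{N,-}(x-y)\abs{u(y)}^2\leq\norm{w_-}_{L^1}\norm{u}_{L^4}^4\leq\frac{2\norm{w_-}_{L^1}}{a^\ast}\norm{\nabla u}^2,
\]
using \eqref{eq:GN} and $\norm{u}=1$; the same bound holds for the NLS, since $a\geq-\int w_-$. Because $\int w_-<a^\ast$, the energy is coercive:
\[
	E\geq(1-\norm{w_-}_1/a^\ast)\norm{\nabla u}^2-C(1+\norm{xu}^2),
\]
uniformly in $N$ since $\norm{w_{N,-}}_{L^1}=\norm{w_-}_{L^1}$.

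\textbf{Step 4: closing the Gronwall argument.} Combining the coercivity with $\frac{\diff}{\diff t}\norm{xu}^2=4\mathrm{Im}\langle xu,\nabla u\rangle\leq C\norm{u}_{\Sigma^1}^2$, we get a Gronwall inequality for $\norm{\nabla u}^2+\norm{xu}^2+C$. Here we replace $E$ by the shifted quantity $E+C'\norm{xu}^2+C''$, which is equivalent to $\norm{u}_{\Sigma^1}^2$. This yields $\norm{u_N(t)}_{\Sigma^1}\leq Ce^{Ct}$ with $C$ independent of $N$, and the same bound for $\varphi$. The blow-up alternative of Step 2 then gives global existence, and uniqueness comes from Step 2.
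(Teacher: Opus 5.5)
There is a genuine gap in Steps 3--4. Your Gr{\"o}nwall functional $E+C'\norm{xu}^2+C''$ contains $\int V_t\abs{u}^2$, so its time derivative contains $\int\partial_tV\abs{u}^2$. The lemma, however, assumes only Assumption~\ref{assu:V} \ref{assu:Va}--\ref{assu:Vc}: $V$ is $L^\infty_{\mathrm{loc}}$ in $(t,x)$ and smooth in $x$, and $V(\cdot,x)$ and $\nabla V(\cdot,x)$ are bounded in $t$. Nothing controls $\partial_tV$. It need not exist, and boundedness in time says nothing about a time derivative. So your claim that $\abs{\partial_tV}$ is controlled via \ref{assu:Vc} is false, and $E(t)$ need not even be differentiable. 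Adding \ref{assu:Ve} would not rescue the stated estimate either. It only gives $\abs{\partial_tV}\leq C_t\jbr{x}^2$ with a time-dependent $C_t$, hence a bound $\exp(\int_0^tC_s\diff s)$ rather than $C\exp(Ct)$.

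The missing idea is the Carles--Silva pseudo-energy, which is what the paper uses. Replace $\int V_t\abs{u}^2$ by $\frac12\int\abs{x}^2\abs{u}^2$, i.e.\ work with $\mathcal{E}_N=\norm{\nabla u_N}^2+\frac12\int(w_N\ast\abs{u_N}^2)\abs{u_N}^2+\frac12\norm{xu_N}^2$. Since $V$ does not appear in this functional, the equation and one integration by parts give $\frac{\diff}{\diff t}\mathcal{E}_N=2\Im\int\overline{u_N}\,(x-\nabla V_t)\cdot\nabla u_N$. This involves only $\nabla V$, and $\abs{\nabla V(t,x)}\leq C\jbr{x}$ uniformly in $t$ does follow from \ref{assu:Vc}. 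Your Young plus Gagliardo--Nirenberg estimate then gives $\mathcal{E}_N\geq c(\norm{\nabla u_N}^2+\norm{xu_N}^2)$ directly, with no shift needed. Hence $\frac{\diff}{\diff t}\mathcal{E}_N\leq C(1+\mathcal{E}_N)$ with $C$ independent of $t$ and $N$, and Gr{\"o}nwall yields the claimed $C\exp(Ct)$. Your remark about differentiating $\int V\abs{u}^2$ is one step away from this: the functional $E-\int V_t\abs{u}^2$ never produces a $\partial_tV$ term. The rest of your outline agrees with the paper: removing the phases $\mu$, mass conservation, the $N$-uniform coercivity bound, and the local theory with the continuation criterion from \cite{Car-11}.
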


\begin{proof}
For convenience, we will work with the equation
\begin{equation}\label{eq:NLS2}
\begin{cases}
	\ii\partial_t\varphi(t,x)=\bigl(-\Delta_x+V(t,x)+a\abs{\varphi(t,x)}^2\bigr)\varphi(t,x), \\
	\varphi(0,x)=\varphi_0(x).
\end{cases}
\end{equation}
The solutions of \eqref{eq:NLS} and \eqref{eq:NLS2} are related via the gauge transformation
\[
	\varphi\mapsto \ee^{-\ii\int_0^t\mu(t')\diff t'}\varphi,
\]
where $\mu(t)$ is defined as in \eqref{eq:NLS}. 
The local existence of $\varphi$ follows from \cite[Proposition~2.5]{Car-11}. 
We have mass conservation
\[
	\frac{\diff}{\diff t}\norm{\varphi(t)}_{L^2}^2=0
\]
and thus, $\norm{\varphi(t)}_{L^2}=\norm{\varphi(0)}_{L^2}=1$. 
As in the proof of \cite[Theorem~1.3]{CarSil-15}, we can define the pseudo-energy
\[
	\mathcal{E}(t)=\norm{\nabla\varphi(t)}^2_{L^2}+\frac{a}{2}\norm{\varphi(t)}^4_{L^4}+\frac{1}{2}\int_{\R^2}\abs{x}^2\abs{\varphi(t,x)}^2\diff x. 
\]
Using 
\[
	a\geq -\int_{\R^2}w_->-a^\ast,
\]
the Gagliardo--Nirenberg inequality \eqref{eq:GN}, and $\norm{\varphi(t)}_{L^2}=1$, we obtain the following lower bound on $\mathcal{E}$
\begin{equation}\label{eq:NLS:Elower}
\begin{aligned}
	\mathcal{E}(t)&\geq\norm{\nabla\varphi(t)}^2_{L^2}\biggl(1-\frac{1}{a^\ast}\int_{\R^2}w_-\biggr)+\frac{1}{2}\int_{\R^2}\abs{x}^2\abs{\varphi(t,x)}^2\diff x \\
	&\geq C(\norm{\nabla\varphi(t)}^2_{L^2}+\norm{x\varphi(t)}_{L^2}^2)\geq C\norm{\nabla\varphi(t)}_{L^2}\norm{x\varphi(t)}_{L^2}.
\end{aligned}
\end{equation}
As in \cite{CarSil-15}, we compute 
\begin{align*}
	\frac{\diff\mathcal{E}}{\diff t}(t) 
	&=\Re\Bigl\langle\bigl(-2\Delta\varphi(t)+2a\abs{\varphi(t)}^2\varphi(t)+\abs{x}^2\varphi(t)\bigr),\partial_t\varphi(t)\Bigr\rangle \\
	&=\Re\int_{\R^2}\overline{\varphi(t,x)}(\abs{x}^2-2V(t,x))\partial_t\varphi(t,x)\diff x \\
	&=\Im\int_{\R^2}\overline{\varphi(t,x)}(\abs{x}^2-2V(t,x))(-\Delta\varphi(t,x))\diff x \\
	&=2\Im\int_{\R^2}\overline{\varphi(t,x)}(x-\nabla V(t,x))\cdot\nabla\varphi(t,x)\diff x,
\end{align*}
where we used \eqref{eq:NLS2} in the second and third equalities and integrated by parts in the last equality. 
Thus, using Assumption~\ref{assu:V} \ref{assu:Vc}, the Cauchy--Schwarz inequality, and the bound \eqref{eq:NLS:Elower}, we obtain
\[
	\frac{\diff\mathcal{E}}{\diff t}(t)\leq C(1+\norm{\nabla\varphi(t)}_{L^2}+\norm{x\varphi(t)}_{L^2}\norm{\nabla\varphi(t)}_{L^2})\leq C(1+\mathcal{E}(t)).
\]
Gr{\"o}nwall's lemma together with \eqref{eq:NLS:Elower} yields the bound 
\[
\norm{\nabla\varphi(t)}_{L^2}+\norm{x\varphi(t)}_{L^2}\leq C\exp(Ct).
\]
Hence, the proof of \cite[Corollary~2.6]{Car-11} shows global existence of $\varphi$, and since $\norm{\varphi(t)}_{L^2}=1$, we also have $\norm{\varphi(t)}_{\Sigma^1}\leq C\exp(Ct)$. 

From a straightforward adaptation of these arguments follow the statements about the Hartree solution $u_N$: Instead of $\mathcal{E}(t)$, we define the Hartree pseudo-energy
\[
	\mathcal{E}_N(t)=\norm{\nabla u_N(t)}^2_{L^2}+\frac{1}{2}\int_{\R^2}(w_N\ast\abs{u_N(t,\cdot)}^2)(x)\abs{u_N(t,x)}^2\diff x+\frac{1}{2}\int_{\R^2}\abs{x}^2\abs{u_N(t,x)}^2\diff x. 
\]
Using Young's inequality, we estimate
\begin{align*}
\int_{\R^2}(w_N\ast\abs{u_N(t,\cdot)}^2)(x)\abs{u_N(t,x)}^2\diff x
&\geq -\int_{\R^2}([w_N]_-\ast\abs{u_N(t,\cdot)}^2)(x)\abs{u_N(t,x)}^2\diff x \\
&\geq -\norm{[w_N]_-\ast\abs{u_N(t)}^2}_{L^2}\norm{u_N(t)}_{L^4}^2 \\
&\geq -\norm{u_N(t)}_{L^4}^4\int_{\R^2}w_-.
\end{align*}
The rest of the proof stays the same and we obtain global existence of $u_N$ and the $N$-independent bound $\norm{u_N(t)}_{\Sigma^1}\leq C\exp(Ct)$. 
\end{proof}

The proof of \cite[Corollary~1.4]{CarSil-15} can be adapted to prove the following bounds on the solution of the Hartree equation $u_N$. Note that none of the constants in the lemma depend on $N$. 

\begin{lem}\label{lem:Hregularity}
Let $w_N$ satisfy Assumption~\ref{assu:wN} and let $V$ satisfy Assumption~\ref{assu:V} \ref{assu:Va}, \ref{assu:Vb}, \ref{assu:Vc}. For $\varphi_0\in\Sigma^4$ with $\norm{\varphi_0}_{L^2}=1$, there exists a unique solution $u_N$ to \eqref{eq:Hartree} such that for all $t\geq0$
\begin{align*}
	\norm{u_N(t)}_{L^\infty}&\leq C\norm{u_N(t)}_{H^2}\leq C\norm{u_N(t)}_{\Sigma^4}\leq C\exp(C\exp(Ct)), \\
	\norm{\partial_tu_N(t)}_{L^\infty}&\leq C\norm{\partial_tu_N(t)}_{H^2}\leq C\exp(C\exp(Ct)). 
\end{align*}
\end{lem}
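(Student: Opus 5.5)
The plan is to follow the bootstrap argument behind \cite[Corollary~1.4]{CarSil-15}, with all constants kept independent of $N$. The only inputs from the nonlinearity will be the bound $\norm{w_N\ast f}_{L^p}\leq \norm{w}_{L^1}\norm{f}_{L^p}$, whose constant is $N$-independent because $\norm{w_N}_{L^1}=\norm{w}_{L^1}$, together with the $N$-independent $\Sigma^1$ bound from Lemma~\ref{lem:NLSregularity}. Three reductions come first. Global existence and uniqueness of $u_N$ are already given by Lemma~\ref{lem:NLSregularity}, and propagation of $\Sigma^4$ regularity follows from the local theory in \cite{Car-11}, so only the a priori bound is at stake. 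The inequalities $\norm{u}_{L^\infty}\leq C\norm{u}_{H^2}$ (Sobolev in 2D) and $\norm{u}_{H^2}\leq C\norm{u}_{\Sigma^4}$ are trivial. As in the proof of Lemma~\ref{lem:NLSregularity}, we remove $\mu_N(t)$ by a gauge transformation; this does not change any norm. For the time derivative we differentiate it back: $\abs{\mu_N}\leq C\norm{u_N}_{L^\infty}^2$ is harmless.

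Next we set up the hierarchy of weighted quantities. For $k\leq4$ define $E_k(t)=\sum_{\abs{\bm{\alpha}}+\abs{\bm{\beta}}\leq k}\norm{x^{\bm{\alpha}}\partial^{\bm{\beta}} u_N(t)}_{L^2}^2$. For $A=x^{\bm{\alpha}}\partial^{\bm{\beta}}$ we compute
\[
\frac{\diff}{\diff t}\norm{Au_N}^2=2\Im\langle Au_N,[A,-\Delta+V_t]u_N\rangle+2\Im\langle Au_N,A\bigl((w_N\ast\abs{u_N}^2)u_N\bigr)\rangle.
\]
The linear commutators are controlled exactly as in \cite{CarSil-15}. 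Since $V$ is at most quadratic (Assumption~\ref{assu:V}~\ref{assu:Vc}), $[\partial^{\bm{\beta}},V]$ produces derivatives of $V$ of order at least one, each growing at most linearly in $x$, and $[x^{\bm{\alpha}},\Delta]$ lowers the weight. Every such commutator therefore stays in the same $\Sigma^k$ level, giving $\abs{\cdot}\leq C E_k$. For the nonlinear term we distribute $A$ by Leibniz. Weights $x^{\bm{\alpha}}$ can always be placed on the factor $u_N$ rather than on the convolution. Derivatives commute with convolution, so $\partial^{\bm{\gamma}}(w_N\ast\abs{u_N}^2)=w_N\ast\partial^{\bm{\gamma}}\abs{u_N}^2$. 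Applying Young and Hölder/Gagliardo--Nirenberg, we bound every term by $C(1+\norm{u_N}_{L^\infty}^2+\norm{\nabla u_N}_{L^4}^2)\,E_k$ plus lower-order quantities with $N$-independent constants.

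The main obstacle is closing this without an $L^\infty$ bound, since at the $\Sigma^1$ level we only control $\norm{u_N}_{H^1}$, and in 2D $H^1\not\subset L^\infty$. The plan for this step is to first reach $H^2$ via Strichartz/Brezis--Gallouët: $\norm{u}_{L^\infty}\leq C\norm{u}_{H^1}\bigl(1+\log^{1/2}(1+\norm{u}_{H^2})\bigr)$. Using $E_1\leq C\ee^{Ct}$ from Lemma~\ref{lem:NLSregularity}, the $E_2$ inequality becomes $\dot E_2\leq C\ee^{Ct}E_2\log(2+E_2)$. This is a log-Grönwall inequality, and solving it gives the double exponential bound $E_2\leq C\exp(C\exp(Ct))$, which explains the form of the statement. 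Alternatively, one can use the $L^4_tL^4_x$ Strichartz control of $\nabla u_N$ from Lemma~\ref{lem:NLSregularity} directly. Once $E_2$ is bounded, $\norm{u_N}_{L^\infty}$ and $\norm{\nabla u_N}_{L^4}$ are bounded by $C\exp(C\exp(Ct))$. The inequalities for $E_3$ and $E_4$ are then linear in $E_k$ with that coefficient, and Grönwall preserves the double exponential form after enlarging $C$.

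Finally, the time derivative follows from the equation itself:
\[
\partial_t u_N=-\ii\bigl(-\Delta+V_t+w_N\ast\abs{u_N}^2-\mu_N\bigr)u_N.
\]
Its $H^2$ norm involves at most four derivatives of $u_N$, the quantity $\norm{V_tu_N}_{H^2}\leq C\norm{u_N}_{\Sigma^4}$ (quadratic growth of $V$ and bounded higher derivatives), and $\norm{(w_N\ast\abs{u_N}^2)u_N}_{H^2}\leq C\norm{w}_{L^1}\norm{u_N}_{H^2}^3$ since $H^2$ is an algebra. All of these are bounded by $C\exp(C\exp(Ct))$, and the Sobolev embedding gives the $L^\infty$ bound on $\partial_tu_N$.
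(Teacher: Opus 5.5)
Your route to $N$-independence (Young's inequality with $\norm{w_N}_{L^1}=\norm{w}_{L^1}$) and your treatment of $\partial_t u_N$ agree with the paper. The gap is the growth rate: your main route does not give $C\exp(C\exp(Ct))$.

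At the $\Sigma^2$ level you do get $\frac{\diff}{\diff t}E_2\leq C\bigl(1+\norm{u_N}_{L^\infty}^2+\norm{u_N}_{L^\infty}\norm{u_N}_{\Sigma^1}\bigr)E_2$. This requires interpolating the products $\norm{\nabla u_N}_{L^4}^2$ and $\norm{\nabla u_N}_{L^4}\norm{xu_N}_{L^4}$ as $C\norm{u_N}_{\Sigma^1}E_2^{1/2}$. If you leave them as coefficients of $E_2$, as in your generic bound, the inequality becomes superlinear. Brezis--Gallou\"et with $\norm{u_N}_{H^1}^2\leq C\ee^{Ct}$ then gives $\frac{\diff}{\diff t}E_2\leq C\ee^{Ct}E_2\log(2+E_2)$, i.e.\ $\frac{\diff}{\diff t}\log\log(2+E_2)\leq C\ee^{Ct}$. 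Hence you only get $E_2(t)\leq\exp\bigl(C\exp(\ee^{Ct})\bigr)$, a triple exponential. Lemma~\ref{lem:NLSregularity} gives nothing better than exponential growth of $\norm{u_N}_{H^1}$, so this log-Gr\"onwall scheme cannot be improved.

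The same miscount recurs for $E_3,E_4$. A linear Gr\"onwall inequality whose coefficient has size $\exp(C\exp(Ct))$ yields $\exp\bigl(\int_0^t\exp(C\exp(Cs))\diff s\bigr)$, again a triple exponential. So Gr\"onwall does not ``preserve the double exponential form''.

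What actually produces the double exponential is an integrated bound $\int_0^t\norm{u_N(s)}_{L^\infty}^2\diff s\leq C\ee^{Ct}$. This is what the Strichartz induction of \cite{CarSil-15}, used in the paper, provides:
\begin{enumerate}[(i)]
\item Work on subintervals whose length is a negative power of the $\Sigma^1$ norm; there are at most $C\ee^{Ct}$ of them up to time $t$.
\item On each, the local theory controls $\norm{u_N}_{L^4_tW^{1,4}_x}$, with constants depending on $w_N$ only through $\norm{w}_{L^1}$.
\item Since $W^{1,4}(\R^2)\subset L^\infty(\R^2)$, summing gives the integrated $L^\infty$ bound.
\end{enumerate}
Your ``alternative'' points in this direction. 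However, Lemma~\ref{lem:NLSregularity} states $\nabla u_N\in L^4_{\mathrm{loc}}L^4$ only qualitatively, so this $N$-uniform quantitative bound has to be proved; it is exactly the content of the paper's Step~1.

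With it, linear Gr\"onwall gives $E_2\leq C\exp(C\ee^{Ct})$. Brezis--Gallou\"et then yields the pointwise bound $\norm{u_N(t)}_{L^\infty}^2\leq C\ee^{Ct}$. At levels $3$ and $4$ you must arrange that only such single-exponential factors multiply the top-order $E_k$, and move all interpolated lower-order contributions into source terms via Young's inequality. Downstream the paper only uses some $N$-independent $C_t$, so your argument would suffice there, but it does not prove the bound as stated.
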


\begin{proof}
As in the proof of Lemma~\ref{lem:NLSregularity}, we will work with the equation
\begin{equation}\label{eq:Hartree2}
\begin{cases}
	\ii\partial_tu_N(t,x)=\bigl(-\Delta_x+V(t,x)+(w_N\ast\abs{u_N(t,\cdot)}^2)(x)\bigr)u_N(t,x), \\
	u_N(0,x)=\varphi_0(x)
\end{cases}
\end{equation}
instead of \eqref{eq:Hartree}. 

The proof consists of two steps. We first bound $\norm{u_N(t)}_{\Sigma^4}$ and then we prove $\norm{\partial_tu_N(t)}_{H^2}\leq C\norm{u_N(t)}_{\Sigma^4}^3$. The remaining estimates follow from the trivial inequality $\norm{\cdot}_{H^2}\leq\norm{\cdot}_{\Sigma^4}$ and the Sobolev inequality $\norm{\cdot}_{L^\infty}\leq C\norm{\cdot}_{H^2}$. 

\paragraph{Step 1: $\norm{u_N(t)}_{\Sigma^4}\leq C\exp(C\exp(Ct))$.}

This bound follows from the method in the proof of \cite[Corollary~1.4]{CarSil-15}: The bound $\norm{u_N(t)}_{\Sigma^1}\leq C\exp(Ct)$ from Lemma~\ref{lem:NLSregularity} is the starting point of an induction procedure, where Strichartz estimates are repeatedly used. 
It is easy to see that this can be done in a way that the error terms depend on $w_N$ only via its $L^1$-norm $\norm{w_N}_{L^1}=\norm{w}_{L^1}$, which is independent of $N$, by using Young's inequality in the form
\[
	\norm{w_N\ast f}_{L^r} \leq \norm{w_N}_{L^1} \norm{f}_{L^r}, \qquad r\geq1. 
\]
Thus, all bounds are independent of $N$. 

\paragraph{Step 2: $\norm{\partial_tu_N(t)}_{H^2}\leq C\norm{u_N(t)}_{\Sigma^4}^3$.}

Using \eqref{eq:Hartree2}, we obtain
\begin{equation*}
	\norm{\partial_tu_N(t)}_{H^2}\leq \norm{-\Delta u_N(t)}_{H^2}+\norm{V_tu_N(t)}_{H^2}+\norm{(w_N\ast\abs{u_N(t,\cdot)}^2)u_N(t)}_{H^2}. 
\end{equation*}
By Assumption~\ref{assu:V} \ref{assu:Vc} we have the bound $\abs{V(t,x)}\leq C\jbr{x}^2$. Thus, we can estimate $\norm{-\Delta u_N(t)}_{H^2}\leq\norm{u_N(t)}_{\Sigma^4}$, $\norm{V_tu_N(t)}_{H^2}\leq C\norm{u_N(t)}_{\Sigma^4}$, and 
\begin{align*}
	\norm{(w_N\ast\abs{u_N(t,\cdot)}^2)u_N(t)}_{H^2}
	&\leq C\norm{w}_{L^1}\norm{u_N(t)}_{L^4}^2\norm{u_N(t)}_{L^\infty} \\
	&\quad+C\norm{w}_{L^1}\norm{\nabla u_N(t)}_{L^4}^2\norm{u_N(t)}_{L^\infty} \\
	&\quad+C\norm{w}_{L^1}\norm{-\Delta u_N(t)}_{L^2}\norm{u_N(t)}_{L^\infty}^2 \\
	&\leq C\norm{u_N(t)}_{H^2}^3,
\end{align*}
where we used the Sobolev inequalities $\norm{\cdot}_{L^\infty}\leq C\norm{\cdot}_{H^2}$ and $\norm{\cdot}_{L^4}\leq C\norm{\cdot}_{H^1}$. 
Therefore, we have $\norm{\partial_tu_N(t)}_{H^2}\leq C\norm{u_N(t)}_{\Sigma^4}^3$. 
\end{proof}

\section{Bogoliubov equation}\label{sec:Bog}

The following lemma, which is an improvement of \cite[Lemma~5]{NamNap-19}, is the main result of this section. The particle number bound \eqref{eq:number:bog} has previously been proved for quasi-free initial data in \cite[Proposition~7]{NamNap-17a}. We provide a proof that does not require the initial data to be quasi-free. 
\begin{lem}[Bogoliubov dynamics]\label{lem:bogdyn} 
Let $w_N$ satisfy Assumption~\ref{assu:wN} and let $V$ satisfy Assumption~\ref{assu:V}. Let $u_N$ be the Hartree evolution in \eqref{eq:Hartree} with initial datum $\varphi_0\in\Sigma^4$, $\norm{\varphi_0}_{L^2}=1$, and let $\Phi_0\in\mathcal{F}_\perp(0)$ be a normalized vector satisfying \eqref{eq:thm:norm:initial}. Then, the Bogoliubov equation \eqref{eq:Bogoliubov} with initial datum $\Phi_0$ has a unique global solution 
\[
	\Phi\in C([0,\infty),\mathcal{F})\cap L^\infty_\textnormal{loc}((0,\infty),\mathcal{Q}(\sq(1-\Delta+v)))
\]
and the following statements hold. 
\begin{enumerate}[(i)]
\item\label{item:kinetic:bog}
If the interaction potential $w_N$ satisfies \eqref{eq:poly} for $\beta>0$, then we have
\begin{equation}\label{eq:kinetic:bog:poly}
	\langle\Phi(t),\sq(1-\Delta+v)\Phi(t)\rangle\leq C_{t}\beta\log(N). 
\end{equation}
If $w_N$ satisfies \eqref{eq:exp} for $0<\kappa<1$, then we have
\begin{equation}\label{eq:kinetic:bog:exp}
	\langle\Phi(t),\sq(1-\Delta+v)\Phi(t)\rangle\leq C_{t}N^\kappa.
\end{equation}
\item\label{item:number:bog}
The particle number is bounded,
\begin{equation}\label{eq:number:bog}
	\langle\Phi(t),\numop\Phi(t)\rangle \leq C_t. 
\end{equation}
\end{enumerate}
\end{lem}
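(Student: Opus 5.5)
Existence and uniqueness of $\Phi$ are treated as in \cite[Lemma~5]{NamNap-19}. The Bogoliubov Hamiltonian $\BH(t)$ is quadratic in creation and annihilation operators, with $K_2(t)\in\mathfrak{H}^2$ for each fixed $N$, since $\norm{K_2}_{L^2}\le\norm{w_N}_{L^2}\norm{u_N}_{L^\infty}\norm{u_N}_{L^2}$. Its one-body part $h+K_1$ differs from $-\Delta+V_t$ by a bounded operator. Hence the dynamics can be realized as an implementable Bogoliubov transformation, or constructed via \cite[Theorem~X.70]{ReeSim-2} on the form domain of $\sq(1-\Delta+v)$, using the time regularity from Lemma~\ref{lem:Hregularity} and Assumption~\ref{assu:V}~\ref{assu:Ve}. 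One also checks that $\Phi(t)\in\mathcal{F}_\perp(t)$: the projection $q(t)$ evolves consistently with $h(t)$ and $\partial_t u_N=-\ii h u_N$.

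\textbf{Particle number.} I would prove \eqref{eq:number:bog} first, without assuming the initial data are quasi-free. Compute
\[
\frac{\diff}{\diff t}\langle\Phi,\numop\Phi\rangle=\langle\Phi,\ii[\BH,\numop]\Phi\rangle=2\Im\iint K_2(t,x,y)\langle\Phi,a^\ast_xa^\ast_y\Phi\rangle\diff x\diff y .
\]
The naive bound $\norm{K_2}_{L^2}\sim N^\beta$ is useless. I would avoid it by using the Bogoliubov structure. The dynamics is implemented by a unitary $\mathbb{U}(t)$ satisfying $\mathbb{U}^\ast a(f)\mathbb{U}=a(U_tf)+a^\ast(\overline{V_tf})$, where $(U_t,V_t)$ solve the linear system $\ii\partial_tU=(h+K_1)U+K_2\overline{V}$ and its conjugate. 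This gives
\[
\langle\Phi(t),\numop\Phi(t)\rangle\le 2\norm{V_t}_{\mathrm{HS}}^2+2\norm{U_t}^2\langle\Phi_0,(\numop+1)\Phi_0\rangle .
\]
The task therefore reduces to an $N$-independent bound on $\norm{V_t}_{\mathrm{HS}}$. This is the quasi-free computation of \cite[Proposition~7]{NamNap-17a}. There, $\norm{V_t}_{\mathrm{HS}}$ is controlled by writing $V_t=\int_0^t(\ldots)K_2(s)\ldots$ and integrating by parts in time against the free evolution. The oscillation $\ee^{\ii s\Delta}$ makes $\int_0^t \ee^{-\ii s(\Delta_x+\Delta_y)}K_2(s)\diff s$ bounded in $L^2$ uniformly in $N$, because $(1-\Delta_x-\Delta_y)^{-1}K_2$ is bounded via $\norm{w_N}_{L^1}$, $\norm{u_N}_{H^2}$ and Lemma~\ref{lem:Hregularity}. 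With Gr\"onwall this yields \eqref{eq:number:bog}.

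\textbf{Kinetic energy.} For \eqref{eq:kinetic:bog:poly}, I would differentiate the energy $\langle\Phi,\BH(t)\Phi\rangle$, whose time derivative only involves $\partial_t V$, $\partial_t u_N$ and $\partial_t$ of the kernels. I would then compare it with $\sq(1-\Delta+v)$ from both sides.
\begin{itemize}
\item The term $\sq(w_N\ast\abs{u_N}^2+K_1)$ is bounded by $C\norm{w}_{L^1}\norm{u_N}_{L^\infty}^2\numop$.
\item For the pairing term I would use the operator bound
\[
\Bigl|\Re\iint K_2\, a^\ast_xa^\ast_y\Bigr|\le \varepsilon\,\sq(1-\Delta)+C\varepsilon^{-1}\norm{(1-\Delta_x)^{-1/2}K_2}^2(\numop+1).
\]
\item The key estimate is $\norm{(1-\Delta_x)^{-1/2}K_2}_{L^2}^2\le C\beta\log N$, coming from $\abs{\widehat{w_N}}\le\norm{w}_{L^1}$ and $\int_{\abs{p}\lesssim N^\beta}\jbr{p}^{-2}\diff p\sim \beta\log N$. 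Equivalently, one can use the Moser--Trudinger bound \eqref{eq:MT:rmotp}.
\item The time derivatives of the coefficients are bounded in the same way, using Lemma~\ref{lem:Hregularity} and Assumption~\ref{assu:V}~\ref{assu:Ve}.
\end{itemize}
Together with \eqref{eq:number:bog} and Gr\"onwall, this gives $\langle\Phi,\sq(1-\Delta+v)\Phi\rangle\le C_t(1+\beta\log N)$. In the exponential case the same argument gives $\log(\ee^{N^\kappa})=N^\kappa$. The main obstacle is the $N$-uniform particle number bound for non-quasi-free data, specifically the oscillatory integration by parts controlling $\norm{V_t}_{\mathrm{HS}}$, because a crude pairing estimate would only give $\beta\log N$ growth there as well.
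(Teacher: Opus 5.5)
Your part (i) follows the paper: an energy Gr\"onwall estimate built on the operator bounds of Lemma~\ref{lem:bogbounds}, with key input \eqref{eq:sqrtdeltaK2bound}. Only the order differs. You prove (ii) first, which lets you use the cruder pairing bound with the multiplicative error $\beta\log(N)(\mathcal{N}+1)$. The paper proves (i) on its own, using the additive error $\mathcal{N}+\beta\log(N)$ and the bound on $[\BH,\mathcal{N}]$ from Lemma~\ref{lem:bogbounds}.

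For (ii) your route is genuinely different. You reduce an arbitrary $\Phi_0$ to the vacuum through the Bogoliubov-transformation structure, $\langle\Phi(t),\mathcal{N}\Phi(t)\rangle\le 2\|V_t\|_{\mathrm{HS}}^2+2(\|U_t\|_{\mathrm{op}}^2+\|V_t\|_{\mathrm{op}}^2)\langle\Phi_0,\mathcal{N}\Phi_0\rangle$. Your version drops the harmless $\|V_t\|_{\mathrm{op}}^2$ term. You then bound $\|V_t\|_{\mathrm{HS}}$ by a Duhamel argument on the kernels, integrating by parts in time.

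The paper stays in Fock space. With $k=(2L-\Delta_x-\Delta_y)^{-1}K_2$ and $\mathbb{K}=\iint k\,a^\ast_xa^\ast_y$, the identity $[\BH,\mathbb{K}]=\iint K_2\,a^\ast_xa^\ast_y+2\iint(A_xk)\,a^\ast_xa^\ast_y+\langle K_2,k\rangle+2\sq(B)$ gives $\frac{\mathrm{d}}{\mathrm{d}t}\bigl(\langle\mathcal{N}\rangle+2\Re\langle\mathbb{K}\rangle\bigr)\le C_tL^{1/2}\langle\mathcal{N}+1\rangle$. This closes because $\|k\|_{L^2}\le C_tL^{-1/2}$ and $L$ can be taken large.

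Both routes use the same analytic input: $N$-uniform control of $(2L-\Delta_x-\Delta_y)^{-1}K_2$ and $\|K_2\|_{\mathrm{op}}\le C_t$. Your reduction shows neatly that the general case follows from the vacuum, i.e.\ quasi-free, case. The paper's identity needs no transformation structure, so it is reused in Lemma~\ref{lem:truncatednumber} for $\Phi_{N,M}$. There the generator $\BH+\chi_M\mathcal{E}_N\chi_M$ is not quadratic, so your reduction is not available.

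What your sketch leaves out is the external potential. Integrating by parts ``against the free evolution'' controls the boundary terms through $\|(1-\Delta_x-\Delta_y)^{-1}K_2\|_{L^2}$. The remaining terms, however, contain the non-free part of the one-body generator acting on the integrated kernel. Since $V_t$ grows quadratically, these are not bounded by any Hilbert--Schmidt norm you control, and the Gr\"onwall argument does not close as written.

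You need $\|V_t(x)(2L-\Delta_x-\Delta_y)^{-1}K_2\|_{L^2}\le C_t$ uniformly in $N$. The paper gets this from the weighted bounds \eqref{eq:deltaxK2bound} with $x^{\bm{\alpha}}$, $|\bm{\alpha}|\le2$, together with the commutator expansion $|x|^2k=k_1+\dots+k_4$. Those weighted bounds require $\|u_N(t)\|_{\Sigma^2}$, not only $\|u_N(t)\|_{H^2}$. Alternatively, you could integrate by parts against the propagator of $h(t)$, at the price of resolvent estimates for $h_x+h_y$ and their time derivatives. With this ingredient added, your argument goes through.
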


We proceed as in \cite{NamNap-19} and use the following lemma, which is an improvement of \cite[Lemma~6]{NamNap-19}, to deduce Lemma~\ref{lem:bogdyn} \ref{item:kinetic:bog}. 
\begin{lem}\label{lem:bogbounds}
For every $\eta>0$ we have
\begin{align*}
	\pm\bigl(\BH(t)+\sq(\Delta-V_t)\bigr)&\leq\eta\sq(1-\Delta)+C_{t}(1+\eta^{-1})\bigl(\numop+\beta\log(N)\bigr), \\
	\pm\partial_t\BH(t)&\leq C_t\sq(1-\Delta+v)+C_{t}\bigl(\numop+\beta\log(N)\bigr), \\
	\pm\ii[\BH(t),\numop]&\leq\eta\sq(1-\Delta)+C_{t}(1+\eta^{-1})\bigl(\numop+\beta\log(N)\bigr)
\end{align*}
if $w_N$ satisfies \eqref{eq:poly} for $\beta>0$. If $w_N$ satisfies \eqref{eq:exp} for $0<\kappa<1$, $\beta\log(N)$ is replaced by $N^{\kappa}$. 
\end{lem}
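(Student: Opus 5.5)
The plan is to decompose $\BH(t)+\sq(\Delta-V_t)$ into a one-body part and a pairing part and to bound each separately. The one-body part is
\[
	\sq\bigl(w_N\ast\abs{u_N(t)}^2-\mu_N(t)+K_1(t)\bigr),
\]
and the pairing part is $\frac12\bigl(\iint K_2\,a^\ast_xa^\ast_y+\hc\bigr)$. The constants will come from Lemma~\ref{lem:Hregularity}.

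\textbf{One-body part.} Young's inequality gives
\[
	\norm{w_N\ast\abs{u_N}^2}_{L^\infty}\leq\norm{w}_{L^1}\norm{u_N}_{L^\infty}^2,\qquad \abs{\mu_N}\leq C\norm{w}_{L^1}\norm{u_N}_{L^\infty}^2.
\]
For $K_1=q\tilde K_1q$, the Schur test gives $\norm{\tilde K_1}_{\mathrm{op}}\leq\norm{u_N}_{L^\infty}^2\norm{w}_{L^1}$. Hence the one-body part is bounded by $\pm C_t\numop$, with $C_t$ from Lemma~\ref{lem:Hregularity}.

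\textbf{Pairing part, operator inequality.} I would first establish the following general bound, valid for any symmetric $K\in\mathfrak H^2$ and any $\eta>0$:
\[
	\pm\Bigl(\iint K(x,y)a^\ast_xa^\ast_y+\hc\Bigr)\leq\eta\,\sq(1-\Delta)+C\eta^{-1}\bigl(1+\norm{(1-\Delta_x)^{-1/2}K}_{L^2}^2\bigr)(\numop+1).
\]
The proof writes the term as $\int a^\ast\bigl(K(\cdot,y)\bigr)a^\ast_y\,\mathrm dy$. Inserting $(1-\Delta)^{\pm1/2}$ on the first variable and applying Cauchy--Schwarz controls it by $\norm{\sq(1-\Delta)^{1/2}\Phi}\cdot\norm{(1-\Delta_x)^{-1/2}K}\cdot\norm{(\numop+1)^{1/2}\Phi}$. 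This yields the displayed bound after using $ab\leq\eta a^2+\eta^{-1}b^2$.

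I expect one difficulty here: this naive splitting produces the \emph{product} $\norm{(1-\Delta_x)^{-1/2}K}^2\,\numop$, whereas the lemma asks for the \emph{sum} $\numop+\beta\log(N)$. To get the sum, I would first try to avoid the extra $\numop$ entirely. The idea is to normal-order via the CCR so that the remaining factor is $\norm{\Phi}$ rather than $\norm{\numop^{1/2}\Phi}$. Concretely, I would estimate $\norm{\iint\overline{K}a_xa_y\Phi}^2$ by expanding it and commuting, so that the only constant term is $\norm{(1-\Delta_x)^{-1/2}K}^2$. This is the step I regard as the main obstacle.

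\textbf{Pairing part, the key estimate.} It remains to show
\[
	\norm{(1-\Delta_x)^{-1/2}K_2}_{L^2}^2\leq C_t\beta\log(N).
\]
The projections $q\otimes q$ are harmless: since $u_N\in H^2$, $q$ is bounded on $H^{-1}$ with norm at most $C(1+\norm{u_N}_{H^1}^2)$. Thus it suffices to treat $\tilde K_2$, for which
\[
	\norm{(1-\Delta_x)^{-1/2}\tilde K_2}^2=\int\abs{u_N(y)}^2\norm{w_N(\cdot-y)u_N}_{H^{-1}}^2\,\mathrm dy.
\]
By duality with the 2D Sobolev embedding $\norm{f}_{L^{q}}\leq C\sqrt q\,\norm{f}_{H^1}$ (the Moser--Trudinger growth), we get $\norm{g}_{H^{-1}}\leq C\sqrt{p'}\,\norm{g}_{L^p}$ with $p'=p/(p-1)$. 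Therefore
\[
	\norm{w_N(\cdot-y)u_N}_{H^{-1}}^2\leq Cp'\norm{u_N}_{L^\infty}^2\norm{w_N}_{L^p}^2=Cp'N^{4\beta(1-1/p)}\norm{u_N}_{L^\infty}^2\norm{w}_{L^p}^2.
\]
Choosing $p'=\beta\log N$ makes $N^{4\beta/p'}=\ee^4$ bounded, while $\norm{w}_{L^p}\leq\norm{w}_{L^1}+\norm{w}_{L^2}$. This gives $C_t\beta\log N$. In the exponential case \eqref{eq:exp}, $N^{2\beta}$ is replaced by $\ee^{2N^\kappa}$ and the choice $p'=N^\kappa$ gives $C_tN^\kappa$.

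\textbf{The remaining two inequalities.} For the commutator,
\[
	\ii[\BH,\numop]=\ii\iint K_2a^\ast_xa^\ast_y\cdot(-2)+\hc,
\]
which is a pairing term and is handled exactly as above. For the time derivative, $\partial_t\BH$ consists of three kinds of terms. The term $\sq(\partial_tV_t)$ is controlled by Assumption~\ref{assu:V}\ref{assu:Ve} as $C_t\sq(1+v)$. The terms coming from $\partial_t(w_N\ast\abs{u_N}^2)$, $\partial_t\mu_N$ and $\partial_tK_1$ are bounded by $C_t\numop$, using $\norm{\partial_tu_N}_{L^\infty}$ from Lemma~\ref{lem:Hregularity}. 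The remaining term is a pairing term with kernel $\partial_t\bigl(q\otimes q\,\tilde K_2\bigr)$, whose $H^{-1}$-norm is bounded by the same $L^p$ argument with one factor $u_N$ replaced by $\partial_tu_N$. Taking $\eta=1$ there gives the stated bound with $C_t\sq(1-\Delta+v)$.
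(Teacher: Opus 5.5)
The gap is the step you flag yourself: the operator bound for the pairing term. Your proposed fix cannot close it. The rest of your plan is fine: the one-body part, the commutator, the time derivative, and the key estimate. Your duality bound $\norm{g}_{H^{-1}}\leq C\sqrt{p'}\norm{g}_{L^p}$ with $p'=\beta\log(N)$ is a valid alternative to the paper's Fourier splitting at $\abs{p}=N^\beta$ in the proof of \eqref{eq:sqrtdeltaK2bound}.

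\textbf{Why your fix fails.} Let $A=\iint\overline{K_2(t,x,y)}a_xa_y\diff x\diff y$.
\begin{itemize}
\item The estimate $\abs{\langle\Phi,A^\ast\Phi\rangle}\leq\norm{A\Phi}\norm{\Phi}$ does not help. $A^\ast A$ is already normal-ordered and has no constant term. It can only be bounded by something like $\norm{(1-\Delta_x)^{-1/2}K_2}^2\sq(1-\Delta)(\numop+1)$, which $\sq(1-\Delta)$ does not control.
\item Commuting the other way, $AA^\ast$ does produce a constant, but it is the unregularized $2\norm{K_2}_{L^2}^2\sim N^{2\beta}$.
\item More fundamentally, the $\numop$-free bound you are aiming for, $\pm(A+A^\ast)\leq\eta\sq(1-\Delta)+C\eta^{-1}\beta\log(N)$, is false for small $\eta$. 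Test it on coherent states with $n$ particles in a mode $\phi\perp u_N(t)$ with $\langle\phi\otimes\phi,K_2(t)\rangle\neq0$, phase adjusted. The left side grows like $n$, the right side like $\eta n$.
\end{itemize}
So a term $C\eta^{-1}\numop$ with an $N$-independent coefficient is unavoidable. It has to come from a quantity of $K_2$ that is bounded uniformly in $N$.

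\textbf{The missing idea.} Split the pairing term asymmetrically and complete a square. Set
\[
b_z=\int(1-\Delta)^{1/2}(z,x)\,a_x\diff x,\qquad G=(1-\Delta_x)^{-1/2}K_2(t),\qquad G_z=G(z,\cdot).
\]
Then $\iint K_2a^\ast_xa^\ast_y\diff x\diff y=\int b_z^\ast a^\ast(G_z)\diff z$ and $\int b_z^\ast b_z\diff z=\sq(1-\Delta)$. Hence
\[
0\leq\int\bigl(\sqrt{\eta}\,b_z\pm\eta^{-1/2}a^\ast(G_z)\bigr)^\ast\bigl(\sqrt{\eta}\,b_z\pm\eta^{-1/2}a^\ast(G_z)\bigr)\diff z
\]
yields
\[
\mp\bigl(\iint K_2a^\ast_xa^\ast_y\diff x\diff y+\hc\bigr)\leq\eta\sq(1-\Delta)+\eta^{-1}\int a(G_z)a^\ast(G_z)\diff z .
\]
Now the constant comes from anti-normal ordering, and it is the regularized one:
\[
\int a(G_z)a^\ast(G_z)\diff z=\norm{G}_{L^2}^2+\sq(T),\qquad 0\leq T,\quad \norm{T}_\textnormal{op}\leq\norm{G}_\textnormal{op}^2\leq\norm{K_2(t)}_\textnormal{op}^2 .
\]
The Schur/Young bound you used for $\tilde{K}_1$ gives $\norm{K_2(t)}_\textnormal{op}\leq\norm{u_N(t)}_{L^\infty}^2\norm{w}_{L^1}\leq C_t$. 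Together with $\norm{G}_{L^2}^2\leq C_t\beta\log(N)$ this gives
\[
\pm(\text{pairing})\leq\eta\sq(1-\Delta)+C_t\eta^{-1}\bigl(\numop+\beta\log(N)\bigr),
\]
which is the required sum. The same argument handles $\ii[\BH(t),\numop]$ and $\partial_t\BH(t)$, using $\norm{\partial_tK_2(t)}_\textnormal{op}\leq C_t$ from $\norm{\partial_tu_N(t)}_{L^\infty}$.

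Without this step your argument only gives the three inequalities with $C_t\eta^{-1}\beta\log(N)(\numop+1)$. That is too weak for the Gr{\"o}nwall argument in Lemma~\ref{lem:bogdyn}: it would produce a factor $N^{C\beta t}$ instead of $\beta\log(N)$.
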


Using \eqref{eq:sqrtdeltaK2bound} from the following lemma, which improves \cite[Lemma~8]{NamNap-19}, Lemma~\ref{lem:bogbounds} can be proved similarly to \cite[Lemma~6]{NamNap-19} and we skip the details of the proof. The estimates \eqref{eq:deltaxK2bound} and \eqref{eq:kperp:infty2bound} will be used in the proofs of Lemma~\ref{lem:bogdyn} \ref{item:number:bog} and Lemma~\ref{lem:truncatednumber}, respectively. 

\begin{lem}\label{lem:K2bounds}
Let $L\geq1$ and denote 
\[
	k_\perp(t) = q(t)\otimes q(t) k(t), \qquad k(t)=(2L-\Delta_x-\Delta_y)^{-1}K_2(t). 
\]
Then, we have for $r\geq1$ and $\bm{\alpha}\in\N_0^2$, $\abs{\bm{\alpha}}\leq2$,
\begin{align}
	\norm{(1-\Delta_x)^{-1/2}K_2(t,\cdot,\cdot)}_{L^2}^2+\norm{(1-\Delta_x)^{-1/2}\partial_tK_2(t,\cdot,\cdot)}_{L^2}^2 &\leq C_{t}\beta\log(N), \label{eq:sqrtdeltaK2bound} \\
	\norm{(2L-\Delta_x-\Delta_y)^{-r}x^{\bm{\alpha}}K_2(t,\cdot,\cdot)}_{L^2}^2+\norm{(2L-\Delta_x-\Delta_y)^{-r}\partial_tK_2(t,\cdot,\cdot)}_{L^2}^2 &\leq C_tL^{1-2r}, \label{eq:deltaxK2bound} \\
	\norm{k_\perp(t)}_{L^\infty_{1}L^2_{2}} \coloneqq \sup_{x\in\R^2}\norm{k_\perp(t,x,\cdot)}_{L^2} &\leq C_t \label{eq:kperp:infty2bound}
\end{align}
if $w_N$ satisfies \eqref{eq:poly} for $\beta>0$. If $w_N$ satisfies \eqref{eq:exp} for $0<\kappa<1$, $\beta\log(N)$ is replaced by $N^{\kappa}$. 
\end{lem}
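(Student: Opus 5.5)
Throughout I work in Fourier variables and first strip off the projections $q(t)\otimes q(t)$. Write $q = 1 - p$ with $p=\ketbra{u_N}{u_N}$. Then
\[
K_2 = \tilde K_2 - (p\otimes 1)\tilde K_2 - (1\otimes p)\tilde K_2 + (p\otimes p)\tilde K_2 .
\]
The terms containing $p$ are harmless. For instance,
\[
\bigl((p\otimes1)\tilde K_2\bigr)(x,y)=u_N(x)\,\bigl(u_N\,(w_N\ast |u_N|^2)\bigr)(y)\text{-type},
\]
which by Young's inequality is bounded in $L^2$, and even in $H^2$ in the relevant variables, by $C\norm{w}_{L^1}\norm{u_N}_{L^\infty}\dots$. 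So all these pieces are controlled by Lemma~\ref{lem:Hregularity} with $N$-independent constants, and the main work concerns $\tilde K_2(x,y)=u_N(x)w_N(x-y)u_N(y)$. Time derivatives are handled the same way: $\partial_t$ falls either on $u_N$ or on $p$, and $\norm{\partial_t u_N}_{L^\infty}$, $\norm{\partial_t u_N}_{H^2}$ are bounded by Lemma~\ref{lem:Hregularity}. I will therefore only describe $\tilde K_2$.

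\textbf{Bound \eqref{eq:sqrtdeltaK2bound}.} For fixed $y$,
\[
\norm{(1-\Delta_x)^{-1/2}\tilde K_2(\cdot,y)}^2 = |u_N(y)|^2 \int \frac{|\widehat{u_N\,w_N(\cdot-y)}(k)|^2}{1+|k|^2}\,dk .
\]
I would bound $|\widehat{u_N w_N(\cdot-y)}|\le |\widehat{u_N}|\ast|\widehat{w_N}|$ with $\norm{\widehat{w_N}}_\infty\le\norm{w}_{L^1}$. This is not quite enough directly, so instead I write $f = u_N w_N(\cdot-y)$ and split frequencies at $|k|=N^\beta$ (or $e^{N^\kappa}$). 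For $|k|\le N^\beta$, I use $\norm{\hat f}_\infty\le\norm{u_N}_\infty\norm{w}_{L^1}$ to get
\[
\int_{|k|\le N^\beta}\frac{dk}{1+|k|^2}\le C\beta\log N .
\]
For $|k|>N^\beta$, I use $\norm{\hat f}_2^2\le\norm{u_N}_\infty^2\norm{w_N}_2^2 = CN^{2\beta}$, so this part is at most $N^{-2\beta}\cdot CN^{2\beta}=C$. Integrating in $y$ against $|u_N(y)|^2$ gives $C_t\beta\log N$. In the exponential case the same computation gives $CN^\kappa$. The $\partial_t$ term is identical, with $u_N$ replaced by $\partial_tu_N$ in one or both factors.

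\textbf{Bound \eqref{eq:deltaxK2bound}.} Here I pass to center-of-mass coordinates and use $r\ge1$. In Fourier space, $\widehat{\tilde K_2}(k,l)=\int \widehat{u_N}(k-m)\widehat{w_N}(m)\widehat{u_N}(l+m)\,dm$. I would instead bound directly
\[
\norm{(2L-\Delta_x-\Delta_y)^{-r}F}^2 = \int (2L+|k|^2+|l|^2)^{-2r}|\hat F|^2 .
\]
Using $|\hat F(k,l)|\le\norm{\widehat{w}}_\infty\int|\widehat{g_1}(k-m)||\widehat{g_2}(l+m)|\,dm$ with $g_1=x^{\bm\alpha}u_N$, Cauchy--Schwarz in $m$ gives $|\hat F(k,l)|\le C\norm{g_1}\norm{g_2}$ uniformly. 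Then
\[
\int(2L+|k|^2+|l|^2)^{-2r}\,dk\,dl\sim L^{2-2r}
\]
in four dimensions. That is off by a factor $L$, so I would refine: change variables to $P=k+l$ and keep $|\hat F|$ in $L^2_m L^\infty_P$ form, namely
\[
\int|\hat F(k,l)|^2\,dk\le C\norm{g_1}^2\norm{\widehat{g_2}}_{L^1}^2\text{-type}.
\]
Integrating $(2L+|l|^2)^{-2r}$ over only the two-dimensional $l$ then gives $L^{1-2r}$. Here $\norm{\widehat{g_2}}_{L^1}\le C\norm{g_2}_{H^2}$. The factor $x^{\bm\alpha}$ is absorbed by the $\Sigma^4$ bound of Lemma~\ref{lem:Hregularity}. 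Precisely, $|\hat F(k,l)|\le\norm{\widehat w}_\infty\,(|\hat g_1|\ast|\check{\hat g_2}|)(k+l)$-type, so $\hat F$ depends essentially on one variable in $L^2$ after fixing the other.

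\textbf{Bound \eqref{eq:kperp:infty2bound}.} I would apply the previous step with $r=1$, $L=1$ pointwise in $x$, via the embedding $H^{s}\subset L^\infty$ in $x$ for $s>1$. Concretely, I need $\norm{(1-\Delta_x)^{1/2+\epsilon}k(t)}_{L^2}\le C$, which follows since $(1-\Delta_x)^{1/2+\epsilon}(2-\Delta_x-\Delta_y)^{-1}$ loses at most $(2+|l|^2)^{-1/2+\epsilon}$. This is integrable against the one-variable decay used above. The projections $q\otimes q$ are again handled by the fact that the $p$-terms are smooth and bounded.

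\textbf{Main obstacle.} The main obstacle is the sharp logarithm in \eqref{eq:sqrtdeltaK2bound}: the frequency split at the scale of $w_N$ must be done carefully so that only $\log$ of the scale appears.
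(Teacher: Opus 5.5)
Your arguments for \eqref{eq:sqrtdeltaK2bound} and \eqref{eq:deltaxK2bound} are fine and are essentially the paper's.
\begin{itemize}
\item For \eqref{eq:sqrtdeltaK2bound}: a frequency split at the scale of $w_N$, with a uniform bound on the partial Fourier transform at low frequency and $\norm{w_N}_{L^2}^2=N^{2\beta}\norm{w}_{L^2}^2$ at high frequency.
\item For \eqref{eq:deltaxK2bound}: control of $\int\abs{\widehat{F}(k,l)}^2\diff k$ uniformly in $l$, followed by integrating $(2L+\abs{l}^2)^{-2r}$ over $\R^2$ only. You get the uniform bound from $\norm{\widehat{u_N}}_{L^1}\leq C\norm{u_N}_{H^2}$, the paper from $\norm{u_N}_{L^\infty}$; both work.
\end{itemize}
The gap is in \eqref{eq:kperp:infty2bound}, which is the genuinely delicate estimate, not the logarithm.

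Your key claim there, $\norm{(1-\Delta_x)^{1/2+\epsilon}k(t)}_{L^2}\leq C$, is false uniformly in $N$. Squaring your symbol bound gives the weight $(2+\abs{l}^2)^{-1+2\epsilon}$. This weight is not integrable on $\R^2$; it diverges logarithmically already at $\epsilon=0$. Yet against a bound on $\int\abs{\widehat{K_2}(k,l)}^2\diff k$ uniform in $l$, that integral is exactly what you need.

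A sharper estimate cannot repair this. As $N\to\infty$, $\tilde K_2\to a\,u_N(x)^2\delta(x-y)$, whose Fourier transform depends only on $P=k+l$. In the variables $P$ and $Q=(k-l)/2$, the $H^s_xL^2_y$-norm of $k$ contains $\int\abs{Q}^{2s-4}\diff Q$ at infinity, which diverges for $s\geq1$. At finite $N$ the cutoff $\abs{Q}\lesssim N^\beta$ makes the norm grow like $N^{\beta(s-1)}$, and like $\ee^{(s-1)N^\kappa}$ in the exponential case. So any route via $H^s_x\subset L^\infty_x$ with $s>1$ loses powers of $N$. The bound \eqref{eq:kperp:infty2bound} sits at the critical two-dimensional regularity and cannot be obtained from $L^2$-information on $K_2$ alone.

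What you need is the $L^1$-bound $\sup_x\norm{\tilde K_2(t,x,\cdot)}_{L^1}\leq\norm{u_N}_{L^\infty}^2\norm{w}_{L^1}$, combined with $L^1\to L^2$ smoothing of the resolvent. The paper does this as follows.
\begin{itemize}
\item Write $(2L-\Delta_x-\Delta_y)^{-1}=\int_0^\infty\ee^{-2Ls}\ee^{s\Delta_x}\ee^{s\Delta_y}\diff s$.
\item $\ee^{s\Delta_x}$ is an average in $x$, since the heat kernel has $L^1$-norm $1$.
\item $\ee^{s\Delta_y}$ maps $L^1$ to $L^2$ with norm $\norm{F(s)}_{L^2}=(8\pi s)^{-1/2}$, which is integrable against $\ee^{-2Ls}$ near $s=0$.
\item This yields $\norm{k}_{L^\infty_1L^2_2}\leq C\bigl(\norm{\tilde K_2}_{L^\infty_1L^1_2}+\norm{K_2-\tilde K_2}_{L^\infty_1L^2_2}\bigr)\leq C_t$.
\end{itemize}

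Finally, the projections in $k_\perp$ act after the resolvent, so they must be removed at the level of $k$, not $K_2$. Write $k_\perp=k-(p\otimes\id)k-(q\otimes p)k$. Then $\norm{(p\otimes\id)k}_{L^\infty_1L^2_2}\leq\norm{u_N}_{L^\infty}\norm{k}_{L^2}$, and $\norm{(q\otimes p)k}_{L^\infty_1L^2_2}\leq\norm{k}_{L^\infty_1L^2_2}+C_t$.
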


\begin{proof}
We focus on the case where $w_N$ satisfies \eqref{eq:poly}, since the case where $w_N$ satisfies \eqref{eq:exp} follows by setting $\beta=N^\kappa/\log(N)$. 

\paragraph{Proof of \eqref{eq:sqrtdeltaK2bound} and \eqref{eq:deltaxK2bound}.}

Let $L,r\geq1$ and $\bm{\alpha}\in\N_0^2$, $\abs{\bm{\alpha}}\leq2$. 
We restrict ourselves to the bounds for $K_2$ since the bounds for $\partial_t K_2$ can be proved similarly due to the bounds on $\partial_tu_N(t)$ in Lemma~\ref{lem:Hregularity}. 
Recall the definition of $K_2$,
\[
	K_2(t)=q(t)\otimes q(t)\tilde{K}_2(t), \qquad \tilde{K}_2(t,x,y)=u_N(t,x)w_N(x-y)u_N(t,y),
\]
where $q(t)=\id-p(t)=\id-\ketbra{u_N(t)}{u_N(t)}$. We have $\id-q(t)\otimes q(t)=p(t)\otimes\id+\id\otimes p(t)-p(t)\otimes p(t)$. Using the bounds from Lemma~\ref{lem:Hregularity}, we find
\begin{align*}
	\MoveEqLeft[3] \norm{x^{\bm{\alpha}}(\ketbra{u_N(t)}{u_N(t)}\otimes\id)\tilde{K}_2(t,\cdot,\cdot)}^2_{L^2} \\
	&=\iint\abs*{\int \overline{u_N(t,z)}u_N(t,z)w_N(z-y)\diff z}^2\abs{x^{\bm{\alpha}}u_N(t,x)}^2\abs{u_N(t,y)}^2\diff x\diff y \\
	&\leq \norm{u_N(t,\cdot)}_{L^\infty}^4\norm{w_N}_{L^1}^2\norm{u_N(t,\cdot)}_{\Sigma^2}^2\norm{u_N(t,\cdot)}_{L^2}^2\leq C_t,
\end{align*}
and similarly, $\norm{x^{\bm{\alpha}}(\id\otimes p(t))\tilde{K}_2(t,\cdot,\cdot)}_{L^2}^2\leq C_t$ and $\norm{x^{\bm{\alpha}}(p(t)\otimes p(t))\tilde{K}_2(t,\cdot,\cdot)}_{L^2}^2\leq C_t$. Hence, we obtain
\begin{equation}\label{eq:K2tildeK2}
	\norm{x^{\bm{\alpha}}K_2(t,\cdot,\cdot)-x^{\bm{\alpha}}\tilde{K}_2(t,\cdot,\cdot)}_{L^2}=\norm{x^{\bm{\alpha}}(\id-q(t)\otimes q(t))\tilde{K}_2(t,\cdot,\cdot)}_{L^2}\leq C_t. 
\end{equation}
Since $(1-\Delta_x)^{-1}\leq\id$ and $(2L-\Delta_x-\Delta_y)^{-2r}\leq (2L)^{-2r}$, this bound implies
\[
	\norm{(1-\Delta_x)^{-1/2}K_2(t,\cdot,\cdot)-(1-\Delta_x)^{-1/2}\tilde{K}_2(t,\cdot,\cdot)}_{L^2}\leq C_t,
\]
where we used \eqref{eq:K2tildeK2} with $\bm{\alpha}=(0,0)$, and
\begin{multline*}
	\norm{(2L-\Delta_x-\Delta_y)^{-r}x^{\bm{\alpha}}K_2(t,\cdot,\cdot)-(2L-\Delta_x-\Delta_y)^{-r}x^{\bm{\alpha}}\tilde{K}_2(t,\cdot,\cdot)}_{L^2}^2 \\
	\leq (2L)^{-2r}\norm{x^{\bm{\alpha}}K_2(t,\cdot,\cdot)-x^{\bm{\alpha}}\tilde{K}_2(t,\cdot,\cdot)}_{L^2}^2 
	\leq C_t L^{-2r}. 
\end{multline*}
It remains to bound $\norm{(1-\Delta_x)^{-1/2}\tilde{K}_2(t,\cdot,\cdot)}_{L^2}$ and $\norm{(2L-\Delta_x-\Delta_y)^{-r}x^{\bm{\alpha}}\tilde{K}_2(t,\cdot,\cdot)}_{L^2}$. 
To simplify the notation, let us define $f_{\bm{\alpha}}(t,x,y)=x^{\bm{\alpha}}\tilde{K}_2(t,x,y)$ and $f=f_{\bm{\alpha}}$ if $\bm{\alpha}=(0,0)$. As in the proof of \cite[Lemma~8]{NamNap-19}, we have the following formula for the Fourier transform of $f_{\bm{\alpha}}$:
\[
	\widehat{f_{\bm{\alpha}}}(t,p,q)=\int w_N(z)\widehat{\bigl((x^{\bm{\alpha}}u_{N})_zu_N\bigr)}(t,p+q)\ee^{-2\pi\ii p\cdot z}\diff z,
\]
where $(x^{\bm{\alpha}}u_{N})_z(t,x)=(z+x)^{\bm{\alpha}}u_N(t,z+x)$. By Lemma~\ref{lem:Hregularity}, we obtain
\begin{align*}
	\int_{\R^2} \abs{\widehat{f_{\bm{\alpha}}}(t,p,q)}^2\diff q &\leq \norm{w_N}_{L^1}^2\sup_{z\in\R^2}\norm[\big]{\bigl((x^{\bm{\alpha}}u_{N})_zu_N\bigr)(t,\cdot)}^2_{L^2} \\
	&\leq \norm{w}_{L^1}^2\norm{u_N(t)}_{\Sigma^2}^2\norm{u_N(t)}_{L^\infty}^2 
	\leq C_t
\end{align*}
for every $p$. 
Using this bound, the Plancherel theorem, and the sets $\mathcal{L}_N=\{p:\abs{p}\leq N^\beta\}\subset\R^2$ and $\mathcal{H}_N=\R^2\setminus\mathcal{L}_N=\{p:\abs{p}>N^\beta\}$, we compute 
\begin{align*}
	\MoveEqLeft[3] \norm{(1-\Delta_x)^{-1/2}f(t,\cdot,\cdot)}_{L^2}^2=\iint(1+\abs{2\pi p}^2)^{-1}\abs{\widehat{f}(t,p,q)}^2\diff p\diff q \\
	&=\int_{\mathcal{L}_N}\biggl(\int_{\R^2} \abs{\widehat{f}(t,p,q)}^2\diff q\biggr)(1+\abs{2\pi p}^2)^{-1}\diff p \\
	&\quad+\int_{\R^2}\biggl(\int_{\mathcal{H}_N} (1+\abs{2\pi p}^2)^{-1}\abs{\widehat{f}(t,p,q)}^2\diff p\biggr)\diff q \\
	&\leq C_t\int_{\mathcal{L}_N}(1+\abs{2\pi p}^2)^{-1}\diff p + N^{-2\beta}\iint\abs{\widehat{f}(t,p,q)}^2\diff p\diff q \\
	&\leq C_t\int_0^{N^\beta}\frac{s}{1+s^2}\diff s + N^{-2\beta} \norm{f(t,\cdot,\cdot)}_{L^2}^2 \\
	&\leq C_t\beta\log(N) + N^{-2\beta}\norm{u_N(t)}_{L^2}^2\norm{w_N}_{L^2}^2\norm{u_N(t)}_{L^\infty}^2 \leq C_t\beta\log(N) + C_t. 
\end{align*}
This completes the proof of \eqref{eq:sqrtdeltaK2bound}. 
Similarly, we obtain 
 \begin{align*}
	\norm{(2L-\Delta_x-\Delta_y)^{-r}f_{\bm{\alpha}}(t,\cdot,\cdot)}_{L^2}^2
	&\leq\int\biggl(\int\abs{\widehat{f_{\bm{\alpha}}}(t,p,q)}^2\diff q\biggr)(2L+\abs{2\pi p}^2)^{-2r}\diff p \\
	&\leq C_t L^{1-2r}\int(2+\abs{2\pi p}^2)^{-2r}\diff p 
	\leq C_t L^{1-2r},
\end{align*}
which completes the proof of \eqref{eq:deltaxK2bound}. 

\paragraph{Proof of \eqref{eq:kperp:infty2bound}.}

To prove this bound, we decompose 
\[
	k_\perp(t) = k(t) - p(t)\otimes\id k(t) - q(t)\otimes p(t) k(t)
\]
and use
\begin{align*}
	\norm{p(t)\otimes\id k(t)}_{L^\infty_{1}L^2_{2}} &\leq \norm{u_N(t)}_{L^\infty}\norm{u_N(t)}_{L^2}\norm{k(t)}_{L^2}\leq C_t, \\
	\norm{q(t)\otimes p(t) k(t)}_{L^\infty_{1}L^2_{2}} &\leq \norm{k(t)}_{L^\infty_{1}L^2_{2}} + \norm{u_N(t)}_{L^\infty}\norm{u_N(t)}_{L^2}\norm{k(t)}_{L^2} \\
	&\leq \norm{k(t)}_{L^\infty_{1}L^2_{2}} + C_t. 
\end{align*}
Thus, it suffices to bound $\norm{k(t)}_{L^\infty_{1}L^2_{2}}$. We use the integral representation for the resolvent $(2L-\Delta_x-\Delta_y)^{-1}$ (see e.g.\ \cite[Chapter~II]{EngNag-00})
\begin{align*}
	k(t,x,y) &= (2L-\Delta_x-\Delta_y)^{-1}K_2(t,x,y) \\
	&= \int_0^\infty\int_{\R^2}\int_{\R^2} \ee^{-2Ls} F(s,x-x') F(s,y-y') K_2(t,x',y')\diff x'\diff y' \diff s,
\end{align*}
where we denoted the heat kernel by 
\[
	F(s,x)=(4\pi s)^{-1} \ee^{-\abs{x}^2/4s}. 
\]
We further decompose $K_2 = \tilde{K}_2 + (K_2-\tilde{K}_2)$ and, using Minkowski's integral inequality and Young's inequality, estimate
\begin{align*}
	\norm{k(t,x,\cdot)}_{L^2} &\leq \int_0^\infty\int_{\R^2} \ee^{-2Ls} F(s,x-x') \norm{F(s,\cdot)\ast K_2(t,x',\cdot)}_{L^2}\diff x'\diff s \\
	&\leq \int_0^\infty\ee^{-2Ls} \norm{F(s)}_{L^1} \Bigl( \norm{F(s)}_{L^2} \norm{\tilde{K}_2(t)}_{L^\infty_1L^1_2} \\
	&\quad + \norm{F(s)}_{L^1} \norm{K_2(t)-\tilde{K}_2(t)}_{L^\infty_1L^2_2}\Bigr)\diff s \\
	&= \int_0^\infty\ee^{-2Ls}(8\pi s)^{-1/2} \norm{\tilde{K}_2(t)}_{L^\infty_1L^1_2} \diff s + \int_0^\infty\ee^{-2Ls} \norm{K_2(t)-\tilde{K}_2(t)}_{L^\infty_1L^2_2}\diff s \\
	&\leq C (\norm{\tilde{K}_2(t)}_{L^\infty_1L^1_2} + \norm{K_2(t)-\tilde{K}_2(t)}_{L^\infty_1L^2_2}). 
\end{align*}
The term $\norm{K_2(t)-\tilde{K}_2(t)}_{L^\infty_1L^2_2}$ can be bounded by some $C_t$ similarly to \eqref{eq:K2tildeK2}. To bound $\norm{\tilde{K}_2(t)}_{L^\infty_1L^1_2}$, we calculate
\begin{equation*}
	\norm{\tilde{K}_2(t,x,\cdot)}_{L^1} = \int_{\R^2} \abs{u_N(t,x)} \abs{w_N(x-y)} \abs{u_N(t,y)} \diff y \leq \norm{u_N(t)}_{L^\infty}^2 \norm{w_N}_{L^1}\leq C_t. 
\end{equation*}
This completes the proof of \eqref{eq:kperp:infty2bound}.
\end{proof}

\begin{proof}[Proof of Lemma~\ref{lem:bogdyn}]
The well-posedness in Lemma~\ref{lem:bogdyn} follows from a straightforward extension of \cite[Theorem~7]{LewNamSch-15} using the properties \ref{assu:Vd} and \ref{assu:Ve} of $V$, the bound $w^2\leq C(1-\Delta)$, and the bounds on $u_N$ from Lemma~\ref{lem:Hregularity}.

Estimates \eqref{eq:kinetic:bog:poly} and \eqref{eq:kinetic:bog:exp} in \ref{item:kinetic:bog} follow from Lemma~\ref{lem:bogbounds} and a Gr{\"o}nwall argument similar to the proof of \cite[Lemma~5]{NamNap-19}. 

Finally, let us come to the proof of \ref{item:number:bog}. To prove \eqref{eq:number:bog}, we want to use a Gr{\"o}nwall argument. Fix $T>0$ and let $0\leq t\leq T$. We compute
\begin{align*}
	\frac{\diff}{\diff t}\langle \Phi(t),\numop\Phi(t)\rangle &= -\Im \langle\Phi(t),[\BH(t),\numop]\Phi(t)\rangle \\
	&= 2\Im\langle\Phi(t),\iint K_2(t,x,y)a^\ast_xa^\ast_y \diff x\diff y \Phi(t)\rangle. 
\end{align*}
The naive bound
\begin{align*}
	\MoveEqLeft[3] \Im\langle\Phi(t),\iint K_2(t,x,y)a^\ast_xa^\ast_y \diff x\diff y \Phi(t)\rangle \\ 
	&\leq \norm{(1-\Delta_x)^{-1/2}K_2(t,\cdot,\cdot)}_{L^2}\langle\Phi(t),(\numop+1)\Phi(t)\rangle^{1/2}\langle\Phi(t),\sq(1-\Delta) \Phi(t)\rangle^{1/2} \\
	&\leq \langle\Phi(t),(\numop+1)\Phi(t)\rangle + \norm{(1-\Delta_x)^{-1/2}K_2(t,\cdot,\cdot)}_{L^2}^2\langle\Phi(t),\sq(1-\Delta) \Phi(t)\rangle
\end{align*}
is not sufficient and does not take into account that we only want to bound the imaginary part. Hence, we need to replace the $K_2$-term
\[
	\iint K_2(t,x,y)a^\ast_xa^\ast_y \diff x\diff y
\]
by something we can control better. For this purpose, we define 
\begin{equation}\label{eq:defn:k}
	k(t,x,y) = (2L-\Delta_x-\Delta_y)^{-1}K_2(t,x,y)
\end{equation}
for some $L\coloneqq L_T\geq1$ to be determined later. From \eqref{eq:deltaxK2bound} in Lemma~\ref{lem:K2bounds}, we obtain the bounds
\begin{equation}\label{eq:k:L2:bound}
	\norm{k(t)}_{L^2} \leq C_t L^{-1/2}, \qquad \norm{\partial_tk(t)}_{L^2} \leq C_t L^{-1/2}\leq C_t. 
\end{equation}
The proof consists of two steps: In the first step, we replace the $K_2$-term. In the second step, we close the Gr{\"o}nwall argument. 

\paragraph{Step 1: Replacing the $K_2$-term.}

To find a different expression for the $K_2$-term, we want to compute the commutator
\[
	[\BH(t), \iint k(t,x,y)a^\ast_xa^\ast_y \diff x\diff y]. 
\]
We write
\[
	\BH(t) = \sq(L-\Delta) + \sq(A) + \frac{1}{2}\biggl(\iint K_2(t,x,y)a^\ast_xa^\ast_y \diff x\diff y + \hc \biggr),
\]
where $A = A_1 + A_2$ with $A_1 = h(t)+K_1(t)-(L-\Delta+V_t)$ and $A_2 = V_t$. 
We have 
\begin{equation}\label{eq:kinetic:commutator}
\begin{aligned}
	\MoveEqLeft[3] [\sq(L-\Delta), \iint k(t,x,y)a^\ast_xa^\ast_y \diff x\diff y] \\
	&= \iint \bigl( (L-\Delta_x) + (L-\Delta_y)\bigr)k(t,x,y)a^\ast_xa^\ast_y \diff x\diff y \\
	&= \iint K_2(t,x,y)a^\ast_xa^\ast_y \diff x\diff y
\end{aligned}
\end{equation}
and similarly, since $k(t,x,y)$ is symmetric in $x$ and $y$,
\begin{equation}\label{eq:A:commutator}
	[\sq(A), \iint k(t,x,y)a^\ast_xa^\ast_y \diff x\diff y] = 2\iint (A_xk)(t,x,y)a^\ast_xa^\ast_y \diff x\diff y. 
\end{equation}
It is well known that for an operator of the form 
\[
	\mathbb{K}_2 = \iint K(x,y)a^\ast_xa^\ast_y \diff x\diff y
\]
we have the bound
\begin{equation}\label{eq:offd:bound}
\abs{\langle\xi',\mathbb{K}_2\xi\rangle} \leq C\norm{K}_{L^2}\langle\xi',(\numop+1)\xi'\rangle^{1/2} \langle\xi,(\numop+1)\xi\rangle^{1/2}
\end{equation}
for $\xi',\xi\in\mathcal{F}_\perp(t)$. 
Hence, we can estimate for $\xi\in\mathcal{F}_\perp(t)$
\begin{equation*}
	\abs{\langle\xi,\iint (A_xk)(t,x,y)a^\ast_xa^\ast_y \diff x\diff y \xi\rangle} \leq C\norm{(A_xk)(t)}_{L^2} \langle \xi,(\numop+1)\xi\rangle. 
\end{equation*}
Using \eqref{eq:k:L2:bound}, it is easy to check that
\begin{align*}
	\norm{(A_{1,x}k)(t)}_{L^2} &\leq \norm{A_1}_\textnormal{op}\norm{k(t)}_{L^2} \\
	&\leq \bigl(\norm{w_N\ast\abs{u_N(t)}^2}_{L^\infty} + \abs{\mu_N(t)} + \norm{K_1(t)}_{\textnormal{op}} + L\bigr) \norm{k(t)}_{L^2} \leq C_tL^{1/2}. 
\end{align*}
It remains to treat $A_2 = V_t$. By Assumption~\ref{assu:V} \ref{assu:Vc}, $\abs{V(t,x)}\leq C(\abs{x}^2+1)$. Hence, we have
\[
	\norm{V_t(x)k(t,x,y)}_{L^2} \leq C(\norm{\abs{x}^2k(t,x,y)}_{L^2} + \norm{k(t)}_{L^2}). 
\]
To bound $\norm{\abs{x}^2k(t,x,y)}_{L^2}$, we compute 
\begin{align*}
	\MoveEqLeft[3] \abs{x}^2k(t,x,y)=\abs{x}^2(2L-\Delta_x-\Delta_y)^{-1}K_2(t,x,y) \\
	&=4(2L-\Delta_x-\Delta_y)^{-2}K_2(t,x,y) + 8\Delta_x(2L-\Delta_x-\Delta_y)^{-3}K_2(t,x,y) \\
	&\quad - 4\nabla_x\cdot(2L-\Delta_x-\Delta_y)^{-2}xK_2(t,x,y) + (2L-\Delta_x-\Delta_y)^{-1}\abs{x}^2K_2(t,x,y) \\
	&\eqqcolon k_1(t,x,y) + k_2(t,x,y) + k_3(t,x,y) + k_4(t,x,y), 
\end{align*}
where we repeatedly used
\begin{equation*}
	[x_j,(2L-\Delta_x-\Delta_y)^{-1}] 
	= -2\partial_{x_j}(2L-\Delta_x-\Delta_y)^{-2}
\end{equation*}
and $[x_j,\partial_{x_j}]=-1$, $j=1,2$. From \eqref{eq:deltaxK2bound} in Lemma~\ref{lem:K2bounds}, we obtain $\norm{k_j(t)}_{L^2}\leq C_t$, $j=1,2,3,4$.
Thus, we have $\norm{V_t(x)k(t,x,y)}_{L^2}\leq C_t$. In conclusion,
\[
	\norm{(A_{x}k)(t)}_{L^2} \leq C_t L^{1/2}. 
\]
Using the fact that $K_2(t)$ and $k(t)$ are symmetric, we obtain
\begin{equation}\label{eq:K2:commutator}
\begin{aligned}
\MoveEqLeft[3] \biggl[\biggl(\iint K_2(t,x,y)a^\ast_xa^\ast_y \diff x\diff y + \hc\biggr),\iint k(t,x',y')a^\ast_{x'}a^\ast_{y'} \diff x' \diff y'\biggr] \\
	&=\biggl[\iint \overline{K_2(t,x,y)}a_xa_y \diff x\diff y, \iint k(t,x',y')a^\ast_{x'}a^\ast_{y'} \diff x' \diff y'\biggr] \\
	&= 2\langle K_2(t),k(t)\rangle + 4\sq(B(t)),
\end{aligned}
\end{equation}
where $B(t)$ is the operator with kernel 
\begin{equation}\label{eq:defn:B}
	B(t,x,y)=\int k(t,x,x')\overline{K_2(t,x',y)} \diff x'. 
\end{equation}
Collecting \eqref{eq:kinetic:commutator}, \eqref{eq:A:commutator}, and \eqref{eq:K2:commutator}, we obtain
\begin{multline*}
	[\BH(t), \iint k(t,x,y)a^\ast_xa^\ast_y \diff x\diff y] = \iint K_2(t,x,y)a^\ast_xa^\ast_y \diff x\diff y \\
	+ 2\iint (A_xk)(t,x,y)a^\ast_xa^\ast_y \diff x\diff y + \langle K_2(t),k(t)\rangle + 2\sq(B(t)).
\end{multline*}
Therefore, we have
\begin{align*}
	\MoveEqLeft[3] \frac{\diff}{\diff t}\langle \Phi(t),\numop\Phi(t)\rangle=2\Im\langle\Phi(t),\iint K_2(t,x,y)a^\ast_xa^\ast_y \diff x\diff y \Phi(t)\rangle \\
	&=2\Im\langle\Phi(t),[\BH(t), \iint k(t,x,y)a^\ast_xa^\ast_y \diff x\diff y] \Phi(t)\rangle \\
	&\quad - 4\Im\langle\Phi(t),\iint (A_xk)(t,x,y)a^\ast_xa^\ast_y \diff x\diff y \Phi(t)\rangle - 4\Im\langle\Phi(t),\sq(B(t))\Phi(t)\rangle \\
	&=-2\frac{\diff}{\diff t}\Re\langle\Phi(t),\iint k(t,x,y)a^\ast_xa^\ast_y \diff x\diff y \Phi(t)\rangle \\
	&\quad + 2\Re\langle\Phi(t),\iint \partial_tk(t,x,y)a^\ast_xa^\ast_y \diff x\diff y \Phi(t)\rangle \\
	&\quad - 4\Im\langle\Phi(t),\iint (A_xk)(t,x,y)a^\ast_xa^\ast_y \diff x\diff y \Phi(t)\rangle - 4\Im\langle\Phi(t),\sq(B(t))\Phi(t)\rangle,
\end{align*}
where we used that $\langle K_2(t),k(t)\rangle\in\R$ since
\[
	\langle K_2(t),k(t)\rangle = \langle K_2(t),(2L-\Delta_x-\Delta_y)^{-1}K_2(t)\rangle \geq 0. 
\]

\paragraph{Step 2: Closing the Gr{\"o}nwall argument.}

Using \eqref{eq:offd:bound} and $\abs{\langle\xi,\sq(B(t))\xi\rangle}\leq\norm{B(t)}_\textnormal{op}\langle\xi,\numop\xi\rangle$ for $\xi\in\mathcal{F}_\perp(t)$, we find
\begin{align*}
	\MoveEqLeft[3] \frac{\diff}{\diff t}\biggl( \langle \Phi(t),\numop\Phi(t)\rangle + 2\Re\langle\Phi(t),\iint k(t,x,y)a^\ast_xa^\ast_y \diff x\diff y \Phi(t)\rangle \biggr) \\
	&\leq C(\norm{\partial_tk(t)}_{L^2} + \norm{(A_xk)(t)}_{L^2} + \norm{B(t)}_{\textnormal{op}})\langle \Phi(t),(\numop+1)\Phi(t)\rangle \\
	&\leq C_tL^{1/2}\langle \Phi(t),(\numop+1)\Phi(t)\rangle, 
\end{align*}
where we used the bounds $\norm{\partial_tk(t)}_{L^2} \leq C_t$ and $\norm{(A_xk)(t)}_{L^2}\leq C_tL^{1/2}$, which we proved before, and the simple bound $\norm{B(t)}_{\textnormal{op}}\leq\norm{K_2(t)}_{\textnormal{op}}\norm{k(t)}_{\textnormal{op}}\leq C_t$. 
Integrating this inequality and using \eqref{eq:thm:norm:initial}, \eqref{eq:offd:bound}, and \eqref{eq:k:L2:bound}, we obtain
\begin{align*}
	\langle \Phi(t),\numop\Phi(t)\rangle &\leq - 2\Re\langle\Phi(t),\iint k(t,x,y)a^\ast_xa^\ast_y \diff x\diff y \Phi(t)\rangle \\
	&\quad + \langle \Phi(0),\numop\Phi(0)\rangle + 2\Re\langle\Phi(0),\iint k(0,x,y)a^\ast_xa^\ast_y \diff x\diff y \Phi(0)\rangle \\
	&\quad + \int_0^tC_{t'}L^{1/2}\langle \Phi(t'),(\numop+1)\Phi(t')\rangle\diff t' \\
	&\leq C_TL^{-1/2}\langle \Phi(t),(\numop+1)\Phi(t)\rangle + C_t + \int_0^tC_{t'}L^{1/2}\langle \Phi(t'),(\numop+1)\Phi(t')\rangle\diff t',
\end{align*}
where $C_T=\max_{0\leq t\leq T}C_t$. Choosing $L=4C_T^2$, this implies
\[
	\langle \Phi(t),\numop\Phi(t)\rangle \leq C_T + C_T \int_0^t\langle \Phi(t'),\numop\Phi(t')\rangle\diff t'. 
\] 
Using Gr{\"o}nwall's lemma, we obtain
\[
	\langle \Phi(t),\numop\Phi(t)\rangle \leq C_t
\]
for $0\leq t\leq T$. Since $T>0$ was chosen arbitrarily, this completes the proof of \eqref{eq:number:bog}. 
\end{proof}

\section{Bogoliubov's approximation}\label{sec:app}

Following the strategy in \cite{LewNamSch-15,LewNamSerSol-15}, we use the unitary map $U_N(t)\colon\mathfrak{H}^N\to\mathcal{F}_\perp^{\leq N}(t)$ defined by
\[
	U_N(t)=\bigoplus_{k=0}^N q(t)^{\otimes k}\biggl(\frac{a(u_N(t))^{N-k}}{\sqrt{(N-k)!}}\biggr),
\]
where $q(t)=\id-\ketbra{u_N(t)}{u_N(t)}$, to factor out the condensate $u_N(t)$ defined by the Hartree equation \eqref{eq:Hartree}. 
The action of $U_N(t)$ can be computed explicitly using the following identities for $f,g\in\mathfrak{H}_\perp$, which are proved in \cite{LewNamSerSol-15}:
\begin{equation}\label{eq:Uid}
\begin{aligned}
	U_N(t)a^\ast(u_N(t))a(u_N(t))U_N(t)^\ast &=N-\numop, \\
	U_N(t)a^\ast(f)a(u_N(t))U_N(t)^\ast &=a^\ast(f)\sqrt{N-\numop}, \\
	U_N(t)a^\ast(u_N(t))a(g)U_N(t)^\ast &=\sqrt{N-\numop}a(g), \\
	U_N(t)a^\ast(f)a(g)U_N(t)^\ast &=a^\ast(f)a(g).
\end{aligned}
\end{equation}
In the following, we consider the transformed dynamics 
\begin{equation}\label{eq:trafodyn}
	\Phi_N(t)=U_N(t)\Psi_N(t).
\end{equation}
The Schr{\"o}dinger equation \eqref{eq:linearschr} can be written equivalently as
\[
	\begin{cases}
		\ii\partial_t\Phi_N(t)=\mathcal{G}_N(t)\Phi_N(t), \\
		\Phi_N(0)=U_N(0)\Psi_{N,0},
	\end{cases}
\]
where the generator $\mathcal{G}_N(t)$ can be computed explicitly using the identities \eqref{eq:Uid} and the second-quantized form of $H_{N,t}$ \eqref{eq:Hsq}.
For $m\in\N$, we denote the projection onto the truncated Fock space $\mathcal{F}_\perp^{\leq m}(t)$ by $\ind^{\leq m}$. 
Similarly to the calculation in \cite[Appendix~B]{LewNamSch-15}, one can check that $\mathcal{G}_N(t)$ satisfies
\[
	\mathcal{G}_N(t)=(\ii\partial_t U_N(t))U_N(t)^\ast +U_N(t)H_{N,t}U_N(t)^\ast =\ind^{\leq N}\bigl(\BH(t)+\mathcal{E}_N(t)\bigr)\ind^{\leq N},
\]
where 
\[
	\mathcal{E}_N(t)=\frac12\sum_{j=0}^4(R_j+R_j^\ast),
\]
\begin{align}
R_{0}&=R_0^\ast= \sq\bigl(q(t)[w_N\ast\abs{u_N(t)}^2+ \tilde{K}_1(t) -\mu_N(t)]q(t)\bigr)\frac{1-\numop}{N-1}, \notag \\
R_{1}&=-2\frac{\numop\sqrt{N-\numop}}{N-1} a(q(t)[w_N\ast\abs{u_N(t)}^2]u_N(t)), \notag \\
R_{2}&= \iint K_2(t,x,y) a^\ast_x a^\ast_y \diff x \diff y \biggl(\frac{\sqrt{(N-\numop)(N-\numop-1)}}{N-1}-1\biggr), \notag \\
R_{3}& =2\frac{\sqrt{N-\numop}}{N-1}\iiiint( \id \otimes q(t) w_N q(t)\otimes q(t))(x,y;x',y')\overline{u_N(t,x)} a^\ast_y a_{x'} a_{y'} \diff x \diff y \diff x' \diff y', \notag \\
R_{4}&=R_4^\ast= \frac{1}{N-1}\sq_2(q(t)\otimes q(t)w_N q(t)\otimes q(t)). \label{eq:defn:Rj}
\end{align}

To prove bounds on the error term $\mathcal{E}_N(t)$, we need the following operator estimate on $\abs{w_N(x-y)}$, which is an improvement of the bound
\[
	\abs{w_N(x-y)}\leq C_s\norm{w_N}_{L^s}(1-\Delta_x), \qquad s>1,
\]
which follows from the standard Sobolev inequality. 
\begin{lem}\label{lem:MT}
Let $w_N$ satisfy Assumption~\ref{assu:wN}. Then, we have the following operator estimate on $L^2((\R^2)^2)$
\begin{equation}\label{eq:wN:MTineq}
	\abs{w_N(x-y)}\leq C\beta\log(N)(1-\Delta_x)
\end{equation}
if $w_N$ satisfies \eqref{eq:poly} for $\beta>0$. If $w_N$ satisfies \eqref{eq:exp} for $0<\kappa<1$, $\beta\log(N)$ is replaced by $N^\kappa$. 
\end{lem}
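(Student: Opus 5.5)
We reduce the two-body estimate to a one-body statement. For fixed $y$, the operator $|w_N(x-y)|$ acts only in the $x$ variable, and $1-\Delta_x$ is translation invariant. It therefore suffices to show, for every $f\in H^1(\R^2)$,
\[
	\int_{\R^2}\abs{w_N(x)}\abs{f(x)}^2\diff x\leq C\beta\log(N)\bigl(\norm{f}_{L^2}^2+\norm{\nabla f}_{L^2}^2\bigr).
\]
Applying this to $f=\Psi(\cdot+y,y)$ for each $y$ and integrating over $y$ gives \eqref{eq:wN:MTineq}. The exponential case follows by substituting $\beta=N^\kappa/\log(N)$, exactly as in Lemma~\ref{lem:K2bounds}.

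The one-body bound comes from Hölder's inequality in Orlicz-space form. The Young-type inequality $ab\leq \ee^{a}-1+b\log b-b+1$ (Fenchel duality between $\ee^s$ and $s\log s$) is the right tool. Normalize $\norm{f}_{H^1}=1$ and pick a small constant $\lambda>0$. Apply the inequality with $a=\lambda\abs{f}^2$ and $b=\abs{w_N}/(\lambda\Lambda)$, where $\Lambda>0$ is a parameter. This gives
\[
	\int\abs{w_N}\abs{f}^2\leq \Lambda\int\bigl(\ee^{\lambda\abs{f}^2}-1\bigr)+\frac{1}{\lambda}\int\abs{w_N}\log_+\!\frac{\abs{w_N}}{\lambda\Lambda}.
\]
The first integral is bounded by a constant $C$ for $\lambda$ small. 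This is the Moser--Trudinger (Ruf/Adachi--Tanaka) inequality on $\R^2$ in the form $\int(\ee^{4\pi\abs{f}^2}-1)\leq C$ for $\norm{f}_{H^1}\leq1$. We choose $\Lambda=1$ and estimate the second term by scaling. Since $\abs{w_N(x)}=N^{2\beta}\abs{w(N^\beta x)}$,
\[
	\int\abs{w_N}\log_+\frac{\abs{w_N}}{\lambda}\leq \int\abs{w}\bigl(2\beta\log N+\log_+(\abs{w}/\lambda)\bigr)\leq 2\beta\norm{w}_{L^1}\log N+C\int\abs{w}\log_+\abs{w}+C.
\]

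The main obstacle is that the $L\log L$ quantity must be finite. It is controlled by $\int\abs{w}\log_+\abs{w}\leq\int_{\abs{w}\geq1}\abs{w}^2\leq\norm{w}_{L^2}^2$, which holds because $w\in L^2$ under Assumption~\ref{assu:wN}. Combining the pieces gives $\int\abs{w_N}\abs{f}^2\leq C(1+\beta\log N)\norm{f}_{H^1}^2$. For $N$ large (so $\beta\log N\geq c>0$), this yields \eqref{eq:wN:MTineq}. The constant in the Moser--Trudinger inequality is independent of $N$, so all constants are uniform.
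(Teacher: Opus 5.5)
Your argument is correct and is essentially the paper's proof: the same fiberwise reduction in $y$ via translation invariance, the Fenchel inequality $ab\le(\ee^a-1)+(1-b+b\log b)$ combined with Moser--Trudinger, and the scaling bound $\int\abs{w_N}\log_+(\abs{w_N}/\lambda)\le 2\beta\log(N)\norm{w}_{L^1}+C$ via $\abs{w}\log\abs{w}\le\abs{w}^2$. The one step you pass over is the passage to your displayed inequality. The pointwise inequality cannot simply be integrated over $\R^2$, since the remainder $1-b+b\log b$ equals $1$ wherever $w_N=0$. As the paper does with its sets $\mathcal{L}_y,\mathcal{H}_y$, apply it only on $\{\abs{w_N}\ge\lambda\Lambda\}$, where $1-b\le0$, and on the complement use $\abs{w_N}\abs{f}^2\le\lambda\Lambda\abs{f}^2\le\Lambda(\ee^{\lambda\abs{f}^2}-1)$. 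This yields exactly your display; the paper instead keeps the low-set contribution as a separate $\gamma\norm{f}_{L^2}^2$ term.
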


\begin{proof}
We focus on the case where $w_N$ satisfies \eqref{eq:poly}, since the case where $w_N$ satisfies \eqref{eq:exp} follows by setting $\beta=N^\kappa/\log(N)$. 

We will use the inequality
\begin{equation}\label{eq:legendre}
	ab\leq (\ee^a-1)+(1-b+b\log(b)), \qquad a,b>0, 
\end{equation}
which follows from computing the Legendre transformation of $a\mapsto\ee^a-1$. 
Recall the Moser--Trudinger inequality (see e.g.\ \cite[Theorem~1.67]{BahCheDan-11})
\begin{equation}\label{eq:MT}
	\int_{\R^2}\Biggl(\exp\biggl(\gamma\biggl(\frac{\abs{g(x)}}{\norm{g}_{H^1}}\biggr)^2\biggr)-1\Biggr)\diff x\leq K, \qquad g\in H^1(\R^2),
\end{equation}
where $\gamma, K>0$ are some fixed constants. 
Let $f\in H^1((\R^2)^2)$ and define
\[
	\tilde{f}_y(x)=\frac{f(x,y)}{\norm{f(\cdot,y)}_{H^1(\R^2)}}. 
\]
We can write
\begin{equation*}
	\iint\abs{w_N(x-y)}\abs{f(x,y)}^2\diff x\diff y=\int\norm{f(\cdot,y)}_{H^1}^2\biggl(\int \abs{w_N(x-y)}\abs{\tilde{f}_y(x)}^2\diff x\biggr)\diff y.
\end{equation*}
For fixed $y$, we denote $\mathcal{L}_y=\{x:\abs{w_N(x-y)}<\gamma\}\subset\R^2$ and $\mathcal{H}_y=\{x:\abs{w_N(x-y)}\geq\gamma\}\subset\R^2$. Using \eqref{eq:legendre} and \eqref{eq:MT}, we compute
\begin{align*}
	\MoveEqLeft[3] \int\abs{w_N(x-y)}\abs{\tilde{f}_y(x)}^2 \diff x =
	\int_{\mathcal{L}_y} \abs{w_N(x-y)}\abs{\tilde{f}_y(x)}^2 \diff x + \int_{\mathcal{H}_y} \abs{w_N(x-y)}\abs{\tilde{f}_y(x)}^2 \diff x \\
	&\leq \gamma\int_{\mathcal{L}_y} \abs{\tilde{f}_y(x)}^2 \diff x + \int_{\mathcal{H}_y} (\ee^{\gamma\abs{\tilde{f}_y(x)}^2}-1) \diff x \\
	&\quad + \int_{\mathcal{H}_y} \bigl(1-\gamma^{-1}\abs{w_N(x-y)}+\gamma^{-1}\abs{w_N(x-y)}\log(\gamma^{-1}\abs{w_N(x-y)})\bigr)\diff x \\
	&\leq \gamma\int \abs{\tilde{f}_y(x)}^2 \diff x + K + \int_{\mathcal{H}_0} \bigl(\gamma^{-1}\abs{w_N(x)}\log(\gamma^{-1}\abs{w_N(x)})\bigr)\diff x \\
	&\leq \gamma\int \abs{\tilde{f}_y(x)}^2 \diff x + C + C\int_{\mathcal{H}_0} N^{2\beta}\abs{w(N^\beta x)}\bigl(\log(N^{2\beta})+\log(\abs{w(N^\beta x)})\bigr)\diff x \\
	&\leq \gamma\int \abs{\tilde{f}_y(x)}^2 \diff x + C + C\beta\log(N)\int_{\R^2}\abs{w(x)}\diff x+C\int_{\R^2}\abs{w(x)}^2\diff x\\
	&\leq \gamma\int \abs{\tilde{f}_y(x)}^2 \diff x + C\beta\log(N).
\end{align*}
Here we used $\abs{w}\log(\abs{w})\leq\abs{w}^2$ and $w\in L^1(\R^2)\cap L^2(\R^2)$. Therefore, we obtain
\begin{align*}
	\iint\abs{w_N(x-y)}\abs{f(x,y)}^2\diff x\diff y &\leq \int\norm{f(\cdot,y)}_{H^1}^2\biggl(\gamma\int \abs{\tilde{f}_y(x)}^2 \diff x + C\beta\log(N)\biggr)\diff y \\
	&\leq \gamma\norm{f}_{L^2}^2 + C\beta\log(N)\norm{f}_{L^2_yH^1_x}^2. 
\end{align*}
This implies \eqref{eq:wN:MTineq}. 
\end{proof}

The following lemma is an improvement of \cite[Lemma~9]{NamNap-19} and justifies why the error terms are small (on sectors of few particles). 
\begin{lem}\label{lem:error-bounds}
For all $1\leq m\leq N$ and $\xi',\xi\in\mathcal{F}_\perp(t)$, we have 
\begin{align}
	\abs{\langle\xi',\ind^{\leq m}R_4\ind^{\leq m}\xi\rangle} &\leq \frac{C\beta\log(N)m}{N}\norm{\sq(1-\Delta)^{1/2}\xi'}\norm{\sq(1-\Delta)^{1/2}\xi}, \label{eq:R4bound} \\
	\abs{\langle\xi',\ind^{\leq m}(\mathcal{E}_N(t)-R_4)\ind^{\leq m}\xi\rangle} &\leq C_t\sqrt{\frac{\beta\log(N)m}{N}}\Bigl(\norm{(\numop+1)^{1/2}\xi'}\norm{\sq(1-\Delta)^{1/2}\xi} \notag \\
	&\mathmakebox[\widthof{$\leq C_t\displaystyle\sqrt{\frac{\beta\log(N)m}{N}}\Bigr(\;$}][r]{+\:} \norm{\sq(1-\Delta)^{1/2}\xi'}\norm{(\numop+1)^{1/2}\xi}\Bigr), \label{eq:eb1} \\ 	
	\abs{\langle\xi',\ind^{\leq m}\partial_t\mathcal{E}_N(t)\ind^{\leq m}\xi\rangle} &\leq C_t\sqrt{\frac{\beta\log(N)m}{N}}\Bigl(\norm{(\numop+1)^{1/2}\xi'}\norm{\sq(1-\Delta)^{1/2}\xi} \notag \\
	&\mathmakebox[\widthof{$\leq C_t\displaystyle\sqrt{\frac{\beta\log(N)m}{N}}\Bigr(\;$}][r]{+\:} \norm{\sq(1-\Delta)^{1/2}\xi'}\norm{(\numop+1)^{1/2}\xi}\Bigr), \label{eq:eb2} \\
	\abs{\langle\xi',\ind^{\leq m}\ii[\mathcal{E}_N(t),\numop]\ind^{\leq m}\xi\rangle} &\leq C_t\sqrt{\frac{\beta\log(N)m}{N}}\Bigl(\norm{(\numop+1)^{1/2}\xi'}\norm{\sq(1-\Delta)^{1/2}\xi} \notag \\
	&\mathmakebox[\widthof{$\leq C_t\displaystyle\sqrt{\frac{\beta\log(N)m}{N}}\Bigr(\;$}][r]{+\:} \norm{\sq(1-\Delta)^{1/2}\xi'}\norm{(\numop+1)^{1/2}\xi}\Bigr) \label{eq:eb3}
\end{align}
if $w_N$ satisfies \eqref{eq:poly} for $\beta>0$. If $w_N$ satisfies \eqref{eq:exp} for $0<\kappa<1$, $\beta\log(N)$ is replaced by $N^\kappa$. 
\end{lem}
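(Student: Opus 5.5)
We follow the proof of \cite[Lemma~9]{NamNap-19} term by term, replacing the Sobolev bound on $w_N$ with Lemma~\ref{lem:MT}, and the $L^2$-norm of $K_2$ with \eqref{eq:sqrtdeltaK2bound}. Throughout, we use that on $\ind^{\leq m}\mathcal{F}_\perp(t)$ the operators $\numop$, $\numop/(N-1)$, $\sqrt{N-\numop}/(N-1)$ and $\sqrt{(N-\numop)(N-\numop-1)}/(N-1)-1$ are bounded, respectively, by $m$, $m/N$, $CN^{-1/2}$ and $C(\numop+1)/N$. We also use that all functions of $u_N$ entering the kernels are controlled by Lemma~\ref{lem:Hregularity}, e.g.\ $\norm{w_N\ast\abs{u_N}^2}_{L^\infty}\leq\norm{w}_{L^1}\norm{u_N}_{L^\infty}^2\leq C_t$.

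\emph{Bound \eqref{eq:R4bound}.} Write $\sq_2(q\otimes q\,w_N\,q\otimes q)$ and use $\abs{w_N(x-y)}\leq C\beta\log(N)(1-\Delta_x)$ from Lemma~\ref{lem:MT}. On the $k$-particle sector this gives
\[
	\pm\sum_{i<j}(q\otimes q\, w_N\, q\otimes q)_{ij}\leq C\beta\log(N)\sum_{i<j}(q\otimes q(1-\Delta_i)q\otimes q)_{ij}\leq C\beta\log(N)\,(k-1)\sum_i(1-\Delta_i).
\]
Here $q(1-\Delta)q\leq C_t(1-\Delta)$ because $p$ is a rank-one projection onto $u_N\in H^2$. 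Splitting $w_N=w_{N,+}-w_{N,-}$ and applying Cauchy--Schwarz to the quadratic form yields the off-diagonal bilinear estimate with factor $C\beta\log(N)m/N$.

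\emph{Bound \eqref{eq:eb1}.} The terms are handled one at a time.
\begin{itemize}
\item For $R_0$, the operator $q[w_N\ast\abs{u_N}^2+\tilde K_1-\mu_N]q$ is bounded with norm $\leq C_t$. Together with $(1-\numop)/(N-1)$ this gives $C_t m/N\leq C_t\sqrt{m/N}$ times $\norm{(\numop+1)^{1/2}\xi'}\norm{(\numop+1)^{1/2}\xi}$.
\item For $R_1$, we have $\norm{q[w_N\ast\abs{u_N}^2]u_N}\leq C_t$ and $\numop\sqrt{N-\numop}/(N-1)\leq m/\sqrt N$ on the truncated space. The factor $m/\sqrt N$ is absorbed by moving one $\numop^{1/2}$ onto each side, which gives $\sqrt{m/N}$.
\item For $R_2$, we use $\abs{\langle\xi',\iint K_2a^\ast a^\ast\xi\rangle}\leq\norm{(1-\Delta_x)^{-1/2}K_2}_{L^2}\norm{(\numop+1)^{1/2}\xi'}\norm{\sq(1-\Delta)^{1/2}\xi}$ together with \eqref{eq:sqrtdeltaK2bound}. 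The prefactor is of order $(\numop+1)/N\leq (m+1)/N$, and $\sqrt{\beta\log N}\,m/N\leq\sqrt{\beta\log(N)m/N}$.
\item For $R_3$, which is the main term, we write it as $\frac{2\sqrt{N-\numop}}{N-1}\int a^\ast(q\,\cdot)\,a\,a$ with kernel $\overline{u_N(x)}w_N(x-y)$. By Cauchy--Schwarz,
\[
	\abs{\langle\xi',R_3\xi\rangle}\leq \frac{C}{\sqrt N}\Bigl(\iint\abs{w_N(x-y)}\,\norm{a_xa_y\xi}^2\Bigr)^{1/2}\Bigl(\iint\abs{w_N(x-y)}\abs{u_N(x)}^2\norm{a_y\xi'}^2\Bigr)^{1/2}.
\]
The first factor is at most $(C\beta\log(N)\,m)^{1/2}\norm{\sq(1-\Delta)^{1/2}\xi}$ by Lemma~\ref{lem:MT} applied sectorwise. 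The second factor is at most $C_t\norm{\numop^{1/2}\xi'}$, since $\int\abs{w_N(x-y)}\abs{u_N(x)}^2\diff x\leq C_t$. Together these produce exactly $\sqrt{\beta\log(N)m/N}$. The same estimate with the roles of $\xi',\xi$ swapped handles $R_3^\ast$. Commuting the $q$'s past the operators is harmless because $\xi,\xi'\in\mathcal{F}_\perp$.
\end{itemize}

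\emph{Bounds \eqref{eq:eb2} and \eqref{eq:eb3}.} For the time derivative, $\partial_t$ hits either $u_N$, which is controlled by the $\partial_tu_N$ bounds of Lemma~\ref{lem:Hregularity} and the $\partial_tK_2$ part of \eqref{eq:sqrtdeltaK2bound}, or $q(t)$, where $\partial_t q=-\ketbra{\partial_tu_N}{u_N}-\ketbra{u_N}{\partial_tu_N}$ is rank two with $H^2$ entries. Every resulting term has the same structure as before, so the same estimates apply. Note that $R_4$ appears in $\partial_t\mathcal{E}_N$ only through $\partial_tq$, which removes one $w_N$-vertex from $\mathfrak{H}_\perp$ and hence gives an $R_3$-type term. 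For the commutator, $[\mathcal{E}_N,\numop]$ kills $R_0$ and $R_4$ and multiplies $R_1,R_2,R_3$ by constants ($\pm1,\pm2$), so \eqref{eq:eb3} follows from the proof of \eqref{eq:eb1}.

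The main obstacle is the $R_3$ estimate, and more precisely keeping the loss at $\beta\log N$ rather than a power of $N$. The point is to apply Lemma~\ref{lem:MT} only to the two-annihilator side, while the one-annihilator side is absorbed by $\norm{u_N}_{L^\infty}$ and $\norm{w_N}_{L^1}$.
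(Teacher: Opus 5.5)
Your proposal is correct and takes the paper's own route: the paper's proof only invokes the proof of \cite[Lemma~9]{NamNap-19}, with the operator bound of Lemma~\ref{lem:MT} in place of the Sobolev bound, and that is exactly your term-by-term argument, including the Cauchy--Schwarz splitting for $R_3$ that uses Lemma~\ref{lem:MT} only on the $a_xa_y\xi$ side. A few harmless slips to fix, none of which changes the conclusion:
\begin{itemize}
\item The pairing bound you quote for $R_2$ has the sides reversed. It should read $\norm{(1-\Delta_x)^{-1/2}K_2}_{L^2}\norm{\sq(1-\Delta)^{1/2}\xi'}\norm{(\numop+1)^{1/2}\xi}$; your version fails for $\xi=\Omega$. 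This is harmless because \eqref{eq:eb1} contains both orderings.
\item For $R_0$, keep $\numop^{1/2}$ rather than $(\numop+1)^{1/2}$ on both sides, which is legitimate because the one-body second quantization kills the vacuum. Otherwise you do not land in the stated form.
\item In \eqref{eq:R4bound}, drop the $q$'s directly on $\mathcal{F}_\perp(t)$ instead of using $q(1-\Delta)q\leq C_t(1-\Delta)$. This keeps the constant $t$-independent, as stated.
\end{itemize}
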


\begin{proof}
The estimates \eqref{eq:R4bound}, \eqref{eq:eb1}, \eqref{eq:eb2}, and \eqref{eq:eb3} follow from the proof of \cite[Lemma~9]{NamNap-19} by using the operator estimate \eqref{eq:wN:MTineq}. 
\end{proof}

The next lemma is an upper bound on the kinetic energy of $\Phi_N(t)$. Note that we generally do not have stability of second kind for our Hamiltonian $H_{N,t}$ unless $\beta\leq1/2$; see \cite[Lemma~2]{LewNamRou-17}. To circumvent this problem, we use the fact that our initial data are localized and cannot spread too fast, which enables us to use stability of second kind for trapped Hamiltonians \cite{JunVis-26} to bound the kinetic energy. 
\begin{lem}\label{lem:apkinetic}
Under the same assumptions as in Theorem~\ref{thm:norm}, we have
\[
	\langle\Phi_N(t),\sq(1-\Delta)\Phi_N(t)\rangle\leq C_t N. 
\]
\end{lem}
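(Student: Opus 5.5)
The plan is to first bound the kinetic energy of $\Psi_N(t)$ and then transfer it to $\Phi_N(t)$. For the transfer, the identities \eqref{eq:Uid} give
\[
	\langle\Phi_N,\sq(1-\Delta)\Phi_N\rangle \leq \langle\Psi_N,\sq(q(1-\Delta)q)\Psi_N\rangle,
\]
and $q(1-\Delta)q\leq 2(1-\Delta)+2\norm{u_N(t)}_{H^1}^2$. By Lemma~\ref{lem:NLSregularity} it therefore suffices to show
\[
	\langle\Psi_N(t),\textstyle\sum_j(1-\Delta_j)\Psi_N(t)\rangle\leq C_tN.
\]

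\emph{Step 1 (stability with a weak trap).} By \cite[Theorem~2]{JunVis-26} and the condition $\int w_-<a^\ast$, the trapped Hamiltonian is stable of the second kind, and this holds uniformly in $\beta$, including the exponential regime. Since $\int w_-<a^\ast$ is strict, we can split off a fraction $\delta$ of the kinetic energy. This should give
\[
	H_{N,t}+\sum_j\abs{x_j}^s \geq \delta\sum_j(-\Delta_j)-C_tN,
\]
after absorbing $V_t$ using Assumption~\ref{assu:V}~\ref{assu:Vd}, writing $V_t\geq -c_t$. The lower bound on $V$ suffices here because $v\geq0$.

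\emph{Step 2 (energy and moment control).} Set
\[
	E(t)=\langle\Psi_N(t),H_{N,t}\Psi_N(t)\rangle.
\]
Then $\frac{\diff}{\diff t}E=\langle\Psi_N,\sum_j\partial_tV(t,x_j)\Psi_N\rangle$, which by \ref{assu:Ve} is bounded by $C_t\langle\Psi_N,\sum_j(v(x_j)+1)\Psi_N\rangle$. This in turn is controlled by $C_t(E+\sum_j\abs{x_j}^s\text{-moment}+N)$ via \ref{assu:Vd} and Step 1. The initial energy satisfies $E(0)\leq CN$, using \eqref{eq:thm:norm:initial}, Lemma~\ref{lem:MT}, and the bounds for $R_j$ in Lemma~\ref{lem:error-bounds} with $m=N$. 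The $R_4$ bound there carries a factor $\beta\log N$, so the $N^{2\beta}$ bound would be too lossy; I would instead use the Bogoliubov structure, $\Phi_N(0)=\ind^{\leq N}\Phi_0$.

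For the moment $M(t)=\langle\Psi_N,\sum_j\abs{x_j}^s\Psi_N\rangle$, we compute
\[
	\frac{\diff}{\diff t}M=\langle\Psi_N,\textstyle\sum_j\ii[-\Delta_j,\abs{x_j}^s]\Psi_N\rangle.
\]
Since $s\leq1$, $\nabla\abs{x}^s$ is bounded away from the origin. The commutator is $-2\ii\nabla\abs{x}^s\cdot\nabla-\ii\Delta\abs{x}^s$, and the $\Delta\abs{x}^s\sim\abs{x}^{s-2}$ singularity is harmless in 2D in quadratic form after symmetrizing, with a Hardy-type bound or by replacing $\abs{x}^s$ with $\jbr{x}^s$. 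This gives
\[
	\abs{\dot M}\leq C\bigl(N+\langle\Psi_N,\textstyle\sum_j(-\Delta_j)\Psi_N\rangle\bigr)^{1/2}N^{1/2}.
\]
Step 1 gives $\sum\langle-\Delta_j\rangle\leq C(E+M+C_tN)$, so the pair $(E,M)$ satisfies a closed differential inequality of the form $\dot X\leq C_t(X+N)$. Gr\"onwall then yields $E+M\leq C_tN$, and hence the kinetic bound via Step 1.

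\emph{Main obstacle.} I expect the main difficulty to be the time dependence of $V$: the derivative $\partial_tV\sim v$ is only controlled by $v$, while Step 1 controls the kinetic energy and $\abs{x}^s$, not $v$. I would handle this by including $\sum_jv(x_j)$ in the Gr\"onwall quantity through $V_t\geq C_t^{-1}v-c_t$. That is, I would bound $\frac12\sum C_t^{-1}v$ by $E+C\sum\abs{x}^s+C_tN$, using stability with half the potential. The second delicate point is the initial energy bound $E(0)\leq CN$ uniformly in $\beta$, where one must avoid $\norm{w_N}_\infty$ and rely on Lemma~\ref{lem:MT} only through terms with a $1/N$ prefactor.
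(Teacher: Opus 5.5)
Your proposal is correct and follows essentially the paper's route. You propagate the $\langle x\rangle^s$-moment (which is where $s\le 1$ enters) and apply \cite[Theorem~2]{JunVis-26} twice: once with a split-off fraction of the kinetic energy, and once to control $\partial_t V_t$ in the time derivative of $E(t)$, where the paper removes $V_t$ entirely while you keep half of it. You then close with Gr\"onwall using $E(0)\le CN$ from $\Phi_N(0)=\ind^{\leq N}\Phi_0$. The differences are cosmetic: your transfer via $q(1-\Delta)q\le 2(1-\Delta)+2\norm{u_N(t)}_{H^1}^2$ is slightly slicker than the paper's expansion of $U_N(t)\sq(-\Delta)U_N(t)^\ast$ and needs only $H^1$ control of $u_N$ at that step, and your single Gr\"onwall on $E+M$ replaces the paper's nested Gr\"onwall in $\int_0^t\langle\Psi_N(t'),\sq(-\Delta)\Psi_N(t')\rangle\,\mathrm{d}t'$.
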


\begin{proof}
We compute 
\begin{multline*}
	U_N(t)\sq(-\Delta)U_N(t)^\ast = \langle u_N(t),-\Delta u_N(t)\rangle (N-\numop) \\
	+ \bigl( a^\ast(q(t)(-\Delta u_N)(t))\sqrt{N-\numop}+\hc\bigr) + \sq(q(t)(-\Delta)q(t)). 
\end{multline*}
Since $\Phi_N(t)=U_N(t)\Psi_N(t)\in\mathcal{F}_\perp^{\leq N}(t)$, we have
\begin{align*}
	\MoveEqLeft[3] \langle\Phi_N(t),\sq(1-\Delta)\Phi_N(t)\rangle = \langle\Phi_N(t),\numop\Phi_N(t)\rangle + \langle\Phi_N(t),\sq(q(t)(-\Delta)q(t))\Phi_N(t)\rangle \\
	&= \langle\Phi_N(t),\numop\Phi_N(t)\rangle - \langle\Phi_N(t),\bigl(a^\ast(q(t)(-\Delta u_N)(t))\sqrt{N-\numop}+\hc\bigr)\Phi_N(t)\rangle \\
	&\quad - \langle u_N(t),-\Delta u_N(t)\rangle\langle\Phi_N(t),(N-\numop)\Phi_N(t)\rangle + \langle \Psi_N(t),\sq(-\Delta)\Psi_N(t)\rangle. 
\end{align*}
Using
\[
	a^\ast(g)a(g)\leq\norm{g}_{L^2}^2\numop, \qquad g\in L^2(\R^2)
\]
on $\mathcal{F}_\perp(t)$ and $\norm{u_N(t)}_{H^2}\leq C_t$ from Lemma~\ref{lem:Hregularity}, the first three terms on the right-hand side of the equation can be bounded easily by $C_t N$. In the remainder of the proof, we bound the term $\langle \Psi_N(t),\sq(-\Delta)\Psi_N(t)\rangle$. 

Let us first find a bound on
\[
	\langle\Psi_N(t),\sum_{j=1}^N\abs{x_j}^s\Psi_N(t)\rangle \leq \langle\Psi_N(t),\sum_{j=1}^N\jbr{x_j}^s\Psi_N(t)\rangle
\]
for $0<s\leq1$. 
We compute
\begin{align*}
	\MoveEqLeft[3] \abs[\bigg]{\frac{\diff}{\diff t}\langle\Psi_N(t),\sum_{j=1}^N\jbr{x_j}^s\Psi_N(t)\rangle}
	= \abs[\bigg]{\langle\Psi_N(t),[H_{N,t},\sum_{j=1}^N\jbr{x_j}^s]\Psi_N(t)\rangle} \\
	&= \abs[\bigg]{\langle\Psi_N(t),\sum_{j=1}^N[-\Delta_j,\jbr{x_j}^s]\Psi_N(t)\rangle} \\
	&= \abs[\bigg]{\sum_{j=1}^N\langle\Psi_N(t),\Biggl(-\biggl(\frac{2s-s^2}{\jbr{x_j}^{4-s}}+\frac{s^2}{\jbr{x_j}^{2-s}}\biggr)-\frac{2s}{\jbr{x_j}^{2-s}}x_j\cdot\nabla_j\Biggr)\Psi_N(t)\rangle} \\
	&\leq CN+ \sum_{j=1}^N\langle\Psi_N(t),-\Delta_j\Psi_N(t)\rangle = CN+\langle \Psi_N(t),\sq(-\Delta)\Psi_N(t)\rangle. 
\end{align*}
Using \eqref{eq:thm:norm:initial-ext}, this implies
\begin{equation}\label{eq:external-bound}
	\langle\Psi_N(t),\sum_{j=1}^N\jbr{x_j}^s\Psi_N(t)\rangle \leq C_tN + \int_0^t\langle \Psi_N(t'),\sq(-\Delta)\Psi_N(t')\rangle\diff t'. 
\end{equation}
Let $0<\varepsilon<1$ be such that 
\[
	(1-\varepsilon)^{-1}\int_{\R^2}w_-<a^\ast.
\]
Since the modified external potential $(1-\varepsilon)^{-1}(V(t,x)+\abs{x}^s)$ satisfies 
\[
	(1-\varepsilon)^{-1}(V(t,x)+\abs{x}^s)\geq (1-\varepsilon)^{-1}\abs{x}^s-(1-\varepsilon)^{-1}c_t
\]
by Assumption~\ref{assu:V} \ref{assu:Vd} on $V$, \cite[Theorem~2]{JunVis-26} guarantees that the modified Hamiltonian
\begin{multline*}
	\sum_{j=1}^N\bigl(-\Delta_j+(1-\varepsilon)^{-1}(V(t,x_j)+\abs{x_j}^s)\bigr)+\frac{1}{N-1}\sum_{1\leq j<k\leq N}(1-\varepsilon)^{-1}w_N(x_j-x_k) \\
	= (1-\varepsilon)^{-1}\biggl(H_{N,t}+\sum_{j=1}^N(\varepsilon\Delta_j+\abs{x_j}^s)\biggr)
\end{multline*}
is stable of second kind, i.e.,
\begin{equation}\label{eq:HNt-lowerbound}
	H_{N,t}+\sum_{j=1}^N(\varepsilon\Delta_j+\abs{x_j}^s) \geq -C_tN.
\end{equation}
With this bound and \eqref{eq:external-bound}, we obtain
\begin{multline}\label{eq:PsiN-kinetic-bound-1}
	\varepsilon\langle\Psi_N(t),\sq(-\Delta)\Psi_N(t)\rangle \leq
	C_t N + \langle\Psi_N(t),H_{N,t}\Psi_N(t)\rangle + \langle\Psi_N(t),\sum_{j=1}^N\abs{x_j}^s\Psi_N(t)\rangle \\
	\leq C_t N + \langle\Psi_N(t),H_{N,t}\Psi_N(t)\rangle + \int_0^t\langle \Psi_N(t'),\sq(-\Delta)\Psi_N(t')\rangle\diff t'. 
\end{multline}
By assumptions \ref{assu:Vd} and \ref{assu:Ve} on $V_t$, we have
\begin{equation*}
	\abs[\bigg]{\frac{\diff}{\diff t}\langle\Psi_N(t),H_{N,t}\Psi_N(t)\rangle} \leq \abs{\langle\Psi_N(t),\sq(\partial_tV_t)\Psi_N(t)\rangle}
	\leq C_t \langle\Psi_N(t),\sq(V_t+C_t)\Psi_N(t)\rangle. 
\end{equation*}
Similarly to \eqref{eq:HNt-lowerbound}, the condition
\[
	\int_{\R^2}w_-<a^\ast
\]
implies by \cite[Theorem~2]{JunVis-26} that the Hamiltonian
\[
	\sum_{j=1}^N\bigl(-\Delta_j+\abs{x_j}^s\bigr)+\frac{1}{N-1}\sum_{1\leq j<k\leq N}w_N(x_j-x_k) = H_{N,t}+\sum_{j=1}^N(-V(t,x_j)+\abs{x_j}^s)
\]
is stable of second kind, i.e.,
\[
	H_{N,t}+\sum_{j=1}^N(-V(t,x_j)+\abs{x_j}^s) \geq -CN. 
\]
Thus, in combination with \eqref{eq:external-bound}, we obtain
\begin{multline*}
	\abs[\bigg]{\frac{\diff}{\diff t}\langle\Psi_N(t),H_{N,t}\Psi_N(t)\rangle} \leq C_tN + C_t \langle\Psi_N(t),H_{N,t}\Psi_N(t)\rangle + C_t\langle\Psi_N(t),\sum_{j=1}^N\abs{x_j}^s\Psi_N(t)\rangle \\
	\leq C_tN + C_t \langle\Psi_N(t),H_{N,t}\Psi_N(t)\rangle + C_t\int_0^t\langle \Psi_N(t'),\sq(-\Delta)\Psi_N(t')\rangle\diff t'. 
\end{multline*}
Gr{\"o}nwall's lemma implies
\begin{multline*}
	\langle\Psi_N(t),H_{N,t}\Psi_N(t)\rangle \\
	\leq C_t N + C_t \abs{\langle\Psi_N(0),H_{N,0}\Psi_N(0)\rangle} + C_t \int_0^t\langle \Psi_N(t'),\sq(-\Delta)\Psi_N(t')\rangle\diff t'. 
\end{multline*}
Inserting this in \eqref{eq:PsiN-kinetic-bound-1}, we obtain
\begin{multline*}
	\langle\Psi_N(t),\sq(-\Delta)\Psi_N(t)\rangle \\ 
	\leq C_t\biggl(N + \abs{\langle\Psi_N(0),H_{N,0}\Psi_N(0)\rangle} + \int_0^t\langle \Psi_N(t'),\sq(-\Delta)\Psi_N(t')\rangle\diff t'\biggr). 
\end{multline*}
Applying Gr{\"o}nwall's lemma again, we find the desired bound
\[
	\langle\Psi_N(t),\sq(-\Delta)\Psi_N(t)\rangle \leq C_t(N + \abs{\langle\Psi_N(0),H_{N,0}\Psi_N(0)\rangle})\leq C_t N.
\]
Here we used $\abs{\langle\Psi_N(0),H_{N,0}\Psi_N(0)\rangle}\leq CN$, which follows from \eqref{eq:thm:norm:initial} and the bounds from Lemmas~\ref{lem:bogbounds} and \ref{lem:error-bounds} as explained in more detail in the proof of \cite[Lemma~10]{NamNap-19}. 
\end{proof}

\section{From the full to the Bogoliubov dynamics}\label{sec:ftb}

Let $M=N^{1-\delta}$ with $\delta\in(0,1)$. We follow \cite{NamNap-19} and introduce a truncated dynamics $\Phi_{N,M}(t)\in\mathcal{F}_\perp(t)$, but instead of truncating the full generator $\mathcal{G}_N(t)$ with the sharp cut-off $\ind^{\leq M}$, we only truncate the error terms $\mathcal{E}_N(t)$ with a smooth cut-off $\chi_M\approx\ind^{\leq M}$. 
More precisely, let $\chi\colon\R\to[0,1]$ be a smooth function such that $\chi(r)=1$ if $r\leq3/4$ and $\chi(r)=0$ if $r\geq1$, and define the operator $\chi_{M}$ on $\mathcal{F}$ by
\begin{equation*}
	\chi_{M} = \chi_{M}(\numop) = \chi\biggl(\frac{\numop}{M}\biggr). 
\end{equation*}
Let $\Phi_{N,M}$ be the solution to the equation
\begin{equation}\label{eq:truncdyn}
	\begin{cases}
		\ii\partial_t\Phi_{N,M}(t) = \mathcal{G}_{N,M}(t)\Phi_{N,M}(t), \\
		\Phi_{N,M}(0)=\Phi_0,
	\end{cases}
\end{equation}
where 
\[
	 \mathcal{G}_{N,M}(t) = \BH(t) + \chi_M\mathcal{E}_N(t)\chi_M. 
\]
Similarly to the Bogoliubov equation, the well-posedness of this equation can be proved by applying the method used to show \cite[Theorem~7]{LewNamSch-15}. 
Instead of comparing $\Phi_N$ and $\Phi$ directly, we first compare $\Phi_N$ with $\Phi_{N,M_1}$ and $\Phi_{N,M_1}$ with $\Phi_{N,M_2}$ in Lemma~\ref{lem:tt} and then we compare $\Phi_{N,M_2}$ with $\Phi$ in Lemma~\ref{lem:tbog}. We will choose
\[
	M_2\ll M_1\ll N. 
\]
Let us first state a useful bound for the kinetic energy of the truncated dynamics. 
\begin{lem}\label{lem:truncatedkinetic}
Let the initial datum $\Phi_0$ satisfy \eqref{eq:thm:norm:initial} and let $M=N^{1-\delta}$, $\delta\in(0,1)$. If the interaction potential $w_N$ satisfies \eqref{eq:poly} for $\beta>0$, then we have
\begin{equation*}
	\langle\Phi_{N,M}(t),\sq(1-\Delta+v)\Phi_{N,M}(t)\rangle\leq C_{t}\beta\log(N). 
\end{equation*}
If $w_N$ satisfies \eqref{eq:exp} for $0<\kappa<1$ and $\delta>\kappa$, then we have
\begin{equation*}
	\langle\Phi_{N,M}(t),\sq(1-\Delta+v)\Phi_{N,M}(t)\rangle\leq C_{t}N^{\kappa}. 
\end{equation*}
\end{lem}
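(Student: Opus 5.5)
We will mimic the Grönwall argument behind Lemma~\ref{lem:bogdyn}~\ref{item:kinetic:bog} (itself modelled on \cite[Lemma~5]{NamNap-19}), with $\BH(t)$ replaced by $\mathcal{G}_{N,M}(t)=\BH(t)+\chi_M\mathcal{E}_N(t)\chi_M$. The quantity to propagate is the energy $\langle\Phi_{N,M}(t),(\mathcal{G}_{N,M}(t)+C_0(\numop+\beta\log N))\Phi_{N,M}(t)\rangle$ together with the particle number.

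\textbf{Step 1: energy controls the kinetic term.} First we show that, up to lower-order terms, $\mathcal{G}_{N,M}(t)$ dominates $\sq(1-\Delta+v)$. By Lemma~\ref{lem:bogbounds} with small $\eta$, and using Assumption~\ref{assu:V}~\ref{assu:Vd} to compare $V_t$ with $v$,
\[
\BH(t)\geq \tfrac12\sq(1-\Delta+C_t^{-1}v)-C_t(\numop+\beta\log N).
\]
For the truncated error we use that $\chi_M$ commutes with $\numop$ and vanishes on sectors with more than $M$ particles, so $\chi_M=\ind^{\leq M}\chi_M$. Lemma~\ref{lem:error-bounds} with $m=M$, combined with Cauchy--Schwarz, then gives
\[
\pm\chi_M\mathcal{E}_N\chi_M\leq C_t\Bigl(\tfrac{\beta\log(N)M}{N}+\sqrt{\tfrac{\beta\log(N)M}{N}}\Bigr)\sq(1-\Delta)+C_t(\numop+1).
\]
Since $M/N=N^{-\delta}$, the coefficient in front of $\sq(1-\Delta)$ tends to $0$. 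In the exponential case the coefficient is $N^{\kappa}N^{-\delta}$, which tends to $0$ exactly because $\delta>\kappa$. For large $N$ this term is therefore absorbed, and $\mathcal{G}_{N,M}$ also controls $\tfrac14\sq(1-\Delta+C_t^{-1}v)$.

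\textbf{Step 2: time derivative of the energy.} We have
\[
\frac{\diff}{\diff t}\langle\Phi_{N,M},\mathcal{G}_{N,M}\Phi_{N,M}\rangle=\langle\Phi_{N,M},\partial_t\mathcal{G}_{N,M}\Phi_{N,M}\rangle,
\]
and $\partial_t\mathcal{G}_{N,M}=\partial_t\BH+\chi_M\partial_t\mathcal{E}_N\chi_M$ because $\chi_M$ is time independent. The first part is bounded by Lemma~\ref{lem:bogbounds}. The second is bounded by \eqref{eq:eb2} with $m=M$, which again yields $C_t\sq(1-\Delta)+C_t(\numop+1)$.

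We also need the number operator. Its derivative is $\ii[\mathcal{G}_{N,M},\numop]=\ii[\BH,\numop]+\chi_M\ii[\mathcal{E}_N,\numop]\chi_M$. Here we use that $\chi_M$ commutes with $\numop$, and the two pieces are controlled by Lemma~\ref{lem:bogbounds} and \eqref{eq:eb3} respectively. Adding $C_0\langle\numop\rangle$ to the energy, with $C_0$ large, makes the combined functional
\[
E(t)=\langle\Phi_{N,M}(t),(\mathcal{G}_{N,M}(t)+C_0\numop+C_0\beta\log N)\Phi_{N,M}(t)\rangle
\]
comparable to $\langle\sq(1-\Delta+v)\rangle+\langle\numop\rangle+\beta\log N$. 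It satisfies $E'(t)\leq C_t(E(t)+\beta\log N)$, with $\partial_tV$ handled through Assumption~\ref{assu:V}~\ref{assu:Ve} as in Lemma~\ref{lem:bogbounds}.

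\textbf{Step 3: initial value and conclusion.} At $t=0$ we have $\Phi_{N,M}(0)=\Phi_0$, so \eqref{eq:thm:norm:initial} together with the upper bounds from Step~1 gives $E(0)\leq C\beta\log N$. Grönwall's lemma then yields $E(t)\leq C_t\beta\log N$, and Step~1 converts this into the claimed bound. In the exponential case every occurrence of $\beta\log N$ is replaced by $N^\kappa$.

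\textbf{Main obstacle.} The delicate point is the domination in Step~1. The smallness factor in front of $\sq(1-\Delta)$ for the error term is only $\sqrt{\beta\log(N)M/N}$, and in the exponential case it is small only because $\delta>\kappa$. We must also check that the sharp projection $\ind^{\leq M}$ in Lemma~\ref{lem:error-bounds} is compatible with the smooth cut-off, which holds since $0\leq\chi_M\leq\ind^{\leq M}$ and $\chi_M$ is a function of $\numop$.
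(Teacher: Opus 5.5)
Your proposal is correct and follows the same route as the paper. The paper simply invokes the Grönwall argument of \cite[Lemma~11]{NamNap-19} for the energy functional of $\mathcal{G}_{N,M}(t)$ plus a large multiple of $\numop+\beta\log(N)$, using Lemmas~\ref{lem:bogbounds} and \ref{lem:error-bounds}, and you also correctly identify that $\delta>\kappa$ is exactly what is needed to absorb the $\sq(1-\Delta)$ contribution of $\chi_M\mathcal{E}_N(t)\chi_M$, whose coefficient is of order $\sqrt{N^{\kappa-\delta}}$, into the kinetic term coming from Lemma~\ref{lem:bogbounds}.
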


\begin{proof}
The lemma follows from the arguments in the proof of \cite[Lemma~11]{NamNap-19} by using Lemmas~\ref{lem:bogbounds} and \ref{lem:error-bounds}. The additional requirement $\delta>\kappa$ if $w_N$ satisfies \eqref{eq:exp} is needed to absorb the error terms from Lemma~\ref{lem:error-bounds} in the $\sq(1-\Delta)$ term from Lemma~\ref{lem:bogbounds}. 
\end{proof}

We also have the following bound on the particle number of the truncated dynamics. 
\begin{lem}\label{lem:truncatednumber}
Let the initial datum $\Phi_0$ satisfy \eqref{eq:thm:norm:initial} and let $M=N^{1-\delta}$, $\delta\in(0,1)$. If the interaction potential $w_N$ satisfies \eqref{eq:poly} for $\beta>0$ or \eqref{eq:exp} for $0<\kappa<1/2$ and $\delta\in(2\kappa,1)$, then we have
\begin{equation*}
	\langle\Phi_{N,M}(t),\numop\Phi_{N,M}(t)\rangle\leq C_{t}. 
\end{equation*}
\end{lem}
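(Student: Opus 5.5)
We prove Lemma~\ref{lem:truncatednumber} by repeating the Gr\"onwall argument from the proof of Lemma~\ref{lem:bogdyn}~\ref{item:number:bog}, now with the generator $\mathcal{G}_{N,M}(t)=\BH(t)+\chi_M\mathcal{E}_N(t)\chi_M$. Since $\chi_M$ commutes with $\numop$, we obtain
\[
	\frac{\diff}{\diff t}\langle\Phi_{N,M},\numop\Phi_{N,M}\rangle = 2\Im\Bigl\langle\Phi_{N,M},\iint K_2a^\ast_xa^\ast_y\Bigr\rangle\Phi_{N,M}\rangle + \langle\Phi_{N,M},\chi_M\ii[\mathcal{E}_N(t),\numop]\chi_M\Phi_{N,M}\rangle.
\]
For the first term we use the same corrector $\iint k\,a^\ast a^\ast$ with $k=(2L-\Delta_x-\Delta_y)^{-1}K_2$. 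We again write $\iint K_2 a^\ast a^\ast$ as the commutator with $\BH$ minus the controlled terms $A_xk$, $B$ and $\langle K_2,k\rangle$. Here the commutator must be taken with $\mathcal{G}_{N,M}$, because that operator drives the time derivative of $\Re\langle\Phi_{N,M},\iint k\,a^\ast a^\ast\Phi_{N,M}\rangle$. This produces an additional term
\[
	\Bigl\langle\Phi_{N,M},\Bigl[\chi_M\mathcal{E}_N\chi_M,\iint k\,a^\ast_xa^\ast_y\Bigr]\Phi_{N,M}\Bigr\rangle,
\]
which has to be bounded. All other terms are handled exactly as before by \eqref{eq:offd:bound}, \eqref{eq:k:L2:bound} and the bound on $\norm{A_xk}$, and they give $C_tL^{1/2}\langle\numop+1\rangle$.

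The second error term is estimated with \eqref{eq:eb3} for $m=M$. Combined with Lemma~\ref{lem:truncatedkinetic}, it is bounded by
\[
	C_t\sqrt{\beta\log(N)M/N}\,\langle\numop+1\rangle^{1/2}\bigl(\beta\log N\bigr)^{1/2}\leq C_t\langle\numop+1\rangle + C_t(\beta\log N)^2 N^{-\delta}.
\]
This is $\leq C_t$ in the polynomial case. In the exponential case it becomes $N^{2\kappa-\delta}$, which is bounded because $\delta>2\kappa$. This is where the condition on $\delta$ enters. Lemma~\ref{lem:truncatedkinetic} requires $\delta>\kappa$, which is implied.

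The main obstacle is the commutator of the corrector with $\chi_M\mathcal{E}_N\chi_M$. The operator $\iint k\,a^\ast a^\ast$ shifts $\numop$ by $2$, so
\[
	\chi_M(\numop)\,\iint k\,a^\ast a^\ast = \iint k\,a^\ast a^\ast\,\chi_M(\numop+2),
\]
and the smoothness of $\chi$ gives $\norm{\chi_M(\numop+2)-\chi_M(\numop)}\leq C/M$. Hence the commutator equals $\chi_M[\mathcal{E}_N,\iint k\,a^\ast a^\ast]\chi_M$ up to terms of order $M^{-1}$. Those remainder terms are bounded through \eqref{eq:eb1}, \eqref{eq:R4bound} and \eqref{eq:offd:bound}.

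For the main commutator we expand each $R_j$ explicitly. Commuting the annihilation operators of $R_j$ through $a^\ast a^\ast$ contracts one or two of them against $k$. In place of $w_N$ this yields kernels such as $w_N\,k$, and these are controlled using the pointwise bound $\norm{k_\perp}_{L^\infty_1L^2_2}\leq C_t$ from \eqref{eq:kperp:infty2bound}. This bound replaces a factor $(1-\Delta)^{1/2}$ in the $R_3$ and $R_4$ contributions by a bounded one, so that, for example, the $R_4$ contraction gives
\[
	\frac{1}{N}\Bigl|\iint w_N\,k_\perp\,a^\ast a\Bigr|\leq C_t\frac{\norm{w_N}_{L^1}}{N}\,\numop.
\]
The remaining contributions carry prefactors like $\sqrt{M/N}$ times kinetic-energy square roots. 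Using Lemma~\ref{lem:truncatedkinetic} and Lemma~\ref{lem:MT}, they are bounded by $C_t\langle\numop+1\rangle+C_t N^{2\kappa-\delta}$ or better. This bookkeeping is the step I expect to be most delicate.

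Collecting everything and absorbing the boundary terms $C_TL^{-1/2}\langle\numop+1\rangle$ by choosing $L=4C_T^2$, we get
\[
	\langle\numop\rangle(t)\leq C_T+C_T\int_0^t\langle\numop\rangle,
\]
and Gr\"onwall's lemma concludes the proof.
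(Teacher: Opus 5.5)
Your strategy is the paper's. The paper also differentiates the corrector $\K$ along $\mathcal{G}_{N,M}(t)$ and uses the splitting $[\chi_M\mathcal{E}_N(t)\chi_M,\K]=\chi_M[\mathcal{E}_N(t),\K]\chi_M+\chi_M\mathcal{E}_N(t)\K\delta_++\delta_-\K\mathcal{E}_N(t)\chi_M$, with $\delta_+=\chi_M(\numop+2)-\chi_M(\numop)$ and $\delta_-=\chi_M(\numop)-\chi_M(\numop-2)$. It bounds $\chi_M\ii[\mathcal{E}_N(t),\numop]\chi_M$ via \eqref{eq:eb3} and Lemma~\ref{lem:truncatedkinetic}, and arrives at the same final error $MN^{2\kappa-1}=N^{2\kappa-\delta}$. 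The gap is that you defer the estimates the lemma actually consists of, and two of your indications for them do not work as written.

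\textbf{The $\delta_\pm$ terms.} These are not ``of order $M^{-1}$''. Since $\mathcal{E}_N(t)$ is unbounded, the factor $C/M$ is entirely consumed by the extra $(\numop+2)$ that $\K$ produces. So these terms are as large as the main commutator, namely $C_t\langle\numop+1\rangle+C_tMN^{2\kappa-1}$.

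Applying \eqref{eq:R4bound} and \eqref{eq:eb1} with $\xi=\K\delta_+\Phi_{N,M}(t)$ runs into two problems.
\begin{enumerate}[(a)]
\item It requires $\norm{\sq(1-\Delta)^{1/2}\K\delta_+\Phi_{N,M}(t)}$, which \eqref{eq:offd:bound} does not give. The $\delta_-$ term needs the analogous, easier bound for $\K^\ast\delta_-\Phi_{N,M}(t)$.
\item $\K\delta_+\Phi_{N,M}(t)$ is in general not in $\mathcal{F}_\perp(t)$, because $k$ is not $q(t)\otimes q(t)$-projected, while \eqref{eq:eb1} is stated on $\mathcal{F}_\perp(t)$.
\end{enumerate}
Your route can be repaired. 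Replace $\K$ by $\K_\perp$: the difference carries a factor $a^\ast(u_N(t))$. This factor commutes through all $R_j,R_j^\ast$ up to a shift of $\numop$, since their annihilators are $q(t)$-projected. It is then killed by $\chi_M\Phi_{N,M}(t)\in\mathcal{F}_\perp(t)$. Then use
$\norm{\sq(1-\Delta)^{1/2}\K_\perp\xi}^2\leq C\norm{(1-\Delta_x)^{1/2}k_\perp(t)}^2\langle\xi,(\numop+2)^2\xi\rangle+C\norm{k_\perp(t)}^2\langle\xi,(\numop+2)^2\sq(1-\Delta)\xi\rangle$
together with $\delta_+^2(\numop+2)^2\leq C$. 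This reproduces \eqref{eq:delta:bound}.

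The paper avoids kinetic energies of $\K\xi$ altogether. It writes $\chi_M\mathcal{E}_N(t)\K\delta_+=\chi_M\K\mathcal{E}_N(t)\delta_++\chi_M[\mathcal{E}_N(t),\K]\delta_+$ and estimates $\chi_M\K R_j\delta_+$ by hand. There $\K$ costs only number operators via \eqref{eq:offd:bound}, and kinetic energy is needed only for $\chi_M\Phi_{N,M}(t)$ and $\delta_+\Phi_{N,M}(t)$.

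\textbf{The main commutator.} In $\chi_M[\mathcal{E}_N(t),\K]\chi_M$, the bound \eqref{eq:kperp:infty2bound} only handles single contractions: the cubic terms of $[R_3,\K]$, $[R_3^\ast,\K]$, and the quartic part of $[R_4,\K]$. For the last two you still need Lemma~\ref{lem:MT}, in the form $\ind^{\leq M}\sq_2(\abs{w_N})\ind^{\leq M}\leq CN^\kappa M\sq(1-\Delta)$, on the other factor.

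Your example is not a term that occurs. One has $[R_4,\K]=\frac{1}{N-1}\bigl(\iint w_N(x-y)k_\perp(t,x,y)a^\ast_xa^\ast_y\diff x\diff y+2\iiint w_N(x-y)k_\perp(t,y,x')a^\ast_xa^\ast_ya^\ast_{x'}a_x\diff x\diff y\diff x'\bigr)$, which has no $a^\ast a$ part. Its quartic part is not controlled by $\norm{w_N}_{L^1}\numop/N$, only by $C_t(\langle\numop\rangle+M^2N^{2\kappa-2})$.

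The double contractions are the pair term above and the one-creator term obtained by contracting both annihilators of $R_3$ with $\K$. They involve $w_N(x-y)k_\perp(t,x,y)$ near the diagonal, where $\sup_x\norm{k_\perp(t,x,\cdot)}_{L^2}$ is of no use. The missing ingredient is
$\iint\abs{w_N(x-y)}\abs{k_\perp(t,x,y)}^2\diff x\diff y\leq CN^\kappa\norm{(1-\Delta_x)^{1/2}k_\perp(t)}^2\leq C_tN^\kappa\bigl(\norm{(1-\Delta_x)^{-1/2}K_2(t,\cdot,\cdot)}_{L^2}^2+1\bigr)\leq C_tN^{2\kappa}$.
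This combines Lemma~\ref{lem:MT}, the derivative gained by the resolvent in $k$, and \eqref{eq:sqrtdeltaK2bound}. It is one of the places where $N^{2\kappa}$ enters, yielding errors $N^{2\kappa-1}$ and $M^{1/2}N^{2\kappa-1}$. The bound \eqref{eq:sqrtdeltaK2bound} is likewise needed for the $R_2^\ast$ term in which $\K$ is commuted past the $\numop$-dependent prefactor.

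Without these steps, the bound $C_t\langle\numop+1\rangle+C_tN^{2\kappa-\delta}$ for the main commutator is asserted rather than derived.
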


\begin{proof}
Let us only prove the case where $w_N$ satisfies \eqref{eq:exp} for $0<\kappa<1/2$. The case where $w_N$ satisfies \eqref{eq:poly} follows analogously. 
Fix $T>0$ and let $0\leq t\leq T$. Recall $A=h(t)+K_1(t)-(L-\Delta)$, the definition of $B(t)$ \eqref{eq:defn:B}, the definition of $k$ \eqref{eq:defn:k}, and the bound $\norm{k(t)}_{L^2} \leq C_t L^{-1/2}$ from \eqref{eq:k:L2:bound} for some sufficiently large $L=L_T\geq1$. As in the proof of \eqref{eq:number:bog}, we compute 
\begin{align*}
	\MoveEqLeft[3] \frac{\diff}{\diff t}\langle \Phi_{N,M}(t),\numop\Phi_{N,M}(t)\rangle = -\Im \langle\Phi_{N,M}(t),[\BH(t),\numop]\Phi_{N,M}(t)\rangle \\
	&\quad - \Im \langle\Phi_{N,M}(t),\chi_M[\mathcal{E}_N(t),\numop]\chi_M\Phi_{N,M}(t)\rangle \\
	&=2\Im\langle\Phi_{N,M}(t),[\BH(t), \iint k(t,x,y)a^\ast_xa^\ast_y \diff x\diff y] \Phi_{N,M}(t)\rangle \\
	&\quad - 4\Im\langle\Phi_{N,M}(t),\iint (A_xk)(t,x,y)a^\ast_xa^\ast_y \diff x\diff y \Phi_{N,M}(t)\rangle \\
	&\quad - 4\Im\langle\Phi_{N,M}(t),\sq(B(t))\Phi_{N,M}(t)\rangle \\
	&\quad - \Im \langle\Phi_{N,M}(t),\chi_M[\mathcal{E}_N(t),\numop]\chi_M\Phi_{N,M}(t)\rangle \\
	&=-2\frac{\diff}{\diff t}\Re\langle\Phi_{N,M}(t),\iint k(t,x,y)a^\ast_xa^\ast_y \diff x\diff y \Phi_{N,M}(t)\rangle - I_1 + I_2 - I_3 - I_4 - I_5,
\end{align*}
where
\begin{align*}
	I_1 &= 2\Im\langle\Phi_{N,M}(t),[\chi_M\mathcal{E}_N(t)\chi_M, \iint k(t,x,y)a^\ast_xa^\ast_y \diff x\diff y] \Phi_{N,M}(t)\rangle, \\
	I_2 &= 2\Re\langle\Phi_{N,M}(t),\iint \partial_tk(t,x,y)a^\ast_xa^\ast_y \diff x\diff y \Phi_{N,M}(t)\rangle, \\
	I_3 &= 4\Im\langle\Phi_{N,M}(t),\iint (A_xk)(t,x,y)a^\ast_xa^\ast_y \diff x\diff y \Phi_{N,M}(t)\rangle, \\
	I_4 &= 4\Im\langle\Phi_{N,M}(t),\sq(B(t))\Phi_{N,M}(t)\rangle, \\
	I_5 &= \Im \langle\Phi_{N,M}(t),\chi_M[\mathcal{E}_N(t),\numop]\chi_M\Phi_{N,M}(t)\rangle. 
\end{align*}
In the remainder of the proof, we will bound all $I_j$, $j=1,2,3,4,5$, and then close the Gr{\"o}nwall argument. 
$I_2$, $I_3$, and $I_4$ can be bounded as in the proof of \eqref{eq:number:bog},
\[
	I_2 - I_3 - I_4 \leq C_t \langle\Phi_{N,M}(t),(\numop+1)\Phi_{N,M}(t)\rangle. 
\]
$I_5$ can be bounded using \eqref{eq:eb3} from Lemma~\ref{lem:error-bounds} and Lemma~\ref{lem:truncatedkinetic},
\begin{align*}
	\pm I_5 &\leq C_t\sqrt{\frac{N^\kappa M}{N}}\langle\Phi_{N,M}(t),(\numop+1)\Phi_{N,M}(t)\rangle^{1/2}\langle\Phi_{N,M}(t),\sq(1-\Delta)\Phi_{N,M}(t)\rangle^{1/2} \\
	&\leq C_t\langle\Phi_{N,M}(t),(\numop+1)\Phi_{N,M}(t)\rangle + C_tMN^{2\kappa-1}. 
\end{align*}
It remains to bound $I_1$. We denote
\begin{align*}
	\K & = \iint k(t,x,y)a^\ast_xa^\ast_y \diff x\diff y, \\
	\delta_+ & = \chi_M(\numop+2) - \chi_M(\numop), \\
	\delta_- & = \chi_M(\numop) - \chi_M(\numop-2)
\end{align*}
and decompose 
\begin{multline*}
	[\chi_M\mathcal{E}_N(t)\chi_M, \iint k(t,x,y)a^\ast_xa^\ast_y \diff x\diff y] \\
	= \chi_M [\mathcal{E}_N(t),\K] \chi_M + \chi_M \calENt \K \delta_+ + \delta_- \K \calENt \chi_M. 
\end{multline*}
To bound the term in $I_1$ containing the commutator $[\calENt,\K]$, we write
\[
	[\calENt,\K] = \frac{1}{2} \sum_{j=0}^4 \bigl([R_j,\K] + [R_j^\ast,\K]\bigr), 
\]
where $R_j$, $j=0,\dotsc,4$, are defined in \eqref{eq:defn:Rj}. 
We will now estimate all terms in $I_1$, step by step. 
\paragraph{Estimate of the $[R_0,\K]$ term.} 
We denote
\[
	\omega_0(\numop) = \frac{1-\numop}{N-1}, \qquad W_0 = q(t)[w_N\ast\abs{u_N(t)}^2+ \tilde{K}_1(t) -\mu_N(t)]q(t). 
\]
It is easy to check that $\norm{W_0}_\textnormal{op}\leq C_t$. 
Then, $R_0 = \sq(W_0)\omega_0(\numop)$ and, using \eqref{eq:A:commutator},
\[
	[R_0,\K] = \frac{-2}{N-1}\sq(W_0)\K + 2\biggl(\iint (W_{0,x}k)(t,x,y)a_x^\ast a_y^\ast \diff x \diff y\biggr) \omega_0(\numop). 
\]
Using \eqref{eq:offd:bound} and the bound $\pm\sq(W_0)\leq \norm{W_0}_\textnormal{op}\numop$, we find
\begin{equation}\label{eq:R0K:bound}
\begin{aligned}
	\MoveEqLeft[3] \pm\Im\langle\Phi_{N,M}(t),\chi_M[R_0,\K]\chi_M\Phi_{N,M}(t)\rangle \\
	&\leq \frac{C}{N}\norm{W_0}_\textnormal{op}\norm{k(t)}_{L^2} \langle\Phi_{N,M}(t),\chi_M(\numop+1)^2\chi_M\Phi_{N,M}(t)\rangle \\
	&\leq \frac{C_tM}{N}\langle\Phi_{N,M}(t),(\numop+1)\Phi_{N,M}(t)\rangle,
\end{aligned}
\end{equation}
where we used $\norm{W_0}_\textnormal{op}\leq C_t$ and $\norm{k(t)}_{L^2}\leq C_t$. 

\paragraph{Estimate of the $[R_1,\K]+[R_1^\ast,\K]$ term.} 
We denote
\[
	\omega_1(\numop) = -2\frac{\numop\sqrt{N-\numop}}{N-1}, \qquad  W_1 = q(t)[w_N\ast\abs{u_N(t)}^2]u_N(t). 
\]
It is easy to check that $\norm{W_1}_{L^2}\leq C_t$. 
Then, we have $R_1=\omega_1(\numop)a(W_1)$ and
\[
	[R_1,\K] = 2\omega_1(\numop)\iint \overline{W_{1}(x)}k(t,x,y)a^\ast_y\diff x\diff y + (\omega_1(\numop)-\omega_1(\numop-2))\K a(W_1). 
\]
Using the bounds $\omega_1(\numop)^2\leq CN^{-1}(\numop+1)^2$ and $\pm (\omega_1(\numop)-\omega_1(\numop-2)) \leq CN^{-1/2}$ on $\mathcal{F}_\perp^{\leq M}(t)$ and \eqref{eq:offd:bound}, we have
\begin{align*}
	\MoveEqLeft[3] \pm\Im\langle\Phi_{N,M}(t),\chi_M[R_1,\K]\chi_M\Phi_{N,M}(t)\rangle \\
	&\leq \eta C \iint \abs{W_1(x)}^2\langle\Phi_{N,M}(t),a_y^\ast a_y\Phi_{N,M}(t)\rangle \diff x\diff y \\
	&\quad + \eta^{-1} C \norm{k(t)}_{L^2}^2\langle\Phi_{N,M}(t),\chi_M(\omega_1(\numop+1))^2\chi_M\Phi_{N,M}(t)\rangle \\
	&\quad + CN^{-1/2}\norm{k(t)}_{L^2}\norm{W_1}_{L^2}\langle\Phi_{N,M}(t),\chi_M(\numop+1)^{3/2}\chi_M\Phi_{N,M}(t)\rangle \\
	&\leq C_t\sqrt{\frac{M}{N}} \langle\Phi_{N,M}(t),(\numop+1)\Phi_{N,M}(t)\rangle,
\end{align*}
where we optimized over $\eta>0$ in the last step and used the bounds $\norm{W_1}_{L^2}\leq C_t$ and $\norm{k(t)}_{L^2}\leq C_t$. 
The term containing $[R_1^\ast,\K]$ can be estimated analogously and yields the same bound. In conclusion, we have
\begin{equation}\label{eq:R1K:bound}
	\pm\Im\langle\Phi_{N,M}(t),\chi_M\bigl([R_1,\K] + [R_1^\ast,\K]\bigr)\chi_M\Phi_{N,M}(t)\rangle 
	\leq C_t\sqrt{\frac{M}{N}} \langle\Phi_{N,M}(t),(\numop+1)\Phi_{N,M}(t)\rangle. 
\end{equation}

\paragraph{Estimate of the $[R_2,\K]+[R_2^\ast,\K]$ term.} 
We denote
\[
	\omega_2(\numop) = \frac{\sqrt{(N-\numop)(N-\numop-1)}}{N-1}-1. 
\]
Recall 
\[
	R_2^\ast = \omega_2(\numop) \iint \overline{K_2(t,x,y)}a_xa_y \diff x\diff y.
\]
Using \eqref{eq:K2:commutator}, we have
\begin{align*}
	[R_2^\ast,\K] &= \omega_2(\numop) \biggl[\iint \overline{K_2(t,x,y)}a_xa_y \diff x\diff y, \iint k(t,x',y')a^\ast_{x'}a^\ast_{y'} \diff x' \diff y'\biggr] \\
	&\quad + (\omega_2(\numop)-\omega_2(\numop-2)) \K \iint \overline{K_2(t,x,y)}a_xa_y \diff x\diff y \\
	&= \omega_2(\numop)\bigl(2\langle K_2(t),k(t)\rangle + 4\sq(B(t))\bigr) \\
	&\quad + (\omega_2(\numop)-\omega_2(\numop-2)) \K \iint \overline{K_2(t,x,y)}a_xa_y \diff x\diff y,
\end{align*}
where the kernel of the operator $B(t)$ is defined in \eqref{eq:defn:B}. Recall $\norm{B(t)}_\textnormal{op}\leq C_t$. Using the fact that $\chi_M\omega_2(\numop)\chi_M$ is self-adjoint, $\langle K_2(t),k(t)\rangle\in\R$, the bounds $\pm(\omega_2(\numop)-\omega_2(\numop-2)) \leq CN^{-1}$ and $\pm\omega_2(\numop)\leq CN^{-1}(\numop+1)$ on $\mathcal{F}_\perp^{\leq M}(t)$, and \eqref{eq:offd:bound}, we obtain
\begin{align*}
	\MoveEqLeft[3] \pm\Im\langle\Phi_{N,M}(t),\chi_M[R_2^\ast,\K]\chi_M\Phi_{N,M}(t)\rangle \leq \frac{CM}{N}\norm{B(t)}_\textnormal{op}\langle\Phi_{N,M}(t),\numop\Phi_{N,M}(t)\rangle  \\
	&\quad + \abs[\bigg]{\iint \langle \chi_M\Phi_{N,M}(t), (\omega_2(\numop)-\omega_2(\numop-2)) \K \overline{K_2(t,x,y)}a_{x}a_{y}\chi_M\Phi_{N,M}(t)\rangle \diff x\diff y}  \\
	&\leq \frac{C_tM}{N}\langle\Phi_{N,M}(t),\numop\Phi_{N,M}(t)\rangle  \\
	&\quad + \eta C_tN^{-1}\norm{(1-\Delta_x)^{-1/2}K_2(t,\cdot,\cdot)}_{L^2}^2 \langle\Phi_{N,M}(t),(\numop+1)\Phi_{N,M}(t)\rangle  \\
	&\quad + \eta^{-1} \frac{C_tM^2}{N} \langle\Phi_{N,M}(t),\sq(1-\Delta)\Phi_{N,M}(t)\rangle  \\
	&\leq C_t \langle\Phi_{N,M}(t),(\numop+1)\Phi_{N,M}(t)\rangle + C_tM^2N^{2\kappa-2}, 
\end{align*}
where we used Lemmas~\ref{lem:K2bounds} and \ref{lem:truncatedkinetic} and chose $\eta=N^{1-\kappa}$ in the last step. 
The $[R_2,\K]$ term can be estimated analogously and yields the same bound. In conclusion, we have
\begin{multline}\label{eq:R2K:bound}
	 \pm\Im\langle\Phi_{N,M}(t),\chi_M\bigl([R_2,\K] + [R_2^\ast,\K]\bigr)\chi_M\Phi_{N,M}(t)\rangle \\
	 \leq C_t \langle\Phi_{N,M}(t),(\numop+1)\Phi_{N,M}(t)\rangle + C_tM^2N^{2\kappa-2}. 
\end{multline}

\paragraph{Estimate of the $[R_3,\K]$ term.} 
We denote
\[
	\omega_3(\numop) = 2\frac{\sqrt{N-\numop}}{N-1}, \quad \K_\perp = \iint k_\perp(t,x,y)a^\ast_xa^\ast_y\diff x\diff y, \quad k_\perp(t)=q(t)\otimes q(t)k(t). 
\]
On $\mathcal{F}_\perp^{\leq N}(t)$, we have in the sense of quadratic forms
\[
	R_3 = \omega_3(\numop)\iint w_N(x-y)\overline{u_N(t,x)}a^\ast_ya_xa_y\diff x\diff y
\]
and
\begin{align*}
	[R_3,\K] = [R_3,\K_\perp] &= \omega_3(\numop) \biggl(2 \iint w_N(x-y)\overline{u_N(t,x)} k_\perp(t,x,y)a^\ast_y\diff x\diff y \\
	&\quad + 2 \iiint w_N(x-y)\overline{u_N(t,x)} k_\perp(t,y,x') a^\ast_ya^\ast_{x'}a_x\diff x\diff y\diff x' \\
	&\quad + 2 \iiint w_N(x-y)\overline{u_N(t,x)} k_\perp(t,x,x') a^\ast_ya^\ast_{x'}a_{y}\diff x\diff y\diff x' \biggr) \\
	&\quad + (\omega_3(\numop)-\omega_3(\numop-2)) \K_\perp \iint w_N(x-y)\overline{u_N(t,x)}a^\ast_ya_xa_y\diff x\diff y \\
	&\eqqcolon \omega_3(\numop) (J_1 + J_2 + J_3) + J_4.
\end{align*}
Here, using $\K_\perp$ instead of $\K$ allowed us to omit the projection $q(t)$ in the definition of $R_3$. 
We can bound for $\xi',\xi\in\mathcal{F}_\perp^{\leq N}(t)$
\begin{align*}
	\abs{\langle\xi',J_1\xi\rangle} &\leq C\norm{u_N(t)}_{L^\infty}\biggl(\eta\norm{w_N}_{L^1}\langle\xi',\numop\xi'\rangle \\
	&\quad + \eta^{-1}\iint \abs{w_N(x-y)}\abs{k_\perp(t,x,y)}^2\diff x \diff y\norm{\xi}^2\biggr) \\
			&\leq C\norm{u_N(t)}_{L^\infty}(\eta\langle\xi',\numop\xi'\rangle + \eta^{-1}N^\kappa \norm{(1-\Delta_x)^{1/2}k_\perp(t,\cdot,\cdot)}_{L^2}^2\norm{\xi}^2) \\
			&\leq C_t(\eta\langle\xi',\numop\xi'\rangle + \eta^{-1}N^\kappa (\norm{(1-\Delta_x)^{-1/2}K_2(t,\cdot,\cdot)}_{L^2}^2+C_t)\norm{\xi}^2) \\
			&\leq C_t (N^{1/2}\langle\xi',\numop\xi'\rangle + N^{2\kappa-1/2}\norm{\xi}^2),
\end{align*}
where we used Lemmas~\ref{lem:MT} and \ref{lem:K2bounds} and chose $\eta=N^{1/2}$ in the last step. Using Lemma~\ref{lem:K2bounds} again, we have for $\xi',\xi\in\mathcal{F}_\perp^{\leq M}(t)$
\begin{align*}
	\abs{\langle\xi',J_2\xi\rangle} 
	&\leq C\norm{u_N(t)}_{L^\infty}\bigl(M^{-1/2}\norm{w_N}_{L^1}\langle\xi',\numop^2\xi'\rangle + M^{1/2}\norm{w_N}_{L^1}\norm{k_\perp(t)}_{L^\infty_{1}L^2_{2}}^2 \langle\xi,\numop\xi\rangle\bigr) \\
	&\leq C_t M^{1/2}(\langle\xi',\numop\xi'\rangle + \langle\xi,\numop\xi\rangle).
\end{align*}
Similarly, we have for $\xi',\xi\in\mathcal{F}_\perp^{\leq M}(t)$
\begin{equation*}
	\abs{\langle\xi',J_3\xi\rangle} 
	\leq C_t M^{1/2}(\langle\xi',\numop\xi'\rangle + \langle\xi,\numop\xi\rangle). 
\end{equation*}
Using $\pm\omega_3(\numop)\leq CN^{-1/2}$ and the bounds on $J_1$, $J_2$, and $J_3$, we obtain
\[
	 \pm\Im\langle\Phi_{N,M}(t),\chi_M\omega_3(\numop)(J_1+J_2+J_3)\chi_M\Phi_{N,M}(t)\rangle
	 \leq C_t( \langle\Phi_{N,M}(t),\numop\Phi_{N,M}(t)\rangle + N^{2\kappa -1}). 
\]
Finally, using $\pm(\omega_3(\numop)-\omega_3(\numop-2))\leq CN^{-3/2}$ on $\mathcal{F}_\perp^{\leq M}(t)$, \eqref{eq:offd:bound}, and Lemmas~\ref{lem:MT} and \ref{lem:truncatedkinetic}, we find 
\begin{align*}
	\MoveEqLeft[3] \pm\Im\langle\Phi_{N,M}(t),\chi_MJ_4\chi_M\Phi_{N,M}(t)\rangle \\
	&\leq \iint \abs{\langle \chi_M\Phi_{N,M}(t), (\omega_3(\numop)-\omega_3(\numop-2)) \K_\perp \times \\
	&\quad \times w_N(x-y)\overline{u_N(t,x)}a^\ast_{y}a_{x}a_{y}\chi_M\Phi_{N,M}(t)\rangle}\diff x\diff y \\
	&\leq CMN^{-3/2}\norm{k(t)}_{L^2}\norm{u_N(t)}_{L^\infty} \Bigl( \eta\norm{w_N}_{L^1} \langle\Phi_{N,M}(t),\numop\Phi_{N,M}(t)\rangle \\
	&\quad + \eta^{-1}N^\kappa\iint\langle\Phi_{N,M}(t),\chi_Ma^\ast_ya^\ast_x(1-\Delta_x)a_xa_y\chi_M\Phi_{N,M}(t)\rangle\diff x\diff y \Bigr) \\
	&\leq C_tMN^{-3/2} ( \eta \langle\Phi_{N,M}(t),\numop\Phi_{N,M}(t)\rangle +  \eta^{-1}MN^{2\kappa} ) \\
	&\leq C_t\langle\Phi_{N,M}(t),\numop\Phi_{N,M}(t)\rangle + C_t M^3 N^{2\kappa - 3},
\end{align*}
where we chose $\eta= M^{-1} N^{3/2}$ in the last step. 
In conclusion, we have
\begin{multline}\label{eq:R3K:bound}
	 \pm\Im\langle\Phi_{N,M}(t),\chi_M[R_3,\K]\chi_M\Phi_{N,M}(t)\rangle \\
	 \leq C_t( \langle\Phi_{N,M}(t),\numop\Phi_{N,M}(t)\rangle + N^{2\kappa-1} + M^3N^{2\kappa -3} ). 
\end{multline}

\paragraph{Estimate of the $[R_3^\ast,\K]$ term.} 
On $\mathcal{F}_\perp^{\leq N}(t)$, we have in the sense of quadratic forms
\begin{align*}
	[R_3^\ast,\K] &= 2 \omega_3(\numop-1) \biggl(\iiint w_N(x-y)u_N(t,x) k_\perp(t,y,x') a^\ast_ya^\ast_{x}a_{x'}^\ast\diff x\diff y\diff x'\biggr) \\
	&\quad + (\omega_3(\numop-1)-\omega_3(\numop-3)) \K_\perp \biggl(\iint w_N(x-y)u_N(t,x)a^\ast_ya_x^\ast a_y\diff x\diff y\biggr). 
\end{align*}
Similarly to the step before, we obtain 
\begin{equation}\label{eq:R3astK:bound}
	 \pm\Im\langle\Phi_{N,M}(t),\chi_M[R_3^\ast,\K]\chi_M\Phi_{N,M}(t)\rangle \leq C_t( \langle\Phi_{N,M}(t),\numop\Phi_{N,M}(t)\rangle + MN^{2\kappa -1} ). 
\end{equation}

\paragraph{Estimate of the $[R_4,\K]$ term.} 
On $\mathcal{F}_\perp^{\leq N}(t)$, we have in the sense of quadratic forms
\begin{align*}
	[R_4,\K] &= \frac{1}{N-1} \biggl( \iint w_N(x-y) k_\perp(t,x,y) a^\ast_xa^\ast_y \diff x \diff y \\ 
	&\quad + 2 \iiint w_N(x-y)k_\perp(t,y,x') a^\ast_xa^\ast_ya^\ast_{x'}a_x \diff x\diff y\diff x' \biggr). 
\end{align*}
Similarly to the steps before, we obtain
\begin{multline}\label{eq:R4K:bound}
	 \pm\Im\langle\Phi_{N,M}(t),\chi_M[R_4,\K]\chi_M\Phi_{N,M}(t)\rangle \\
	  \leq C_t( \langle\Phi_{N,M}(t),\numop\Phi_{N,M}(t)\rangle + M^{1/2}N^{2\kappa-1} + M^2N^{2\kappa-2} ). 
\end{multline}

\paragraph{Estimate of the $\chi_M \calENt \K \delta_+ + \delta_- \K \calENt \chi_M$ term.}
We write 
\[
	\chi_M \calENt \K \delta_+ = \chi_M \K \calENt \delta_+ + \chi_M [\calENt, \K] \delta_+. 
\]
Using 
\[
	\pm\delta_+ \leq \frac{C}{M} \ind(M/2 \leq \numop \leq M),
\]
$\chi_M [\calENt, \K] \delta_+$ can be bounded similarly to the terms in the steps before and we obtain
\[
	  \pm\Im\langle\Phi_{N,M}(t),\chi_M [\calENt, \K] \delta_+\Phi_{N,M}(t)\rangle \leq C_t( \langle\Phi_{N,M}(t),\numop\Phi_{N,M}(t)\rangle + M^{-1}N^{2\kappa-1}). 
\]
The terms $\chi_M \K(R_j+R_j^\ast)\delta_+$, $j=0,\dotsc,4$, are all bounded similarly. We bound, for example, the terms containing $R_3$ and $R_4$, which contribute the largest errors. Using $\pm\omega_3(\numop)\leq CN^{-1/2}$, \eqref{eq:offd:bound}, and Lemmas~\ref{lem:MT} and \ref{lem:truncatedkinetic}, we have
\begin{align*}
	\MoveEqLeft[3]  \pm\Im\langle\Phi_{N,M}(t),\chi_M\K R_3 \delta_+\Phi_{N,M}(t)\rangle \\
	&= \pm \iint w_N(x-y)\overline{u_N(t,x)} \Im \langle a_y\omega_3(\numop)\chi_M\Phi_{N,M}(t), \K_\perp a_xa_y \delta_+ \Phi_{N,M}(t)\rangle \diff x\diff y \\
	&\leq C\norm{k(t)}_{L^2}\norm{u_N(t)}_{L^\infty} \Bigl( \eta \norm{w_N}_{L^1} MN^{-1} \langle \Phi_{N,M}(t), \numop \Phi_{N,M}(t) \rangle \\
	&\quad + \eta^{-1} N^{\kappa} \langle \Phi_{N,M}(t), \delta_+ (\numop+1)^2\sq(1-\Delta) \delta_+ \Phi_{N,M}(t) \rangle \Bigr) \\
	&\leq C_t ( \eta MN^{-1}  \langle \Phi_{N,M}(t), \numop \Phi_{N,M}(t) \rangle + \eta^{-1} N^{2\kappa} ) \\
	&\leq C_t ( \langle \Phi_{N,M}(t), \numop \Phi_{N,M}(t) \rangle + MN^{2\kappa-1} ),
\end{align*}
where we chose $\eta=M^{-1}N$.
Similarly, we obtain
\begin{align*}
	\MoveEqLeft[3]  \pm\Im\langle\Phi_{N,M}(t),\chi_M\K R_4 \delta_+\Phi_{N,M}(t)\rangle \\
	&= \pm \frac{1}{2(N-1)} \iint w_N(x-y) \Im \langle a_xa_y\chi_M\Phi_{N,M}(t), \K_\perp a_xa_y \delta_+ \Phi_{N,M}(t)\rangle \diff x\diff y \\
	&\leq \frac{CM}{N} \norm{k(t)}_{L^2} \iint\abs{w_N(x-y)} \Bigl( \eta \langle\Phi_{N,M}(t),\chi_Ma^\ast_xa^\ast_ya_xa_y\chi_M\Phi_{N,M}(t)\rangle \\
	&\quad + \eta^{-1}\langle\Phi_{N,M}(t),\delta_+a^\ast_xa^\ast_ya_xa_y\delta_+\Phi_{N,M}(t)\rangle\Bigr) \diff x \diff y \\
	&\leq \frac{C_tM}{N} ( \eta MN^{2\kappa} + \eta^{-1} M^{-1} N^{2\kappa}) \leq C_t MN^{2\kappa-1},
\end{align*}
where we chose $\eta=M^{-1}$. 
The $\delta_- \K \calENt \chi_M$ term can be estimated similarly to the $\chi_M \calENt \K \delta_+$ term and yields the same error bounds. In conclusion, we have
\begin{multline}\label{eq:delta:bound}
	 \pm\Im\langle\Phi_{N,M}(t),(\chi_M \calENt \K \delta_+ + \delta_- \K \calENt \chi_M)\Phi_{N,M}(t)\rangle \\
	 \leq C_t( \langle\Phi_{N,M}(t),\numop\Phi_{N,M}(t)\rangle + MN^{2\kappa-1}). 
\end{multline}
Collecting the bounds \eqref{eq:R0K:bound}, \eqref{eq:R1K:bound}, \eqref{eq:R2K:bound}, \eqref{eq:R3K:bound}, \eqref{eq:R3astK:bound}, \eqref{eq:R4K:bound}, and \eqref{eq:delta:bound}, we obtain
\[
	-I_1 \leq C_t\langle\Phi_{N,M}(t),(\numop+1)\Phi_{N,M}(t)\rangle + C_tMN^{2\kappa-1}.
\]
Thus, we have
\begin{multline*}
	\frac{\diff}{\diff t}\biggl(\langle \Phi_{N,M}(t),\numop\Phi_{N,M}(t)\rangle + 2\Re\langle\Phi_{N,M}(t),\iint k(t,x,y)a^\ast_xa^\ast_y \diff x\diff y \Phi_{N,M}(t)\rangle\biggr) \\
	\leq C_t\langle\Phi_{N,M}(t),(\numop+1)\Phi_{N,M}(t)\rangle + C_tMN^{2\kappa-1}. 
\end{multline*}
Using a Gr{\"o}nwall argument as in the proof of \eqref{eq:number:bog}, we get
\[
	\langle \Phi_{N,M}(t),\numop\Phi_{N,M}(t)\rangle \leq C_t(1+MN^{2\kappa-1}). 
\]
By our assumption $\delta>2\kappa$, we have $MN^{2\kappa-1}=N^{2\kappa-\delta}\leq 1$, which concludes the proof. 
\end{proof}

Now we can compare $\Phi_N$ with the truncated dynamics. 

\begin{lem}\label{lem:tt}
Let $M_1=N^{1-\delta_1}$, $M_2=N^{1-\delta_2}$, $\delta_1,\delta_2\in(0,1)$, $\delta_1<\delta_2$.
If $w_N$ satisfies \eqref{eq:poly} for $\beta>0$, we have for $N$ large enough
\begin{equation}\label{eq:poly:f1}
	\norm{\Phi_N(t)-\Phi_{N,M_1}(t)}^2 \leq \frac{C_{t}\beta\log(N)N^{1/2}}{M_1}
\end{equation}
and 
\begin{equation}\label{eq:poly:12}
	\norm{\Phi_{N,M_1}(t)-\Phi_{N,M_2}(t)}^2 \leq \frac{C_{t}(\beta\log(N))^{3/2}}{M_2}. 
\end{equation}
If $w_N$ satisfies \eqref{eq:exp} for $0<\kappa<1$, we require additionally $\delta_1,\delta_2\in(\kappa,1)$. 
Then, we have for $N$ large enough
\begin{equation}\label{eq:exp:f1}
	\norm{\Phi_N(t)-\Phi_{N,M_1}(t)}^2 \leq \frac{C_{t}N^{\kappa+1/2}}{M_1}
\end{equation}
and
\begin{equation}\label{eq:exp:12}
	\norm{\Phi_{N,M_1}(t)-\Phi_{N,M_2}(t)}^2 \leq \frac{C_{t}N^{3\kappa/2}}{M_2}. 
\end{equation}
\end{lem}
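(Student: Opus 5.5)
We plan a Duhamel-type energy estimate for the difference of two dynamics whose generators differ only by error terms living on large particle numbers. For \eqref{eq:poly:f1}, note $\Phi_N(t)$ solves $\ii\partial_t\Phi_N=\ind^{\leq N}(\BH+\mathcal{E}_N)\ind^{\leq N}\Phi_N$ while $\Phi_{N,M_1}$ solves $\ii\partial_t\Phi_{N,M_1}=(\BH+\chi_{M_1}\mathcal{E}_N\chi_{M_1})\Phi_{N,M_1}$. Since both generators are self-adjoint, we compute
\[
	\frac{\diff}{\diff t}\norm{\Phi_N(t)-\Phi_{N,M_1}(t)}^2 = 2\Im\bigl\langle \Phi_N(t),\bigl(\mathcal{G}_N(t)-\mathcal{G}_{N,M_1}(t)\bigr)\Phi_{N,M_1}(t)\bigr\rangle ,
\]
using that $\Phi_N=\ind^{\leq N}\Phi_N$. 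The difference of generators acting on $\Phi_{N,M_1}$ consists of $\ind^{\leq N}\mathcal{E}_N\ind^{\leq N}-\chi_{M_1}\mathcal{E}_N\chi_{M_1}$ together with $\ind^{\leq N}\BH\ind^{\leq N}-\BH$; the latter only contributes through $\ind(\numop>N)\Phi_{N,M_1}$ (since $\BH$ changes $\numop$ by at most $2$), which we bound by Markov's inequality and Lemma~\ref{lem:truncatednumber}. For the error part we split $\mathcal{E}_N$ acting between $\Phi_N$ and $(1-\chi_{M_1})\Phi_{N,M_1}$, resp.\ $(1-\chi_{M_1})\Phi_N$ and $\Phi_{N,M_1}$, so every term carries a factor $\ind(\numop\geq 3M_1/4)$ (up to shifts by $\pm2$) on one side.

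The key estimates are then Lemma~\ref{lem:error-bounds} with $m=N$: the $R_4$ part gives $C\beta\log(N)\norm{\sq(1-\Delta)^{1/2}\Phi_N}\norm{\sq(1-\Delta)^{1/2}\Phi_{N,M_1}}\cdot\frac{N}{N}$, and the remaining part gives $C_t\sqrt{\beta\log N}$ times mixed norms. We use Lemma~\ref{lem:apkinetic} ($\sq(1-\Delta)$-expectation of $\Phi_N$ at most $C_tN$), Lemma~\ref{lem:truncatedkinetic} ($\leq C_t\beta\log N$ for $\Phi_{N,M_1}$), and the localisation: $\norm{\ind(\numop\geq M_1/2)\Phi_{N,M_1}}^2\leq C_tM_1^{-1}$ by Lemma~\ref{lem:truncatednumber}, plus the analogous bound $\norm{\sq(1-\Delta)^{1/2}\ind(\numop\ge M_1/2)\xi}^2\le \frac{C}{M_1}\langle\xi,\numop\,\sq(1-\Delta)\xi\rangle$-type estimates. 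The intended bookkeeping is that one factor $\sqrt{N}$ comes from the kinetic energy of $\Phi_N$ and the smallness $M_1^{-1}$ from high particle numbers of the truncated dynamics, giving a derivative bound $C_t\beta\log(N)N^{1/2}M_1^{-1}$ after Cauchy--Schwarz with the appropriate weights; integrating in time (with $\Phi_N(0)=\Phi_{N,M_1}(0)$ on $\mathcal{F}^{\le N}_\perp$ up to $\ind(\numop>N)\Phi_0$, which is $O(N^{-1})$) yields \eqref{eq:poly:f1}. To get the genuine $M_1^{-1}$ rather than $M_1^{-1/2}$ we would weight the truncated side by $\numop^{1/2}$: bound the cut-off regions via $\ind(\numop\ge M_1/2)\le (2\numop/M_1)^{1/2}\cdot(2\numop/M_1)^{1/2}$ and use the Bogoliubov-type bound of $\langle\numop\,\sq(1-\Delta)\rangle$ for $\Phi_{N,M_1}$ obtained by repeating the Gr\"onwall argument of Lemma~\ref{lem:truncatedkinetic} with an extra $\numop$ weight.

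For \eqref{eq:poly:12} the same computation applies with the generator difference $\chi_{M_1}\mathcal{E}_N\chi_{M_1}-\chi_{M_2}\mathcal{E}_N\chi_{M_2}$, which is supported on $\numop\gtrsim M_2$ and bounded by $\numop\le M_1$. Now both sides have kinetic energy $\le C_t\beta\log N$ (Lemma~\ref{lem:truncatedkinetic}), so Lemma~\ref{lem:error-bounds} with $m=M_1$ gives the $R_4$-contribution $\frac{\beta\log(N)M_1}{N}\cdot\beta\log N$, small, and the other terms $\sqrt{\beta\log(N)M_1/N}\,(\beta\log N)^{1/2}\norm{(\numop+1)^{1/2}\ind(\numop\ge M_2/2)\Phi_{N,M_j}}$; the last factor is bounded by $C_t$ times high-moment smallness $M_2^{-1}$ using a moment bound $\langle\numop^2\rangle\le C_t$ for truncated dynamics (same proof as Lemma~\ref{lem:truncatednumber} with $\numop^2$). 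This gives roughly $(\beta\log N)^{3/2}M_2^{-1}$ after Gr\"onwall. The exponential case follows verbatim replacing $\beta\log N$ by $N^\kappa$, the conditions $\delta_j>\kappa$ entering through Lemma~\ref{lem:truncatedkinetic}.

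The main obstacle is obtaining the full factor $M^{-1}$ (rather than $M^{-1/2}$) from the cut-off regions, i.e.\ higher-moment particle number control ($\langle\numop^2\rangle$ or $\langle\numop\,\sq(1-\Delta)\rangle$) for the truncated dynamics; I would first try to adapt the Gr\"onwall argument of Lemma~\ref{lem:truncatednumber} to $\numop^2$, using the smooth cut-off to control commutators.
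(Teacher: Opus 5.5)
\textbf{There is a genuine gap in the bookkeeping for \eqref{eq:poly:f1}.} In your Duhamel formula, every contribution of the generator difference pairs $\Phi_N$ with the tail $\ind(\numop\geq 3M_1/4-2)\Phi_{N,M_1}$. On the $\Phi_N$ side you only have $\langle\Phi_N,\sq(1-\Delta)\Phi_N\rangle\leq C_tN$; its high sectors carry no smallness, since that is what is being proved. So the tail of $\Phi_{N,M_1}$ would have to be $O(M_1^{-1})$ \emph{in norm}, e.g.\ $\|\sq(1-\Delta)^{1/2}\ind(\numop\geq M_1/2)\Phi_{N,M_1}\|\leq CM_1^{-1}$.

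Your weighting $\ind(\numop\geq M_1/2)\leq 2\numop/M_1$, combined with a bound on $\langle\numop\,\sq(1-\Delta)\rangle$, only gives $O(M_1^{-1/2})$. The resulting derivative bound is of order $\beta\log(N)\sqrt{N/M_1}=\beta\log(N)N^{\delta_1/2}$, which diverges. Closing this route would need bounds such as $\langle\numop^2\sq(1-\Delta)\rangle$ and $\langle\numop^3\rangle$ for $\Phi_{N,M_1}$, which you do not have. The coupling between $\numop\leq N$ and $\numop>N$ through the pairing term has the same defect: $\|\ind(\numop>N)\Phi_{N,M_1}\|^2\leq C/N$ does not beat weights of size $N$. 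Moreover, Lemma~\ref{lem:truncatednumber} requires $\delta>2\kappa$ in the exponential case, whereas here only $\delta_j>\kappa$ is assumed.

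For \eqref{eq:poly:12} the same problem appears. Even with $\langle\numop^2\rangle\leq C_t$, the factor $\|(\numop+1)^{1/2}\ind(\numop\geq M_2/2)\Phi\|$ is only $O(M_2^{-1/2})$. Your estimate therefore gives $\beta\log(N)\sqrt{M_1/(NM_2)}$ plus the $R_4$ term $(\beta\log N)^2M_1/N$. Neither is bounded by $C(\beta\log N)^{3/2}/M_2$ when $\delta_1+\delta_2<1$, for instance $\delta_1=1/6$, $\delta_2=2/3$ as used in the proof of Theorem~\ref{thm:norm}.

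\textbf{The missing idea is to never estimate the generator difference.} Use $\|\Phi_N-\Phi_{N,M_1}\|^2\leq 2(1-\Re\langle\Phi_N,\Phi_{N,M_1}\rangle)$ and split $1=f_{M_1}^2+g_{M_1}^2$ with the smooth $f_M$ of \eqref{eq:smoothcut}. Since $f_M^2\chi_M=f_M^2$ and the generators change $\numop$ by at most $2$, one has $f_M^2\mathcal{G}_{N,M}(t)=f_M^2\mathcal{G}_N(t)$. Hence $\frac{\diff}{\diff t}\langle\Phi_N,f_{M_1}^2\Phi_{N,M_1}\rangle=\langle\Phi_N,\ii[\mathcal{G}_N(t),f_{M_1}^2]\Phi_{N,M_1}\rangle$.

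Lemma~\ref{lem:average} bounds this commutator by $C_t(\beta\log N)^{1/2}M_1^{-1}$ in $\sq(1-\Delta)^{1/2}$-norms on both sides. The full factor $M^{-1}$ comes from the smoothness of $f$, not from moments: only the number-changing terms contribute, each with a discrete derivative $f^2((\numop+j)/M)-f^2(\numop/M)=O(M^{-1})$. With Lemmas~\ref{lem:apkinetic} and \ref{lem:truncatedkinetic} this gives $C_t\beta\log(N)N^{1/2}/M_1$. The remainder satisfies $|\langle\Phi_N,g_{M_1}^2\Phi_{N,M_1}\rangle|\leq\|g_{M_1}\Phi_{N,M_1}\|\leq C(\langle\Phi_{N,M_1},\numop\Phi_{N,M_1}\rangle/M_1)^{1/2}$. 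This needs only a first moment and is dominated because $M_1\leq N$.

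For \eqref{eq:poly:12} the same argument works with $f_{M_2}$, using $[\mathcal{G}_{N,M_1},f_{M_2}^2]=[\mathcal{G}_N,f_{M_2}^2]$ and the truncated kinetic bound on both sides. This yields exactly $(\beta\log N)^{1/2}\cdot\beta\log(N)/M_2$. The $g_{M_2}$-term is now a product of two small tails, $\|g_{M_2}\Phi_{N,M_1}\|\,\|g_{M_2}\Phi_{N,M_2}\|\leq C_t\beta\log(N)/M_2$.
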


Let $f\colon\R\to[0,1]$ be a smooth function such that $f(r)=1$ if $r\leq1/4$ and $f(r)=0$ if $r\geq1/2$, and define the operator $f_{M}$ on $\mathcal{F}$ by
\begin{equation}\label{eq:smoothcut}
	f_M = f\biggl(\frac{\numop}{M}\biggr). 
\end{equation}
Note that $f_M^2\chi_M = f_M^2$ and $\ind^{\leq M/2 + 2}\chi_M = \ind^{\leq M/2 + 2}$ for $M\geq8$ and thus, 
\begin{align*}
	f_M^2\mathcal{G}_{N,M}(t) &= f_M^2\BH(t) + f_M^2\chi_M\mathcal{E}_N(t)\chi_M = f_M^2\BH(t) + f_M^2\mathcal{E}_N(t)\ind^{\leq M/2 + 2}\chi_M \\
	&= f_M^2\BH(t) + f_M^2\mathcal{E}_N(t) \ind^{\leq M/2 + 2} = f_M^2\BH(t) + f_M^2\mathcal{E}_N(t) = f_M^2\mathcal{G}_N(t). 
\end{align*}
To prove Lemma~\ref{lem:tt}, we will need the following result. 

\begin{lem}\label{lem:average}
Let $M=N^{1-\delta}$, $\delta\in(0,1)$. If $w_N$ satisfies \eqref{eq:poly} for $\beta>0$, we have for $\xi',\xi\in\mathcal{F}_\perp(t)$
\begin{equation*}
	\abs{\langle\xi',[\mathcal{G}_N(t),f_{M}^2]\xi\rangle} \leq \frac{C_{t}(\beta\log(N))^{1/2}}{M}\norm{\sq(1-\Delta)^{1/2}\xi'}\norm{\sq(1-\Delta)^{1/2}\xi}. 
\end{equation*}
If $w_N$ satisfies \eqref{eq:exp} for $0<\kappa<1$, $\beta\log(N)$ is replaced by $N^{\kappa}$. 
\end{lem}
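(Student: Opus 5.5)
The plan is to exploit the fact that $f_M^2$ is a function of $\numop$. Every term of $\mathcal{G}_N(t)=\ind^{\leq N}(\BH(t)+\mathcal{E}_N(t))\ind^{\leq N}$ that conserves particle number therefore commutes with $f_M^2$. Only the pieces that change $\numop$ by $\pm1$, $\pm2$ survive in the commutator. For such a piece $P$ with $\numop P=P(\numop+j)$ we have
\[
	[P,f_M^2]=P\bigl(f_M^2(\numop)-f_M^2(\numop+j)\bigr).
\]
By the smoothness of $f$, $\abs{f^2(r/M)-f^2((r+j)/M)}\leq C\abs{j}/M$. Moreover, this difference is supported on $\{M/4-2\leq\numop\leq M/2+2\}$, so in particular on $\ind^{\leq M}$. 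This gives the factor $M^{-1}$. What remains is to bound each surviving piece, localized to $\numop\leq M$, by $C_t(\beta\log N)^{1/2}\norm{\sq(1-\Delta)^{1/2}\xi'}\norm{\sq(1-\Delta)^{1/2}\xi}$, uniformly in $M\leq N$.

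The number-conserving parts are $\sq(h+K_1)$, $R_0$ and $R_4$; these drop out. The surviving parts are the following.
\begin{enumerate}[(a)]
\item The pairing terms $\frac12\iint K_2 a^\ast a^\ast+\hc$ in $\BH$, together with $R_2$ and its adjoint. We have $\abs{\omega_2}\leq C\numop/N$ on $\numop\leq M$, so the factor $1+\omega_2$ is bounded. I would use \eqref{eq:offd:bound} in its refined form
\[
	\abs{\langle\xi',\textstyle\iint K_2a^\ast_xa^\ast_y\,\xi\rangle}\leq\norm{(1-\Delta_x)^{-1/2}K_2}_{L^2}\norm{(\numop+1)^{1/2}\xi'}\norm{\sq(1-\Delta)^{1/2}\xi},
\]
and then bound $\numop\leq\sq(1-\Delta)$. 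By \eqref{eq:sqrtdeltaK2bound} this yields $(\beta\log N)^{1/2}$. This is exactly the origin of the square root in the statement, and I expect it to be the dominant term.
\item $R_1$ and its adjoint, where $\abs{\omega_1}\leq C\numop N^{-1/2}$. Together with $\norm{W_1}_{L^2}\leq C_t$ this gives a contribution of order $\sqrt{\numop}\,\numop/\sqrt N$. On $\numop\leq M\leq N$ this is at most $C\numop$, bounded by $\sq(1-\Delta)$. Careful bookkeeping of the powers of $\numop$ is needed here, for instance $(\numop+1)^{1/2}\cdot \numop^{1/2}(M/N)^{1/2}$.
\item $R_3$ and its adjoint, which carry $\omega_3\sim N^{-1/2}$. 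For these I would use Lemma~\ref{lem:MT} in the Cauchy--Schwarz form $\iint\abs{w_N(x-y)}\norm{a_xa_y\xi}^2\leq C\beta\log N\,\langle\xi,\numop\sq(1-\Delta)\xi\rangle$, and pair it with $\norm{u_N}_{L^\infty}\norm{w_N}_{L^1}$ on the other side. This gives a bound of order $\sqrt{\beta\log N\,M/N}\,\norm{(\numop+1)^{1/2}\xi'}\norm{\sq(1-\Delta)^{1/2}\xi}$. This is essentially estimate \eqref{eq:eb1} of Lemma~\ref{lem:error-bounds}, and since $M\leq N$ it is dominated by the claimed bound.
\end{enumerate}

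To make this rigorous, I would write $[\mathcal{G}_N,f_M^2]=\sum_P P\,g_j(\numop)$ with $g_j=f_M^2(\numop)-f_M^2(\numop+j)$. The multiplier $g_j(\numop)$ can be moved onto $\xi$, where it commutes with $\sq(1-\Delta)$. Then $\norm{\sq(1-\Delta)^{1/2}g_j\xi}\leq (C/M)\norm{\sq(1-\Delta)^{1/2}\xi}$, and the support of $g_j$ justifies using the truncated estimates of Lemma~\ref{lem:error-bounds} with $m\sim M$.

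The main obstacle is purely technical: ensuring every $\numop$-weight is absorbed without losing powers of $M$. Most terms give at best $\sqrt{M/N}\leq1$, which is fine. The essential input, and the step to check most carefully, is that the $K_2$ term is controlled only by $(\beta\log N)^{1/2}$ via \eqref{eq:sqrtdeltaK2bound}, rather than by $\norm{K_2}_{L^2}\sim N^{\beta}$. For \eqref{eq:exp}, the same argument goes through with $\beta\log N$ replaced by $N^\kappa$ throughout.
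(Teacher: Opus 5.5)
Your argument is correct, and its skeleton is the paper's. Only $R_1$, $G_2=\iint K_2a^\ast_xa^\ast_y\,(1+\omega_2(\numop))$, $R_3$ and their adjoints survive the commutator. The $f^2$-differences supply $1/M$ and localize to $\numop\sim M$. Your treatment of the cubic terms is exactly the paper's: Cauchy--Schwarz against $\abs{w_N(x-y)}$, with $\ind^{\leq M}\sq_2(\abs{w_N})\ind^{\leq M}\leq C\beta\log(N)M\sq(1-\Delta)$ on one side and $\norm{u_N}_{L^\infty}^2\norm{w_N}_{L^1}$ on the other, which gives $(\beta\log N)^{1/2}/\sqrt{MN}$.

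The genuine difference is the pairing term. The paper drops the projections $q$ (since $\xi'\in\mathcal{F}_\perp$), replaces $K_2$ by $\tilde K_2=u_Nw_Nu_N$, and uses Cauchy--Schwarz with weight $\abs{w_N(x-y)}$. The $\xi'$-side costs $\beta\log(N)M$ by \eqref{eq:absR4bound}. This growth in $M$ is cancelled by $\abs{\tilde\omega_2}^2\leq C\numop/M^3$ on the $\xi$-side. So the paper needs only Lemma~\ref{lem:MT} for both quadratic and cubic terms. You instead use the Fourier estimate \eqref{eq:sqrtdeltaK2bound} from Lemma~\ref{lem:K2bounds}. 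That is shorter, works with $K_2$ itself, and needs the $\numop$-localization only through $\abs{g_2}\leq C/M$ and to absorb a $+1$. Both routes give $(\beta\log N)^{1/2}/M$.

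\textbf{Correction 1: the pairing inequality is reversed.} As written, with $(\numop+1)^{1/2}$ on $\xi'$ and $\sq(1-\Delta)^{1/2}$ on $\xi$, it is false. For $\xi$ the vacuum $\Omega$ its right-hand side vanishes, while the left-hand side is a multiple of $\langle\xi'^{(2)},K_2\rangle\neq0$. More generally, $(1-\Delta)^{1/2}$ must act on a created particle, or you are left with $\norm{K_2}_{L^2}\sim N^\beta$.

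The fix is to write $\iint K_2a^\ast_xa^\ast_y=\iint g(x,y)\,b^\ast_xa^\ast_y$. Here $g=(1-\Delta_x)^{-1/2}K_2$ and $b^\ast_x$ is the creation field twisted by $(1-\Delta)^{1/2}$, so that $\int b^\ast_xb_x=\sq(1-\Delta)$. This gives the correct bound $\norm{(1-\Delta_x)^{-1/2}K_2}_{L^2}\norm{\sq(1-\Delta)^{1/2}\xi'}\norm{(\numop+1)^{1/2}\xi}$. For $G_2^\ast$ you get the mirror version; both agree with the paper's intermediate bounds.

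Because the target is symmetric, the swap is harmless. However, $(\numop+1)^{1/2}$ now acts on $g_2(\numop)\xi$. You need its support $\numop\geq M/4-2$ (so $M$ large) to get $\numop+1\leq C\numop\leq C\sq(1-\Delta)$. The $+1$ cannot be absorbed otherwise, since the right-hand side of the lemma vanishes on the vacuum.

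\textbf{Correction 2: the citation of Lemma~\ref{lem:error-bounds}.} That lemma as stated bounds only the full sum $\mathcal{E}_N-R_4$, not individual pieces carrying different multipliers $g_j(\numop)$. What actually carries the argument is your direct estimates in (b) and (c), not that citation.
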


\begin{proof}
We focus on the case where $w_N$ satisfies \eqref{eq:poly}, since the case where $w_N$ satisfies \eqref{eq:exp} follows by setting $\beta=N^\kappa/\log(N)$. 
Let $\xi',\xi\in\mathcal{F}_\perp(t)$. We start by writing 
\[
	[\mathcal{G}_N(t),f_{M}^2] = \frac12\Bigl([R_1+R_1^\ast,f_{M}^2] + [G_2+G_2^\ast,f_{M}^2] + [R_3+R_3^\ast,f_{M}^2]\Bigr),
\]
where 
\begin{align*}
	G_2 &= \iint K_2(t,x,y)a^\ast_xa^\ast_y\diff x\diff y + R_2 \\
	&= \iint K_2(t,x,y) a^\ast_x a^\ast_y \diff x \diff y \biggl(\frac{\sqrt{(N-\numop)(N-\numop-1)}}{N-1}\biggr). 
\end{align*}
We estimate all three terms separately. 

\paragraph{Estimate of the linear terms.}

Recall 
\[
	R_1=-2\frac{\numop\sqrt{N-\numop}}{N-1} a(q(t)[w_N\ast\abs{u_N(t)}^2]u_N(t)). 
\]
We calculate 
\[
	[R_1,f_{M}^2] = \tilde{\omega}_1 a(q(t)[w_N\ast\abs{u_N(t)}^2]u_N(t)),
\]
where
\[
	\tilde{\omega}_1 = -2\frac{\numop\sqrt{N-\numop}}{N-1} \biggl(f^2\biggl(\frac{\numop+1}{M}\biggr) - f^2\biggl(\frac{\numop}{M}\biggr)\biggr). 
\]
We observe that for $N\geq2$ 
\[
	\abs{\tilde{\omega}_1} \leq \frac{C\numop}{M\sqrt{N}}\ind^{\leq M}. 
\]
Using the simple estimate
\[
	a^\ast(W_1)a(W_1) \leq \norm{W_1}_{L^2}^2\numop,
\]
where $W_1=q(t)[w_N\ast\abs{u_N(t)}^2]u_N(t)$ satisfies
\[
	\norm{W_1}_{L^2} \leq \norm{w_N}_{L^1} \norm{u_N(t)}_{L^\infty}^2 \norm{u_N(t)}_{L^2} \leq C_t,
\]
we obtain
\begin{align*}
\abs{\langle\xi',[R_1,f_{M}^2]\xi\rangle} &\leq \langle\xi',\abs{\tilde{\omega}_1}\xi'\rangle^{1/2} \langle\xi,a^\ast(W_1)\abs{\tilde{\omega}_1}a(W_1)\xi\rangle^{1/2} \\
&\leq \frac{C}{M\sqrt{N}} \langle\xi',\numop\xi'\rangle^{1/2} \langle\xi,a^\ast(W_1)\numop\ind^{\leq M}a(W_1)\xi\rangle^{1/2} \\
&\leq \frac{C_t}{\sqrt{MN}} \langle\xi',\numop\xi'\rangle^{1/2} \langle\xi,\numop\xi\rangle^{1/2}. 
\end{align*}
Similarly, we get 
\[
	\abs{\langle\xi',[R_1^\ast,f_{M}^2]\xi\rangle} \leq \frac{C_t}{\sqrt{MN}} \langle\xi',\numop\xi'\rangle^{1/2} \langle\xi,\numop\xi\rangle^{1/2}. 
\]

\paragraph{Estimate of the quadratic terms.}

We calculate 
\[
	[G_2,f_{M}^2] = \biggl(\iint K_2(t,x,y) a^\ast_x a^\ast_y \diff x \diff y\biggr) \tilde{\omega}_2,
\]
where
\[
	\tilde{\omega}_2 = \frac{\sqrt{(N-\numop)(N-\numop-1)}}{N-1} \biggl(f^2\biggl(\frac{\numop}{M}\biggr) - f^2\biggl(\frac{\numop+2}{M}\biggr)\biggr).
\]
For $N\geq2$, we have $\tilde{\omega}_2 = \tilde{\omega}_2\ind^{\leq M}$ and 
\[
	\abs{\tilde{\omega}_2}^2 \leq \frac{C}{M^2}\ind\biggl(\frac{M}{4}-2 < \numop \leq \frac{M}{2}\biggr)\leq \frac{C\numop}{M^3}. 
\]
Using the bound
\begin{equation}\label{eq:absR4bound}
	\ind^{\leq M}\sq_2(\abs{w_N})\ind^{\leq M}\leq C\beta\log(N) M\sq(1-\Delta),
\end{equation}
which follows from Lemma~\ref{lem:MT}, we obtain
\begin{align*}
\abs{\langle\xi',[G_2,f_{M}^2]\xi\rangle} 
	&\leq \iint \langle\ind^{\leq M}\xi',a_x^\ast a_y^\ast\abs{w_N(x-y)}a_ya_x\ind^{\leq M}\xi'\rangle^{1/2}\times \\
	&\quad\times \langle\xi,\abs{u_N(t,x)}^2\abs{w_N(x-y)}\abs{u_N(t,y)}^2\abs{\tilde{\omega}_2}^2\xi\rangle^{1/2} \diff x\diff y \\
	&\leq \eta \langle\ind^{\leq M}\xi',\sq_2(\abs{w_N})\ind^{\leq M}\xi'\rangle \\
	&\quad + \eta^{-1}C\norm{u_N(t)}_{L^\infty}^2\norm{w_N}_{L^1}\norm{u_N(t)}_{L^2}^2\langle\xi,\abs{\tilde{\omega}_2}^2\xi\rangle \\
	&\leq \eta C\beta\log(N) M\langle\xi',\sq(1-\Delta)\xi'\rangle + \eta^{-1} C_{t}M^{-3} \langle\xi,\numop\xi\rangle \\
	&\leq \frac{C_{t} (\beta\log(N))^{1/2}}{M} \langle\xi',\sq(1-\Delta)\xi'\rangle^{1/2} \langle\xi,\numop\xi\rangle^{1/2},
\end{align*}
where we optimized over $\eta>0$ in the last step. Here, we could omit the projection $q(t)$ since $\xi',\xi\in\mathcal{F}_\perp(t)$. 
In the same way, we get 
\[
	\abs{\langle\xi',[G_2^\ast,f_{M}^2]\xi\rangle} \leq \frac{C_{t} (\beta\log(N))^{1/2}}{M} \langle\xi',\numop\xi'\rangle^{1/2} \langle\xi,\sq(1-\Delta)\xi\rangle^{1/2}. 
\]

\paragraph{Estimate of the cubic terms.}

Recall
\[
R_{3}=2\frac{\sqrt{N-\numop}}{N-1}\iiiint( \id \otimes q(t) w_N q(t)\otimes q(t))(x,y;x',y')\overline{u_N(t,x)} a^\ast_y a_{x'} a_{y'} \diff x \diff y \diff x' \diff y'. 
\]
We calculate
\[
	[R_3,f_{M}^2] = \tilde{\omega}_3 \biggl(\iiiint( \id \otimes q(t) w_N q(t)\otimes q(t))(x,y;x',y')\overline{u_N(t,x)} a^\ast_y a_{x'} a_{y'} \diff x \diff y \diff x' \diff y'\biggr),
\]
where
\[
	\tilde{\omega}_3 = 2\frac{\sqrt{N-\numop}}{N-1} \biggl(f^2\biggl(\frac{\numop+1}{M}\biggr) - f^2\biggl(\frac{\numop}{M}\biggr)\biggr). 
\]
For $N\geq 2$, we have $\tilde{\omega}_3=\tilde{\omega}_3\ind^{\leq M}$ and 
\[
	\abs{\tilde{\omega}_3}^2 \leq \frac{C}{M^2N}\ind^{\leq M}. 
\]
Since $\xi',\xi\in\mathcal{F}_\perp(t)$, we can omit the projection $q(t)$ and write
\[
	\langle\xi',[R_3,f_{M}^2]\xi\rangle = \iint\langle\xi',\tilde{\omega}_3w_N(x-y)\overline{u_N(t,x)} a^\ast_ya_ya_x\xi\rangle \diff x\diff y. 
\]
Using \eqref{eq:absR4bound} again, we obtain
\begin{align*}
	\abs{\langle\xi',[R_3,f_{M}^2]\xi\rangle} 
	&\leq \iint \langle\xi',\abs{w_N(x-y)}\abs{u_N(t,x)}^2\abs{\tilde{\omega}_3}a^\ast_ya_y\abs{\tilde{\omega}_3}\xi'\rangle^{1/2}\times \\
	&\quad\times \langle\ind^{\leq M}\xi,a^\ast_xa^\ast_y\abs{w_N(x-y)}a_ya_x\ind^{\leq M}\xi\rangle^{1/2} \diff x\diff y\\
	&\leq \frac{C\eta}{M^2N} \norm{w_N}_{L^1}\norm{u_N(t)}_{L^\infty}^2\langle\xi',\numop\xi'\rangle \\
	&\quad + \eta^{-1}\langle\ind^{\leq M}\xi,\sq_2(\abs{w_N})\ind^{\leq M}\xi\rangle \\
	&\leq \frac{C_{t}\eta}{M^2N}\langle\xi',\numop\xi'\rangle + \eta^{-1}C\beta\log(N) M\langle\xi,\sq(1-\Delta)\xi\rangle \\
	&\leq \frac{C_{t}(\beta\log(N))^{1/2}}{\sqrt{MN}}\langle\xi',\numop\xi'\rangle^{1/2}\langle\xi,\sq(1-\Delta)\xi\rangle^{1/2},
\end{align*}
where we optimized over $\eta>0$ in the last step. Similarly, we get
\[
	\abs{\langle\xi',[R_3^\ast,f_{M}^2]\xi\rangle} \leq \frac{C_{t}(\beta\log(N))^{1/2}}{\sqrt{MN}}\langle\xi',\sq(1-\Delta)\xi'\rangle^{1/2}\langle\xi,\numop\xi\rangle^{1/2}. \qedhere
\]
\end{proof}

\begin{proof}[Proof of Lemma~\ref{lem:tt}]
The proof is similar to that of \cite[Lemma~12]{NamNap-19} and \cite[Proposition~9]{BosDieNam-26}, and we only sketch the main steps if $w_N$ satisfies \eqref{eq:exp} for $0<\kappa<1$. 

To prove \eqref{eq:exp:f1}, we use
\[
	\norm{\Phi_N(t)-\Phi_{N,M_1}(t)}^2 \leq 2\Bigl(1-\Re\langle\Phi_N(t),\Phi_{N,M_1}(t)\rangle\Bigr). 
\]
Define $f_{M_1}$ as in \eqref{eq:smoothcut} and similarly the operator $g_{M_1}$, where $g^2=1-f^2$. We write
\[
	\langle\Phi_N(t),\Phi_{N,M_1}(t)\rangle = \langle\Phi_N(t),f_{M_1}^2\Phi_{N,M_1}(t)\rangle + \langle\Phi_N(t),g_{M_1}^2\Phi_{N,M_1}(t)\rangle.
\]
Using the bound
\[
	g_{M_1}^2 \leq \ind^{>M_1/4} \leq \frac{4\numop}{M_1}
\] 
and Lemma~\ref{lem:truncatedkinetic}, we estimate
\begin{equation}\label{eq:mpsbound}
\begin{aligned}
	\abs{\langle\Phi_N(t),g_{M_1}^2\Phi_{N,M_1}(t)\rangle} & \leq \norm{g_{M_1}\Phi_N(t)}\cdot\norm{g_{M_1}\Phi_{N,M_1}(t)} \\
	&\leq C \langle \Phi_{N,M_1}(t),(\numop/M_1)\Phi_{N,M_1}(t)\rangle^{1/2} \leq C_{t}N^{\kappa/2}M_1^{-1/2}. 
\end{aligned}
\end{equation}
For the other term, we compute
\[
	\frac{\diff}{\diff t}\langle\Phi_N(t),f_{M_1}^2\Phi_{N,M_1}(t)\rangle = \langle\Phi_N(t),\ii[\mathcal{G}_N(t),f_{M_1}^2]\Phi_{N,M_1}(t)\rangle. 
\]
To estimate this term, we will use Lemma~\ref{lem:average}. Together with Lemmas~\ref{lem:apkinetic} and \ref{lem:truncatedkinetic} and the fact that $v\geq0$, we obtain
\begin{align*}
\MoveEqLeft[3] \abs[\bigg]{\frac{\diff}{\diff t}\langle\Phi_N(t),f_{M_1}^2\Phi_{N,M_1}(t)\rangle} = \abs[\big]{\langle\Phi_N(t),\ii[\mathcal{G}_N(t),f_{M_1}^2]\Phi_{N,M_1}(t)\rangle} \\
&\leq C_{t}\frac{N^{\kappa/2}}{M_1} \langle\Phi_N(t),\sq(1-\Delta)\Phi_N(t)\rangle^{1/2}\langle\Phi_{N,M_1}(t),\sq(1-\Delta)\Phi_{N,M_1}(t)\rangle^{1/2} \\
&\leq C_{t}\frac{N^{\kappa + 1/2}}{M_1}. 
\end{align*}
Therefore, we have
\[
	\Re\langle\Phi_N(t),f_{M_1}^2\Phi_{N,M_1}(t)\rangle
	\geq \Re\langle\Phi_N(0),f_{M_1}^2\Phi_{N,M_1}(0)\rangle - C_t\frac{N^{\kappa + 1/2}}{M_1}.  
\]
Since $\Phi_N(0)=\ind^{\leq N}\Phi_0$, $\Phi_{N,M_1}(0)=\Phi_0$, and $\langle\Phi_0,\numop\Phi_0\rangle\leq C$ by \eqref{eq:thm:norm:initial}, we have
\begin{align*}
	\langle\Phi_N(0),f_{M_1}^2\Phi_{N,M_1}(0)\rangle &= \langle\Phi_0,f_{M_1}^2\Phi_0\rangle = 1-\langle\Phi_0,g_{M_1}^2\Phi_0\rangle \\
	&\geq 1-C\langle\Phi_0,(\numop/M_1)\Phi_0\rangle \geq 1-CM_1^{-1}. 
\end{align*}
Thus, we obtain 
\[
	\Re\langle\Phi_N(t),f_{M_1}^2\Phi_{N,M_1}(t)\rangle
	\geq 1-C_{t}\frac{N^{\kappa + 1/2}}{M_1}. 
\]
Together with \eqref{eq:mpsbound}, we have
\[
	\Re\langle\Phi_N(t),\Phi_{N,M_1}(t)\rangle \geq 1-C_{t}\frac{N^{\kappa + 1/2}}{M_1}-C_{t}\sqrt{\frac{N^{\kappa}}{M_1}}. 
\]
Hence, after simplifying the bound, we get
\[
	\norm{\Phi_N(t)-\Phi_{N,M_1}(t)}^2 \leq 2\Bigl(1-\Re\langle\Phi_N(t),\Phi_{N,M_1}(t)\rangle\Bigr) \leq C_{t}\frac{N^{\kappa + 1/2}}{M_1},
\]
which concludes the proof of \eqref{eq:exp:f1}. 

The above proof can easily be adjusted to show \eqref{eq:exp:12}. 
\end{proof}

Finally, we compare the truncated dynamics $\Phi_{N,M}$ with the Bogoliubov dynamics $\Phi$. 

\begin{lem}\label{lem:tbog}
Let $M=N^{1-\delta}$, $\delta\in(0,1)$. If $w_N$ satisfies \eqref{eq:poly} for $\beta>0$, we have for $N$ large enough
\begin{equation}\label{eq:poly:2bog}
	\norm{\Phi_{N,M}(t)-\Phi(t)}^2 \leq C_{t}\beta\log(N)\sqrt{\frac{M}{N}}.  
\end{equation}
If $w_N$ satisfies \eqref{eq:exp} for $0<\kappa<1/2$, we require additionally that $\delta\in(2\kappa,1)$. Then, we have for $N$ large enough
\begin{equation}\label{eq:exp:2bog}
	\norm{\Phi_{N,M}(t)-\Phi(t)}^2 \leq C_{t}N^{\kappa}\sqrt{\frac{M}{N}}. 
\end{equation}
\end{lem}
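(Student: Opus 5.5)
We compare the two dynamics directly: they start from the same datum $\Phi_0$, and their generators differ only by $\chi_M\mathcal{E}_N(t)\chi_M$. Since $\mathcal{G}_{N,M}(t)$ and $\BH(t)$ are self-adjoint, we have
\[
	\frac{\diff}{\diff t}\norm{\Phi_{N,M}(t)-\Phi(t)}^2 = -2\frac{\diff}{\diff t}\Re\langle\Phi_{N,M}(t),\Phi(t)\rangle = 2\Im\langle\Phi_{N,M}(t),\chi_M\mathcal{E}_N(t)\chi_M\Phi(t)\rangle.
\]
Both vectors lie in $\mathcal{F}_\perp(t)$. Moreover, $\chi_M$ is supported in $\{\numop\leq M\}$, so $\chi_M=\ind^{\leq M}\chi_M$. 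This lets us apply Lemma~\ref{lem:error-bounds} with $m=M$ to $\xi'=\chi_M\Phi_{N,M}(t)$ and $\xi=\chi_M\Phi(t)$. Because $\chi_M$ commutes with $\numop$ and $\sq(1-\Delta)$ and satisfies $0\leq\chi_M\leq1$, the norms of these vectors are bounded by the corresponding norms of $\Phi_{N,M}(t)$ and $\Phi(t)$.

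We split $\mathcal{E}_N=(\mathcal{E}_N-R_4)+R_4$. For the first part, \eqref{eq:eb1} gives a bound
\[
	C_t\sqrt{\tfrac{\beta\log(N)M}{N}}\Bigl(\norm{(\numop+1)^{1/2}\Phi_{N,M}}\norm{\sq(1-\Delta)^{1/2}\Phi}+\norm{\sq(1-\Delta)^{1/2}\Phi_{N,M}}\norm{(\numop+1)^{1/2}\Phi}\Bigr).
\]
We control the particle numbers by Lemma~\ref{lem:bogdyn}\ref{item:number:bog} and Lemma~\ref{lem:truncatednumber}, each $\leq C_t$. We control the kinetic energies by Lemma~\ref{lem:bogdyn}\ref{item:kinetic:bog} and Lemma~\ref{lem:truncatedkinetic}, each $\leq C_t\beta\log(N)$, or $C_tN^\kappa$ in the exponential case. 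This contribution is therefore at most $C_t\beta\log(N)\sqrt{M/N}$, which is the claimed rate. In the exponential case it is $C_tN^\kappa\sqrt{M/N}$, and there the requirement $\delta>2\kappa$ is exactly what Lemma~\ref{lem:truncatednumber} needs, while it also implies the condition $\delta>\kappa$ of Lemma~\ref{lem:truncatedkinetic}.

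The main obstacle is $R_4$. The bound \eqref{eq:R4bound} only gives $\frac{C\beta\log(N)M}{N}\cdot(\text{kinetic})^{1/2}(\text{kinetic})^{1/2}\leq C_t(\beta\log N)^2M/N$. Since $\beta\log(N)\sqrt{M/N}\leq1$ is anyway needed for the claim to be nontrivial, we have $(\beta\log N)^2M/N\leq \beta\log(N)\sqrt{M/N}$ whenever $\beta\log(N)\sqrt{M/N}\leq 1$. In the regime $\beta\log(N)\sqrt{M/N}>1$ the claimed bound holds trivially, because $\norm{\Phi_{N,M}-\Phi}^2\leq 4$ (norms are bounded by $1$ up to the unitary evolution of a normalized datum). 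So $R_4$ is also absorbed. In the exponential case the same argument works with $N^\kappa$ in place of $\beta\log N$, and the condition $\delta>2\kappa$ ensures the rate actually decays.

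Integrating in time from $0$, where the difference vanishes, gives $\norm{\Phi_{N,M}(t)-\Phi(t)}^2\leq\int_0^tC_{t'}\beta\log(N)\sqrt{M/N}\,\diff t'$. Using that $C_{t'}$ is continuous, this yields \eqref{eq:poly:2bog}, and \eqref{eq:exp:2bog} follows in the same way. No Gr\"onwall step is needed, since the a priori bounds make the right-hand side independent of the difference.
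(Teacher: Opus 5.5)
Your proposal is correct and is essentially the paper's proof. You differentiate $\|\Phi_{N,M}(t)-\Phi(t)\|^2 = 2-2\Re\langle\Phi_{N,M}(t),\Phi(t)\rangle$, which is equivalent to the paper's differentiation of the overlap since both norms equal $1$, then apply Lemma~\ref{lem:error-bounds} with $m=M$ to $\chi_M\Phi_{N,M}(t)$ and $\chi_M\Phi(t)$, use the a priori bounds of Lemmas~\ref{lem:bogdyn}, \ref{lem:truncatedkinetic}, \ref{lem:truncatednumber}, and integrate in time; the only cosmetic difference is that you absorb the $R_4$ contribution $(\beta\log N)^2M/N$ by a case split, whereas the paper simply uses that $\beta\log(N)\sqrt{M/N}=\beta\log(N)N^{-\delta/2}\leq 1$ (resp.\ $N^{\kappa-\delta/2}\leq 1$, by $\delta>2\kappa$) for $N$ large.
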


\begin{proof}
The proof is similar to that of Lemma~\ref{lem:tt}, and we only sketch the main steps if $w_N$ satisfies \eqref{eq:exp} for $0<\kappa<1/2$. 

To prove \eqref{eq:exp:2bog}, we use
\[
	\norm{\Phi_{N,M}(t)-\Phi(t)}^2 \leq 2\Bigl(1-\Re\langle\Phi_{N,M}(t),\Phi(t)\rangle\Bigr). 
\]
We compute
\[
	\frac{\diff}{\diff t}\langle\Phi_{N,M}(t),\Phi(t)\rangle = \ii\langle\Phi_{N,M}(t),\chi_M\mathcal{E}_N(t)\chi_M\Phi(t)\rangle. 
\]
Using Lemmas~\ref{lem:error-bounds}, \ref{lem:bogdyn}, \ref{lem:truncatedkinetic}, and \ref{lem:truncatednumber}, we have
\begin{align*}
	\MoveEqLeft[3] \abs{\langle\Phi_{N,M}(t),\chi_M\mathcal{E}_N(t)\chi_M\Phi(t)\rangle} \\
	&\leq \frac{CN^\kappa M}{N}\langle\Phi_{N,M}(t),\sq(1-\Delta)\Phi_{N,M}(t)\rangle^{1/2}\langle\Phi(t),\sq(1-\Delta)\Phi(t)\rangle^{1/2} \\
	&\quad + C_t\sqrt{\frac{N^\kappa M}{N}} \Bigl( \langle\Phi_{N,M}(t),\sq(1-\Delta)\Phi_{N,M}(t)\rangle^{1/2}\langle\Phi(t),\numop\Phi(t)\rangle^{1/2} \\
	&\quad + \langle\Phi_{N,M}(t),\numop\Phi_{N,M}(t)\rangle^{1/2}\langle\Phi(t),\sq(1-\Delta)\Phi(t)\rangle^{1/2} \Bigr) \\
	&\leq C_{t}\biggl(\frac{N^{2\kappa}M}{N} + N^{\kappa}\sqrt{\frac{M}{N}}\biggr)
	\leq C_tN^{\kappa}\sqrt{\frac{M}{N}}. 
\end{align*}
Therefore, we obtain
\[
	\norm{\Phi_{N,M}(t)-\Phi(t)}^2 \leq 2\biggl(1 - \Re\langle\Phi_{N,M}(0),\Phi(0)\rangle + C_tN^{\kappa}\sqrt{\frac{M}{N}} \biggr) 
	\leq C_{t}N^{\kappa}\sqrt{\frac{M}{N}},
\]
which concludes the proof of \eqref{eq:exp:2bog}. 
\end{proof}

\section{Conclusion of the proof of the main results}\label{sec:con}

\begin{proof}[Proof of Theorem~\ref{thm:norm}]
Let $M_1=N^{1-\delta_1}$ and $M_2=N^{1-\delta_2}$ with $\delta_1,\delta_2\in(0,1)$, $\delta_1<\delta_2$. 
Recall the definitions \eqref{eq:trafodyn}, \eqref{eq:truncdyn}, and \eqref{eq:Bogoliubov} of $\Phi_N(t)$, $\Phi_{N,M}(t)$, and $\Phi(t)$, respectively. To prove \eqref{eq:normapprox}, it is enough to show
\[
	\norm{\Phi_N(t)-\Phi(t)}^2 \leq C_tN^{-\alpha}
\]
since $U_N(t)\colon\mathfrak{H}^N\to\mathcal{F}_\perp^{\leq N}(t)$ is unitary and we thus have
\[
	\norm{\Psi_N(t)-U_N(t)^\ast\ind^{\leq N}\Phi(t)} = \norm{U_N(t)\Psi_N(t)-\ind^{\leq N}\Phi(t)} \leq \norm{\Phi_N(t)-\Phi(t)}. 
\]
We first treat the case where $w_N$ satisfies \eqref{eq:poly} with $\beta>0$. From Lemmas~\ref{lem:tt} and \ref{lem:tbog}, we obtain
\begin{align*}
	\MoveEqLeft[3] \norm{\Phi_N(t)-\Phi(t)}^2 \\
	&\leq 3\norm{\Phi_N(t)-\Phi_{N,M_1}(t)}^2 + 3\norm{\Phi_{N,M_1}(t)-\Phi_{N,M_2}(t)}^2 + 3\norm{\Phi_{N,M_2}(t)-\Phi(t)}^2 \\
	&\leq C_{t}(\beta\log(N))^{3/2}\biggl(\frac{N^{1/2}}{M_1} + \frac{1}{M_2} + \sqrt{\frac{M_2}{N}}\biggr) \\
	&= C_{t}(\beta\log(N))^{3/2}\bigl( N^{-1/2+\delta_1} + N^{-1+\delta_2} + N^{-\delta_2/2} \bigr)
\end{align*}
for all $t\geq0$. Choosing $\delta_1=1/6$ and $\delta_2=2/3$ shows \eqref{eq:normapprox} for any $0<\alpha<1/3$. 

To prove the case where $w_N$ satisfies \eqref{eq:exp} with $0<\kappa<1/4$, we require additionally $\delta_1>\kappa$ and $\delta_2>2\kappa$. From Lemmas~\ref{lem:tt} and \ref{lem:tbog}, we obtain
\begin{align*}
\norm{\Phi_N(t)-\Phi(t)}^2
	&\leq C_{t}\biggl(\frac{N^{\kappa+1/2}}{M_1} + \frac{N^{3\kappa/2}}{M_2} + N^{\kappa}\sqrt{\frac{M_2}{N}}\biggr) \\
	&= C_{t}\bigl( N^{\kappa-1/2+\delta_1} + N^{3\kappa/2-1+\delta_2} + N^{\kappa-\delta_2/2}\bigr)
\end{align*}
for all $t\geq0$. We choose $\delta_2=(2-\kappa)/3>2\kappa$ and $\delta_1=\kappa+\varepsilon$ with $0<\varepsilon<\delta_2-\kappa$, which concludes the proof of \eqref{eq:normapprox} for $0<\alpha<\min(1/2-2\kappa,1/3-7\kappa/6)$.
\end{proof}

\begin{proof}[Proof of Corollary~\ref{cor:condensate}]
The fact that the norm convergence implies the convergence of density matrices is well known; see e.g.\ \cite[Corollary~2]{LewNamSch-15}. 
Hence, we infer from Theorem~\ref{thm:norm} 
\[
	\lim_{N\to\infty} \Tr\abs{\gamma_{\Psi_N(t)}^{(1)} - \ketbra{u_N(t)}{u_N(t)}} = 0. 
\]
It remains to show 
\[
	\lim_{N\to\infty} \Tr\abs{\ketbra{u_N(t)}{u_N(t)}-\ketbra{\varphi(t)}{\varphi(t)}} = 0,
\]
which follows from $\norm{u_N(t)-\varphi(t)}_{L^2}\to0$ since
\[
	\Tr\abs{\ketbra{u_N(t)}{u_N(t)}-\ketbra{\varphi(t)}{\varphi(t)}} \leq 2\norm{u_N(t)-\varphi(t)}_{L^2}. 
\]
This can easily be proved using the fact that $w_N\ast\abs{\varphi(t)}^2\to a\abs{\varphi(t)}^2$ in $L^2$. 
\end{proof}

\begin{rem}
If $\widehat{w}$ is Lipschitz, $\norm{u_N(t)-\varphi(t)}_{L^2}$ is of order $\mathcal{O}(N^{-\beta})$ if $w_N$ satisfies \eqref{eq:poly} and of order $\mathcal{O}(\ee^{-N^\kappa})$ if $w_N$ satisfies \eqref{eq:exp}. The proof of \cite[Lemma~10]{BosDieNam-26} shows the case $V\equiv0$ and can easily be adapted to the case of some external potential $V$ that satisfies Assumption~\ref{assu:V}. 
\end{rem}

\subsection*{Acknowledgements}
LR thanks Lea Bo{\ss}mann for valuable feedback and discussions. LR also thanks Fran\c{c}ois Visconti for drawing his attention to \cite{JunVis-26}. 

\subsection*{AI declaration}
The main result was proved by the author without the aid of AI tools for all $\beta>0$ and $0<\kappa<2/9$ using a truncated dynamics as introduced in \cite{NamNap-19}. Then, AI tools were used to improve the range of $\kappa$ to $0<\kappa<1/4$: Opus 5.5 suggested using a different truncated dynamics (defined in \eqref{eq:truncdyn}) compared to \cite{NamNap-19} and assisted in the proof of the bounds on the number of particles of the truncated and Bogoliubov dynamics in Lemma~\ref{lem:truncatednumber} and Lemma~\ref{lem:bogdyn}~\ref{item:number:bog}, respectively. This improved Lemma~\ref{lem:tbog} and shortened its proof. AI tools were also used for proofreading. All proofs were written and checked by the author, who takes full responsibility for the results. 

\emergencystretch=1em
\begin{refcontext}[sorting=nyt]
\printbibliography[title={References},heading=bibintoc]
\end{refcontext}

\end{document}